\DeclareMathOperator{\Tr}{Tr}
\newtheorem{theorem}{Theorem}
\newtheorem{lemma}[theorem]{Lemma}
\newtheorem{proposition}[theorem]{Proposition}
\newtheorem{remark}[theorem]{Remark}
\newenvironment{proof}[1][Proof]{\noindent\textbf{#1.} }{\ \rule{0.5em}{0.5em}}
\begin{document}

\title{Quantum natural gradient with thermal-state initialization}
\author{Michele Minervini}
\affiliation{\textit{School of Electrical and Computer Engineering, Cornell University, Ithaca, New York 14850, USA}}
\author{Dhrumil Patel}  
\affiliation{Department of Computer Science, Cornell University, Ithaca, New York 14850, USA}
\author{Mark M. Wilde}
\affiliation{\textit{School of Electrical and Computer Engineering, Cornell University, Ithaca, New York 14850, USA}}

\date{\today}

\begin{abstract}
Parameterized quantum circuits (PQCs) are central to variational quantum algorithms (VQAs), yet their performance is hindered by complex loss landscapes that make their trainability challenging. Quantum natural gradient descent, which leverages the geometry of the parameterized space through quantum generalizations of the Fisher information matrix, offers a promising solution but has been largely limited to pure-state scenarios, with only approximate methods available for mixed-state settings. This paper addresses this question, originally posed in [Stokes \textit{et al}., \textit{Quantum} \textbf{4}, 269 (2020)], by providing exact methods to compute three quantum generalizations of the Fisher information matrix—the Fisher–Bures, Wigner–Yanase, and Kubo–Mori information matrices—for PQCs initialized with thermal states. We prove that these matrix elements can be estimated using quantum algorithms combining the Hadamard test, classical random sampling, and Hamiltonian simulation. By broadening the set of quantum generalizations of Fisher information and realizing their unbiased estimation, our results enable the implementation of quantum natural gradient descent algorithms for mixed-state PQCs, thereby enhancing the flexibility of optimization when using VQAs. Another immediate consequence of our findings is to establish fundamental limitations on the ability to estimate the parameters of a state generated by an unknown PQC, when given sample access to such a state.
\end{abstract}

\maketitle

\tableofcontents

\allowdisplaybreaks

\section{Introduction}

\subsection{Motivation}

Parameterized quantum circuits (PQCs) have emerged as a key area of research in quantum computing due to their flexibility and ease of implementation~\cite{Mitarai2018quantum_circuit_learning,Cerezo2021vqa}. These circuits are foundational for variational quantum algorithms (VQAs), enabling a hybrid computational approach where quantum circuits are iteratively optimized using classical methods to tackle a wide range of problems~\cite{Lavrijsen2020opt_pqc,Sweke2020stoch_grad_desc,Peruzzo2014vqe,McClean2016theore_vqa}. However, the practical implementation of PQCs faces several challenges. Key issues include poor local minima~\cite{Bittel2021vqaNPhard,Anschuetz2022qva_traps}, limited 
expressivity~\cite{tikku2022circuitdepthversusenergy, Holmes2022expressibility_barrenplateaus}, and the widespread occurrence of barren plateaus~\cite{McClean2018barren_plateaus,larocca2024review_barren_plateaus}. These obstacles hinder the efficient training and scalability of VQAs, raising important questions about the long-term viability of PQCs for solving large-scale problems~\cite{cerezo2022challengesQML,zimborás2025myths_quantum_computation_fault}. Addressing these limitations is critical for advancing the field and realizing the potential of quantum computing in practical applications.

One of the most prominent challenges in VQAs is the highly complex loss landscape, which is characterized by flat regions (barren plateaus) and numerous local minima~\cite{You2021many_local_min,Anschuetz2022qva_traps}. This structure makes gradient-based optimization inefficient, leading to slow convergence and increased computational costs~\cite{Bittel2021vqaNPhard,Xiaozhen2022opt_landscape}. The difficulty stems from the strong dependence of gradient evolution on parameterization, which complicates the search for an optimal trajectory from initialization to convergence. Thus, avoiding barren plateaus and local minima traps remains a significant hurdle~\cite{McArdle2019,Wierichs2020nat_grad_avoid_local_min}, highlighting the need for innovative strategies to improve the training and performance of VQAs.

A potential approach to mitigating these issues is natural gradient descent—a metric-aware optimization method that incorporates the geometry of the state space, allowing for more judicious navigation of the loss landscape~\cite{amari1998nat_grad}. Originally introduced in classical machine learning, natural gradient descent replaces the standard Euclidean metric with the Fisher information me\-tric, which captures the geometry of the parameterized space~\cite{amari1998nat_grad}. By ensuring that updates in the parameter space follow the steepest descent direction as defined by the state-space geometry, this method has demonstrated significant practical advantages in classical settings. For example, it has been shown to achieve faster convergence and improved optimization performance in deep learning~\cite{pascanu2014revisitingnaturalgradientdeep}, reinforcement learning~\cite{kakade2002natgrad_policygrad,Peters2005nat_grad_reinforcement_learn}, and training of large-scale neural networks~\cite{Martens2020}. 

These successes have motivated the adaptation of natural gradient descent to the quantum setting~\cite{Stokes2020qng,sbahi2022provablyefficientvariationalgenerative,Koczor2022qng_noisy,sohail2024quantumnaturalstochasticpairwise,patel2024naturalgradientparameterestimation,minervini2025eqbm}, where multiple quantum generalizations of the Fisher information matrix exist, derived from different smooth divergences of quantum states (see also \cite{Liu2019q_Fisher_multi_par_est,Sidhu2020,Jarzyna2020,Meyer2021fisher_info_NISQ,sbahi2022provablyefficientvariationalgenerative,scandi2024quantumfisherinformationdynamical} for various reviews of quantum generalizations of Fisher information). In this context, numerical works have shown that quantum natural gradient outperforms standard optimizers for certain problems~\cite{yamamoto2019naturalgradientvariationalquantum,Stokes2020qng,Wierichs2020nat_grad_avoid_local_min, Koczor2022qng_noisy,meyer2023qnpg}, as it follows optimization paths that align with the quantum state space geometry. Furthermore, a convergence analysis under specific conditions has recently been developed~\cite{sohail2024quantumnaturalstochasticpairwise}.

Despite its advantages, a key challenge in applying quantum natural gradient descent lies in the computational cost of estimating quantum generalizations of the Fisher information matrix for general PQCs~\cite{Meyer2021fisher_info_NISQ,Cerezo2021sub_opt_Fisher,Gacon2021stoch_approx_Fisher,Beckey2022vqa_est_Fisher_info,Straaten2021cost_nat_grad}. Since the matrix elements must be evaluated at each optimization step, this process can be expensive. However, the relative overhead of this computation often becomes negligible as the optimization progresses, particularly when compared to the increasing cost of gradient estimation near convergence~\cite{Straaten2021cost_nat_grad}. Furthermore, various approaches have been introduced to reduce the overhead of quantum natural gradient descent \cite{Gacon2021stoch_approx_Fisher,Fitzek2024optimizing,sohail2024quantumnaturalstochasticpairwise,halla2025estimationquantumfisherinformation,halla2025modifiedconjugatequantumnatural}.

Beyond computational challenges, existing implementations of quantum natural gradient descent for PQCs have been limited to pure states, where the Fubini--Study metric---a measure of distinguishability of pure quantum states based on the geometry of the quantum state space---can be computed and used to determine parameter updates. Extending this approach to mixed states has remained an open problem~\cite{Stokes2020qng}, as it requires the de\-ve\-lop\-ment of new methods to characterize and compute the quantum Fisher information matrix in mixed-state sce\-na\-rios. Some methods have been proposed to approximate the quantum Fisher information matrix in such settings~\cite{Gacon2021stoch_approx_Fisher,Koczor2022qng_noisy,Beckey2022vqa_est_Fisher_info}, relying mainly on the Fisher–Bures metric. However, exact results have been lacking, and the applicability of natural gradient descent in mixed-state scenarios has remained unexplored. 

This limitation is particularly significant because many optimization tasks in quantum computing require or benefit from mixed-state solutions~\cite{Biamonte2017qml,brandao2017quantumspeedupssemidefiniteprogramming,Dallaire_Demers_2018}. For instance, in variational solvers for semi-definite programs (SDPs)~\cite{patel2024vqa_sdp,chen2023qslackslackvariableapproachvariational,liu2025quantumthermodynamicssemidefiniteoptimization}, the optimization variable is itself a density operator, and typical constraints push solutions toward mixed states. Mixed states also arise naturally in quantum machine learning, where they improve robustness to noisy data and act as effective regularizers in supervised tasks~\cite{LaRose2020data_encoding_qml, beer2020qnn}. In such cases, natural gradient descent tailored to mixed states could enable more efficient and accurate optimization by accounting for the geometry of the mixed-state space, which is inherently non-Euclidean. Therefore, developing new methods to characterize and compute quantum generalizations of the Fisher information matrix for mixed-state PQCs would extend the applicability of quantum natural gradient descent. By leveraging its ability to navigate the cost landscape more effectively than standard optimizers, this approach could enhance the robustness of variational quantum algorithms in real-world applications.

\subsection{Main results}

\setlength{\tabcolsep}{5pt} 
\renewcommand{\arraystretch}{1.8}
\begin{table*}
\centering
\begin{tabular}
[c]{|l|l|c|c|}\hline\hline
Quantity & Formula & Theorem & Quantum Circuit\\\hline\hline 
Fisher--Bures & $I^{\operatorname{FB}}_{ij}(\phi) =\left\langle \left[  \left[  \mathcal{U}^\dag_{R_j}\!(H_j) , G \right],\Phi\!\left(\mathcal{U}^\dag_{R_i}\!(H_i)\right) \right]  \right\rangle _{\rho}$ & Theorem~\ref{thm:FB} & Figure~\ref{fig:FB} \\\hline
\multirow{2}{*}{Wigner--Yanase} & $I^{\operatorname{WY}}_{ij}(\phi) =  4  \left\langle \left\{ \mathcal{U}^\dag_{R_i}\!(H_i) , \mathcal{U}^\dag_{R_j}\!(H_j) \right\}\right\rangle_\rho$ 
& \multirow{2}{*}{Theorem~\ref{thm:WY}} & Figure~\ref{fig:WY-1} \\
& $\hspace{1.45cm} - \, 8 \Tr\!\left[  \mathcal{U}^\dag_{R_i}\!(H_i) \sqrt{\rho} \ \mathcal{U}^\dag_{R_j}\!(H_j) \sqrt{\rho} \right] $ & & Figure~\ref{fig:WY-2}\\\hline
Kubo--Mori & $I^{\operatorname{KM}}_{ij}(\phi) = \left\langle \left[ \left[ \mathcal{U}^\dag_{R_j}\!(H_j) , G \right] , \mathcal{U}^\dag_{R_i}\!(H_i) \right] \right\rangle_\rho$ & Theorem~\ref{thm:KM} & Figure~\ref{fig:KM}
\\\hline\hline
\end{tabular}
\caption{Summary of our analytical results for the matrix elements of the Fisher--Bures, Wigner--Yanase, and Kubo--Mori information matrices for parameterized quantum circuits initialized in a thermal state.}
\label{table:FB-WY-KM-results}
\end{table*}

In this paper, we present analytical formulas for three different quantum generalizations of the Fisher information matrix---namely, the Fisher--Bures (Theorem~\ref{thm:FB}), Wigner--Yanase (Theorem~\ref{thm:WY}), and Kubo--Mori (Theorem~\ref{thm:KM}) information matrices---for PQCs initialized with a thermal state. Along with these findings, we also construct quantum algorithms for estimating these information matrices. These results extend beyond existing approaches, which are typically limited to pure states or approximate methods for the mixed-state case~\cite{Stokes2020qng,Meyer2021fisher_info_NISQ,Koczor2022qng_noisy}. We prove that these matrix elements can be estimated  via a combination of the Hadamard test~\cite{Cleve1998}, classical random sampling, and Hamiltonian simulation~\cite{lloyd1996universal,childs2018toward}. Our analytical findings are summarized in Table~\ref{table:FB-WY-KM-results}, where we also highlight their similarity to related expressions for quantum evolution machines, a special case of evolved quantum Boltzmann machines, as presented in \cite[Table~1]{minervini2025eqbm}. 

The estimation of these quantum generalizations of the Fisher information matrix enables three different implementations of the quantum natural gradient descent algorithm for PQCs, allowing each implementation to execute the natural gradient update step on a quantum computer. Additionally, these results establish fundamental limitations on the ability to estimate the parameters of a PQC, when given access to quantum states generated by the circuit, through the multiparameter quantum Cramer--Rao bound.

While our quantum algorithms enable unbiased estimation of the quantum Fisher information matrices, their efficient implementation relies on delicate control of the system. In particular, estimating the Fisher--Bures and Kubo--Mori information matrix elements requires access to the Hamiltonian that prepares the initial thermal state, while the Wigner--Yanase information matrix elements necessitate the ability to prepare a canonical purification of the initial thermal state. We note that there has been significant progress in thermal-state preparation quantum algorithms in recent years~\cite{chen2023q_Gibbs_sampl,chen2023thermalstatepreparation,rajakumar2024gibbssampling,bergamaschi2024gibbs_sampling,chen2024sim_Lindblad,rouzé2024efficientthermalization,bakshi2024hightemperaturegibbsstates,ding2024preparationlowtemperaturegibbs}. 

Despite these requirements, our results mark a significant theoretical advancement, as they provide the first exact expressions and quantum algorithms for estimating quantum generalizations of the Fisher information matrix for PQCs starting from a general mixed quantum state. Furthermore, our results expand the set of available quantum Fisher information matrices beyond the commonly used Fubini–Study metric~\cite{Stokes2020qng,Meyer2021fisher_info_NISQ,Koczor2022qng_noisy}, offering greater flexibility in choosing the most suitable metric for a given optimization problem (see Section~\ref{sec:natural_gradient} for further discussion). Moreover, as shown in \cite[Corollary 8]{minervini2025eqbm}, the Fisher–Bures and Wigner–Yanase information matrices differ by no more than a factor of two in the matrix (Loewner) order, making them essentially interchangeable for natural gradient descent. This allows one to choose the simpler-to-estimate information matrix without compromising the overall optimization trajectory.

\section{Problem setting}

In this section, we establish the definitions and pre\-li\-mi\-na\-ry results that underpin our main results. A key aspect of variational quantum algorithms is the choice of the quantum circuit ansatz, which determines how quantum states are parameterized. The ansatz selection plays a critical role in both the expressivity of the quantum states explored by the algorithm and the optimization behavior (trainability)~\cite{Sim2019ansatz_pqc,Du2020expressivity,Nakaji2021expressibility}.
Various ansatz designs have been proposed for different applications~\cite{farhi2014qaoa,Kandala2017hardware_efficient_ansatz,Mitarai2018quantum_circuit_learning,Grimsley2019,Hadfield2019qaoa,Wiersema2020ham_var_ansatz}, all of which fundamentally involve combinations of rotational gates and, in some cases, fixed gates such as CNOT or SWAP. Here we consider a general layered parameterized ansatz with $N$ qubits and $J$ layers, defined as
\begin{equation}\label{eq:param_circ}
    U(\phi) \coloneqq U_J(\phi_J)\cdot\cdot\cdot U_2(\phi_2) U_1(\phi_1),
\end{equation}
where $\phi \coloneqq (\phi_1, \dots, \phi_J)^{\mathsf{T}}$ is a vector of independent parameters. Each unitary $U_j(\phi_j)$ is composed of a parameterized part generated by a Hermitian operator $H_j$ and an unparameterized unitary $V_j$: 
\begin{equation}
    U_j(\phi_j) \coloneqq e^{-i \phi_j H_j} V_j,\label{eq:U_j_def}
\end{equation}
where the gate generators $\left\{ H_j \right\}_j$ are not necessarily commuting.

Here, we assume that the initial state $\rho$ is a thermal state of the form
\begin{equation}\label{eq:thermal_state}
    \rho \coloneqq \frac{e^{-G}}{Z},
\end{equation}
where $Z \coloneqq \Tr[e^{-G}]$ is the partition function and $G$ is the underlying Hamiltonian, with the temperature factor absorbed into $G$ for simplicity. To estimate the quantum information matrices of PQCs in Section~\ref{sec:infor_matrices}, we further assume that $G$ is accessible alongside $\rho$, meaning we have access to the Hamiltonian underlying the initial state in~\eqref{eq:thermal_state}. While this assumption is non-trivial, it holds for $k$-local or sparse Hamiltonians, where $G$ can be learned efficiently from measurements of the Gibbs state in certain regimes~\cite{Haah2024learning_Ham,gu2024ham_learn,bakshi2024learning_ham}. For example, high-temperature Gibbs states permit Hamiltonian recovery via local marginals~\cite{Haah2024learning_Ham,gu2024ham_learn}, while fault-tolerant methods (e.g., block-encoding~\cite{Gilyen2019q_sing_value_tranf,Martyn2021}  or quantum phase estimation~\cite{bakshi2024learning_ham}) offer theoretical pathways for broader settings.

The initial state in~\eqref{eq:thermal_state} then evolves under the parameterized quantum circuit as
\begin{equation}\label{eq:state}
    \rho(\phi) \coloneqq U(\phi) \rho U^\dagger(\phi).
\end{equation}

Now we introduce some notation to simplify the fol\-lo\-wing results. We define the unitary components occurring after and before the $j$-th gate as
\begin{align}
    U_{L_{j}} &\coloneqq U_J \cdot\cdot\cdot U_{j}, \label{eq:partial_gates_left}\\
    U_{R_j} & \coloneqq U_{j}\cdot\cdot\cdot U_1, \label{eq:partial_gates_right}
\end{align}
where we have omitted the explicit dependence on $\phi$ in $U_{L_{j}}$ and $U_{R_j}$ for simplicity.
Using this notation, we define two quantum channels that play a central role in the subsequent analysis: 
\begin{align}
    \mathcal{U}_{L_j}\!(X) & \coloneqq U_{L_j} X U^\dagger_{L_j},\label{eq:q_channel_left}\\
    \mathcal{U}^\dag_{R_j}\!(X) & \coloneqq U^\dagger_{R_j} X U_{R_j}, \label{eq:q_channel_right}
\end{align}
which describe the conjugation of a generic operator $X$ by the unitary operators $U_{L_{j}}$ and $U_{R_j}$, respectively.

In this work, we focus on the optimization of PQCs; thus, the derivatives with respect to the variational parameters $\{\phi_j\}_j$ are crucial~\cite{Izmaylov2021param_shift_rule_general}. Due to the structure of the parameterized circuit in~\eqref{eq:param_circ}, the gradient of this unitary satisfies
\begin{equation}
    \frac{\partial}{\partial\phi_{j}} U(\phi) = -i U_{L_{j+1}} H_j U_{R_j},
\end{equation}
where we used the notation in~\eqref{eq:partial_gates_left}--\eqref{eq:partial_gates_right}.
Consequently, the gradient of the parameterized state in~\eqref{eq:state} is given by the following lemma:
\begin{lemma} \label{lem:part_der}
    The partial derivative of the parameterized state $\rho(\phi)$, defined in~\eqref{eq:state}, with respect to the variational parameter $\phi_j$ is given by
    \begin{align}\label{eq:grad_pqc}
        \frac{\partial}{\partial\phi_{j}} \rho(\phi) = i \left[ \rho(\phi) , \mathcal{U}_{L_{j+1}}\!(H_j) \right],
    \end{align}
    where $\mathcal{U}_{L_{j+1}}\!$ is the quantum channel defined in~\eqref{eq:q_channel_left}. The full gradient is then the vector whose components are the partial derivatives: $\nabla_\phi \rho(\phi) = \left( \frac{\partial}{\partial\phi_{1}} \rho(\phi), \dots, \frac{\partial}{\partial\phi_{J}} \rho(\phi) \right)$.
\end{lemma}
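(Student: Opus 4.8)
The plan is to differentiate the conjugation $\rho(\phi) = U(\phi)\rho\, U^\dagger(\phi)$ directly using the product rule, then reorganize the two resulting terms into the stated commutator. Since $\rho$ itself carries no $\phi$-dependence, the partial derivative acts only on the two unitary factors:
\begin{align}
\frac{\partial}{\partial\phi_j}\rho(\phi)
= \left(\frac{\partial U(\phi)}{\partial\phi_j}\right)\rho\, U^\dagger(\phi)
+ U(\phi)\,\rho\left(\frac{\partial U^\dagger(\phi)}{\partial\phi_j}\right).
\end{align}

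First I would substitute the gradient of the unitary, which the excerpt already supplies: $\frac{\partial}{\partial\phi_j}U(\phi) = -i\,U_{L_{j+1}} H_j U_{R_j}$. Taking the adjoint of this identity gives $\frac{\partial}{\partial\phi_j}U^\dagger(\phi) = i\,U_{R_j}^\dagger H_j U_{L_{j+1}}^\dagger$, using that $H_j$ is Hermitian. The next step is to recognize that $U(\phi)$ factors through the left/right split as $U(\phi) = U_{L_{j+1}} U_{R_j}$ (reading off the definitions in~\eqref{eq:partial_gates_left}--\eqref{eq:partial_gates_right}, noting the $j$-th gate sits at the boundary and the indexing must be checked carefully). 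Substituting these into the two terms, the factors $U_{R_j}\rho\, U_{R_j}^\dagger$ and $U_{R_j} U_{R_j}^\dagger$ should telescope so that each term reassembles into $U_{L_{j+1}} H_j U_{L_{j+1}}^\dagger$ acting against $\rho(\phi)$.

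Concretely, I expect the first term to become $-i\,\mathcal{U}_{L_{j+1}}(H_j)\,\rho(\phi)$ and the second to become $+i\,\rho(\phi)\,\mathcal{U}_{L_{j+1}}(H_j)$, where $\mathcal{U}_{L_{j+1}}(H_j) = U_{L_{j+1}} H_j U_{L_{j+1}}^\dagger$ is the channel from~\eqref{eq:q_channel_left}. Collecting these yields
\begin{align}
\frac{\partial}{\partial\phi_j}\rho(\phi)
= i\left(\rho(\phi)\,\mathcal{U}_{L_{j+1}}(H_j) - \mathcal{U}_{L_{j+1}}(H_j)\,\rho(\phi)\right)
= i\left[\rho(\phi),\mathcal{U}_{L_{j+1}}(H_j)\right],
\end{align}
which is exactly~\eqref{eq:grad_pqc}.

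**The main obstacle** is bookkeeping with the index boundaries rather than any genuine mathematical difficulty. I would need to verify that the generator $H_j$ is correctly flanked by $U_{L_{j+1}}$ on the left and $U_{R_j}$ on the right, and that the product $U_{L_{j+1}} U_{R_j}$ indeed reconstitutes the full circuit $U(\phi)$ without double-counting or omitting the $j$-th gate. The edge cases $j=J$ (where $U_{L_{J+1}}$ should be interpreted as the identity) and $j=1$ warrant a quick sanity check to confirm the conventions in~\eqref{eq:partial_gates_left}--\eqref{eq:partial_gates_right} extend consistently. Once the factorization $U(\phi) = U_{L_{j+1}} U_{R_j}$ is confirmed, the cancellation of the $U_{R_j}$ factors against $\rho$ is immediate and the commutator emerges directly.
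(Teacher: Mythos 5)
Your proposal is correct and follows essentially the same route as the paper's proof in Appendix~\ref{app:grad_pqc}: product rule on $U(\phi)\rho\,U^\dagger(\phi)$, substitution of $\partial_j U(\phi) = -i\,U_{L_{j+1}}H_j U_{R_j}$ and its adjoint, and telescoping via $U(\phi)=U_{L_{j+1}}U_{R_j}$ (the paper inserts $U^\dagger_{L_{j+1}}U_{L_{j+1}}=I$ to the same effect) to reassemble the commutator $i[\rho(\phi),\mathcal{U}_{L_{j+1}}(H_j)]$. The index bookkeeping you flag works out exactly as you anticipate.
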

\begin{proof}
    The result follows from the structure of the parameterized circuit in~\eqref{eq:param_circ} and the chain rule. For a detailed derivation, see Appendix~\ref{app:grad_pqc}.
\end{proof}
\medskip

Before presenting the main contributions of this paper—namely, analytical expressions and quantum circuits for estimating the matrix elements of the Fisher--Bures (Section~\ref{sec:FB}), Wigner--Yanase (Section~\ref{sec:WY}), and Kubo--Mori (Section~\ref{sec:KM}) information matrices for PQCs (summarized in Table~\ref{table:FB-WY-KM-results})—we establish some additional notation and preliminary results that serve as the foundation for the following developments. We express an eigendecomposition of the initial state  $\rho$ as
\begin{equation}
    \rho = \sum_k \lambda_k |\tilde{k}\rangle\!\langle \tilde{k}|,
\end{equation} 
which, recalling that it is defined as a thermal state in~\eqref{eq:thermal_state}, also takes the form
\begin{equation}
    \rho = \frac{1}{Z}\sum_k e^{-\mu_k} |\tilde{k}\rangle\!\langle \tilde{k}|, 
\end{equation}
where we have used a spectral decomposition of the Hamiltonian $\smash{G=\sum_k \mu_k |\tilde{k}\rangle\!\langle \tilde{k}|}$.
An eigendecomposition of the parameterized state $\rho(\phi)$, defined in~\eqref{eq:state}, is then given by 
\begin{equation}\label{eq:spcetr_decomp}
    \rho (\phi)=\sum_{k}\lambda_{k}|k\rangle\!\langle k|,
\end{equation} 
where $|k\rangle = U(\phi) |\tilde{k}\rangle$. For simplicity, we omit the explicit dependence of $\ket{k}$ on the parameter vector $\phi$. 
Finally, we present a result that will be instrumental for evaluating the information matrix elements of PQCs: 
\begin{lemma}\label{lem:matr_part_der}
    Using the shorthand $\partial_{j}\equiv\frac{\partial}{\partial\phi_{j}}$, the matrix elements of the gradient of the parameterized state $\rho(\phi)$ defined in~\eqref{eq:state} are given by
    \begin{align}
    \langle k| \partial_{j}\rho(\phi)|\ell\rangle & = i \left( \lambda_k - \lambda_\ell \right) \langle k| \mathcal{U}_{L_{j+1}} \!(H_j) |\ell\rangle\\
    & = i \left( \lambda_k - \lambda_\ell \right) \langle \tilde{k}| \mathcal{U}^\dag_{R_j}\!(H_j)  |\tilde{\ell}\rangle, \label{eq:last_aux_info}
    \end{align}
    where $\mathcal{U}_{L_{j+1}}\!$ and $\mathcal{U}^\dag_{R_j}\!$ are the quantum channels defined in~\eqref{eq:q_channel_left} and~\eqref{eq:q_channel_right}, respectively.
\end{lemma}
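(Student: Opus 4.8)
The plan is to prove Lemma~\ref{lem:matr_part_der} by directly computing the matrix element $\langle k| \partial_j \rho(\phi)|\ell\rangle$ using the already-established expression for the gradient from Lemma~\ref{lem:part_der}, and then converting between the two equivalent forms using the structure of the eigenbasis. First I would substitute the commutator formula $\partial_j \rho(\phi) = i[\rho(\phi), \mathcal{U}_{L_{j+1}}\!(H_j)]$ into the matrix element and expand the commutator, obtaining
\begin{equation}
    \langle k| \partial_j \rho(\phi)|\ell\rangle = i\left( \langle k| \rho(\phi) \mathcal{U}_{L_{j+1}}\!(H_j) |\ell\rangle - \langle k| \mathcal{U}_{L_{j+1}}\!(H_j)\rho(\phi) |\ell\rangle \right).
\end{equation}
Since $\{|k\rangle\}$ is the eigenbasis of $\rho(\phi)$ with $\rho(\phi)|k\rangle = \lambda_k |k\rangle$, acting with $\rho(\phi)$ to the left on $\langle k|$ and to the right on $|\ell\rangle$ turns each term into a scalar multiple of $\langle k| \mathcal{U}_{L_{j+1}}\!(H_j) |\ell\rangle$, giving the factor $(\lambda_k - \lambda_\ell)$ and establishing the first equality.

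For the second equality, the key step is to rewrite the matrix element in terms of the initial eigenbasis $\{|\tilde{k}\rangle\}$ and the right-conjugation channel $\mathcal{U}^\dag_{R_j}$. I would use the relation $|k\rangle = U(\phi)|\tilde{k}\rangle$ from~\eqref{eq:spcetr_decomp} to write $\langle k| \mathcal{U}_{L_{j+1}}\!(H_j) |\ell\rangle = \langle \tilde{k}| U^\dagger(\phi)\, U_{L_{j+1}} H_j U^\dagger_{L_{j+1}}\, U(\phi) |\tilde{\ell}\rangle$. The plan is then to simplify $U^\dagger(\phi) U_{L_{j+1}}$ and $U^\dagger_{L_{j+1}} U(\phi)$ using the factorization $U(\phi) = U_{L_{j+1}} U_{R_j}$, which follows directly from the definitions in~\eqref{eq:partial_gates_left}--\eqref{eq:partial_gates_right} since $U_{L_{j+1}} = U_J \cdots U_{j+1}$ and $U_{R_j} = U_j \cdots U_1$ together reconstruct the full circuit. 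This collapses the product to $U^\dagger_{R_j} H_j U_{R_j} = \mathcal{U}^\dag_{R_j}\!(H_j)$, yielding the desired expression.

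The main obstacle, though it is more bookkeeping than conceptual, is verifying the index alignment in the factorization $U(\phi) = U_{L_{j+1}} U_{R_j}$ and ensuring the $j{+}1$ versus $j$ subscripts are handled consistently—in particular confirming that $U_{L_{j+1}}$ contains precisely the gates $U_J$ through $U_{j+1}$ while $U_{R_j}$ contains $U_j$ through $U_1$, so that no gate is double-counted or omitted and the generator $H_j$ sits correctly between them. Once this factorization is confirmed, the cancellation $U^\dagger(\phi) U_{L_{j+1}} = U^\dagger_{R_j} U^\dagger_{L_{j+1}} U_{L_{j+1}} = U^\dagger_{R_j}$ is immediate, and the second equality follows cleanly. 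The whole argument is short and relies only on the spectral property of $\rho(\phi)$ and the multiplicative structure of the layered ansatz, so I would not expect any genuine difficulty beyond this careful tracking of indices.
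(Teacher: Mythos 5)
Your proposal is correct and follows essentially the same route as the paper's own proof: substitute the commutator form of $\partial_j\rho(\phi)$ from Lemma~\ref{lem:part_der}, use the eigenvalue relation $\rho(\phi)|k\rangle=\lambda_k|k\rangle$ to extract the factor $(\lambda_k-\lambda_\ell)$, and then pass to the tilde basis via $|k\rangle=U(\phi)|\tilde k\rangle$ together with the factorization $U(\phi)=U_{L_{j+1}}U_{R_j}$ to collapse the conjugation to $\mathcal{U}^\dag_{R_j}\!(H_j)$. The index bookkeeping you flag does check out exactly as you describe, so there is no gap.
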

\begin{proof}
    From Lemma~\ref{lem:part_der}, consider the following:
    \begin{align}
        \langle & k| \partial_{j}\rho(\phi)|\ell\rangle \notag\\
        & =  \langle k| i \left[ \rho(\phi) , \mathcal{U}_{L_{j+1}} \right] |\ell\rangle \\
        & = i \langle k|  \rho(\phi) \mathcal{U}_{L_{j+1}} |\ell\rangle  - i \langle k| \mathcal{U}_{L_{j+1}} \rho(\phi)|\ell\rangle\\
        & = i \left( \lambda_k - \lambda_\ell \right) \langle k| \mathcal{U}_{L_{j+1}} \!(H_j) |\ell\rangle\\
        & = i \left( \lambda_k - \lambda_\ell \right) \langle \tilde{k}| U^\dagger(\phi) U_{L_{j+1}} H_j U^\dagger_{L_{j+1}} U(\phi)|\tilde{\ell}\rangle\\
        & = i \left( \lambda_k - \lambda_\ell \right) \langle \tilde{k}| U^\dagger_{R_{j}} H_j U_{R_{j}} |\tilde{\ell}\rangle\\
        & = i \left( \lambda_k - \lambda_\ell \right) \langle \tilde{k}| \mathcal{U}^\dag_{R_j}\!(H_j)  |\tilde{\ell}\rangle.
    \end{align}    
    This completes the proof.
\end{proof}
\medskip

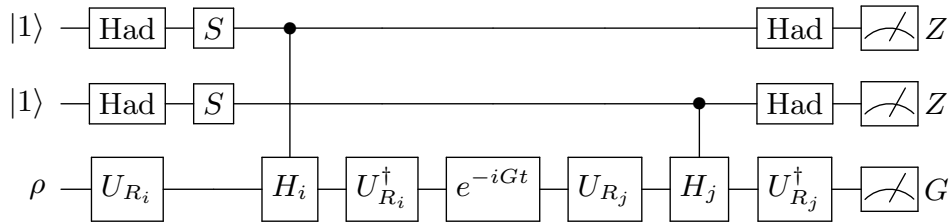
\begin{figure*}
    \centering
    \centering
    \scalebox{1.29}{
    \Qcircuit @C=0.9em @R=1.0em {
        \lstick{\ket{1}} & \gate{\operatorname{Had}} & \gate{S} & \ctrl{2} & \qw & \qw & \qw & \qw & \gate{\operatorname{Had}} & \meter &\rstick{\hspace{-1.2em}Z} \\
        \lstick{\ket{1}} & \gate{\operatorname{Had}} & \gate{S} & \qw & \qw & \qw & \qw & \ctrl{1} & \gate{\operatorname{Had}} & \meter & \rstick{\hspace{-1.2em}Z}\\
        \lstick{\rho} & \gate{\vphantom{U^\dagger_{R_j}}U_{R_i}} & \qw & \gate{\vphantom{U^\dagger_{R_j}}H_{i}} & \gate{\vphantom{U^\dagger_{R_j}}U^\dagger_{R_i}} & \gate{\vphantom{U^\dagger_{R_j}}e^{-iGt}} & \gate{\vphantom{U^\dagger_{R_j}}U_{R_j}} & \gate{\vphantom{U^\dagger_{R_j}}H_{j}} & \gate{\vphantom{U^\dagger_{R_j}}U^\dagger_{R_j}} & \meter & \rstick{\hspace{-1.2em}G}
    }
    }
    \caption{Quantum circuit that realizes an unbiased estimate of $\frac{1}{4} \left\langle \left[  \left[  \mathcal{U}^\dag_{R_j}\!(H_j) , G \right],\Phi\!\left(\mathcal{U}^\dag_{R_i}\!(H_i)\right) \right]  \right\rangle _{\rho}$, which corresponds to the Fisher--Bures information matrix elements up to a factor of $\frac{1}{4}$. For each run of the circuit, the time $t$ is sampled independently at random from the probability density $p(t)$ in~\eqref{eq:high-peak-tent-density}. For details of the algorithm, see Appendix~\ref{app:FB_algo}.}
    \label{fig:FB}
\end{figure*}

\section{Information matrices for PQCs}\label{sec:infor_matrices}
A quantum generalization of Fisher information is a measure of how much a parametrized quantum state changes under a change of a parameter. As such, it is closely related to smooth divergences, which quantify the di\-stin\-gui\-sha\-bi\-li\-ty of two states. Specifically, for a parameterized, differentiable family of states $\left( \sigma(\gamma) \right)_{\gamma \in \mathbb{R}^L}$, where $\gamma$ is an \mbox{$L$-dimensional} parameter vector with $L\in \mathbb{N}$, the \mbox{$\boldsymbol{D}$-based} Fisher
information matrix is defined as
\begin{equation}
\left[  I^{\boldsymbol{D}}(\gamma)\right]  _{ij}\coloneqq \left.
\frac{\partial^{2}}{\partial\varepsilon_{i}\partial\varepsilon_{j}
}\boldsymbol{D}(\sigma(\gamma)\Vert\sigma(\gamma+\varepsilon))\right\vert
_{\varepsilon=0},\label{eq:D-based-Fisher-matrix}
\end{equation}
where $\boldsymbol{D}$ is a smooth divergence (see \cite[Section~III-A]{minervini2025eqbm} for the definition of smooth divergence that we are using here). If we would like to denote the dependence of the matrix $I^{\boldsymbol{D}}(\gamma)$ on the family $\left(  \sigma(\gamma)\right)_{\gamma\in\mathbb{R}^{L}}$, then we employ the notation
\begin{equation}
I^{\boldsymbol{D}}\!\left(  \gamma;\left(  \sigma(\gamma)\right)  _{\gamma
\in\mathbb{R}^{L}}\right)  \coloneqq I^{\boldsymbol{D}}(\gamma).
\end{equation} 
Reference~\cite[Section~III]{minervini2025eqbm} provides a detailed analysis of the connection between a smooth divergence $\boldsymbol{D}$ and its associated $\boldsymbol{D}$-based Fisher information matrix, along with their properties. For further background on quantum generalizations of Fisher information, see also~\cite{Bengtsson2006,Liu2019q_Fisher_multi_par_est,Sidhu2020,Jarzyna2020,Meyer2021fisher_info_NISQ,sbahi2022provablyefficientvariationalgenerative,scandi2024quantumfisherinformationdynamical}.

In this paper, we focus on three specific divergences: the quantum relative entropy, twice the negative logarithm of the Holevo fidelity~\cite{Kholevo1972}, and twice the negative logarithm of the Uhlmann fidelity~\cite{Uhl76}. These divergences respectively lead to the Kubo–Mori, Wigner–Yanase, and Fisher–Bures information matrices (see~\cite[Definition~6]{minervini2025eqbm} for formal definitions). In what follows, we adopt their explicit expressions, derived in~\cite[Theorem~7]{minervini2025eqbm}, for a general parameterized, differentiable family of states $\left( \sigma(\gamma) \right)_{\gamma \in \mathbb{R}^L}$ (see also~\cite{Sidhu2020,sbahi2022provablyefficientvariationalgenerative}).

\subsection{Fisher--Bures information matrix of PQCs with thermal-state initialization}\label{sec:FB}

The main result of this section, Theorem~\ref{thm:FB}, provides an explicit analytical expression for the Fisher--Bures information matrix elements of PQCs initialized with a thermal state. Additionally, we present a quantum circuit designed to estimate the terms in this expression (see Figure~\ref{fig:FB}). Before stating the theorem, let us recall that the Fisher–Bures information matrix elements are given by~\cite[Theorem~7]{minervini2025eqbm}
\begin{equation}
    I^{\operatorname{FB}}_{ij}(\phi) =\sum_{k,\ell}\frac{2}{\lambda_{k}+\lambda_{\ell}}\langle k|\partial_{i}\rho(\phi)|\ell\rangle\langle\ell|\partial_{j}\rho(\phi)|k\rangle,\label{eq:FB-def}
\end{equation}
where $\rho(\phi)$ is expressed in terms of its spectral decomposition, as defined in~\eqref{eq:spcetr_decomp}.

The following theorem makes use of the quantum channel $\Phi$, defined as
\begin{equation}\label{eq:Phi}
        \Phi(X) \coloneqq \int_\mathbb{R} dt \ p(t) \ e^{-iGt} X e^{iGt},
\end{equation}
where
    \begin{equation}\label{eq:high-peak-tent-density}
        p(t) \coloneqq \frac{2}{\pi} \ln \left | \coth\!{\left( \frac{\pi t}{2} \right)}\right |
    \end{equation}
    is a probability density function known as the \textit{high-peak-tent} density~\cite{patel2024quantumboltzmannmachine}.

\begin{theorem}\label{thm:FB}
    Let $\rho$ be an initial thermal state as defined in~\eqref{eq:thermal_state}. For a general layered parameterized quantum circuit of the form in~\eqref{eq:param_circ}, the Fisher--Bures information matrix elements are given by:
    \begin{align}
    \label{eq:FB}
        I^{\operatorname{FB}}_{ij}(\phi) =\left\langle \left[  \left[  \mathcal{U}^\dag_{R_j}\!(H_j) , G \right],\Phi\!\left(\mathcal{U}^\dag_{R_i}\!(H_i)\right) \right]  \right\rangle _{\rho},
    \end{align}
    where $\hspace{0.03cm}\mathcal{U}^\dagger_{R_j}\!$ denotes the unitary channel defined in~\eqref{eq:q_channel_right} and the quantum channel $\Phi$ is defined in~\eqref{eq:Phi}.
\end{theorem}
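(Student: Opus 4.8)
The plan is to reduce both sides to sums over the eigenbasis $\{|\tilde k\rangle\}$ of $\rho$ (equivalently of $G$) and to match the scalar coefficients pair by pair. First I would substitute the matrix elements from Lemma~\ref{lem:matr_part_der}, in the form~\eqref{eq:last_aux_info}, into the spectral definition~\eqref{eq:FB-def}. Abbreviating $A_i \coloneqq \mathcal{U}^\dag_{R_i}\!(H_i)$ and $A_j \coloneqq \mathcal{U}^\dag_{R_j}\!(H_j)$, the two factors carry prefactors $i(\lambda_k-\lambda_\ell)$ and $i(\lambda_\ell-\lambda_k)$, whose product equals $(\lambda_k-\lambda_\ell)^2$, so the definition collapses to
\[
I^{\operatorname{FB}}_{ij}(\phi) = \sum_{k,\ell} \frac{2(\lambda_k-\lambda_\ell)^2}{\lambda_k+\lambda_\ell}\, \langle \tilde k|A_i|\tilde \ell\rangle \langle \tilde \ell|A_j|\tilde k\rangle .
\]
This is the object I must recover from the right-hand side.

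Next I would expand $\langle [[A_j,G],\Phi(A_i)]\rangle_\rho = \Tr[\rho\,[[A_j,G],\Phi(A_i)]]$ in the same basis, using two elementary matrix-element facts. Because $G=\sum_k \mu_k |\tilde k\rangle\!\langle\tilde k|$, the inner commutator obeys $\langle\tilde k|[A_j,G]|\tilde\ell\rangle = (\mu_\ell-\mu_k)\langle\tilde k|A_j|\tilde\ell\rangle$; and since $\Phi$ acts diagonally in frequency, $\langle\tilde k|\Phi(A_i)|\tilde\ell\rangle = \hat p(\mu_k-\mu_\ell)\langle\tilde k|A_i|\tilde\ell\rangle$, where $\hat p(\omega)\coloneqq \int_\mathbb{R} dt\, p(t)\, e^{-i\omega t}$. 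Writing out the outer commutator, tracing against $\rho=\sum_k \lambda_k|\tilde k\rangle\!\langle\tilde k|$, and relabeling $k\leftrightarrow\ell$ in one of the two resulting sums, I expect the expression to collapse to
\[
\sum_{k,\ell} (\lambda_\ell-\lambda_k)(\mu_k-\mu_\ell)\,\hat p(\mu_k-\mu_\ell)\, \langle\tilde k|A_i|\tilde\ell\rangle\langle\tilde\ell|A_j|\tilde k\rangle .
\]

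It then remains to check the scalar identity $(\lambda_\ell-\lambda_k)(\mu_k-\mu_\ell)\,\hat p(\mu_k-\mu_\ell) = 2(\lambda_k-\lambda_\ell)^2/(\lambda_k+\lambda_\ell)$ for every pair $(k,\ell)$. Using the thermal relation $\lambda_k=e^{-\mu_k}/Z$, a short manipulation gives $(\lambda_k-\lambda_\ell)/(\lambda_k+\lambda_\ell) = -\tanh\!\left((\mu_k-\mu_\ell)/2\right)$, which turns the identity into the Fourier-transform evaluation $\hat p(\omega) = 2\tanh(\omega/2)/\omega$. This is precisely the defining spectral property of the high-peak-tent density~\eqref{eq:high-peak-tent-density}, which I would invoke from~\cite{patel2024quantumboltzmannmachine} rather than re-derive. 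The degenerate case $\mu_k=\mu_\ell$ (hence $\lambda_k=\lambda_\ell$) is harmless, since both sides vanish while $\hat p(0)=1$ stays finite. The main obstacle is therefore not any single step but the consistent sign and index bookkeeping across the two commutators and the trace; the one genuinely analytic ingredient is the Fourier identity for $p(t)$, which I treat as a known lemma.
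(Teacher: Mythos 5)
Your proposal is correct, and it reaches the result by a route that differs in execution from the paper's proof in Appendix~\ref{app:FB-proof}, while resting on the same analytic ingredient. The paper works \emph{forward} from the spectral definition~\eqref{eq:FB-def}: it splits $(\lambda_k-\lambda_\ell)^2=(\lambda_k+\lambda_\ell)^2-4\lambda_k\lambda_\ell$, rewrites the resulting Fermi-type weight via $\tfrac{1}{e^{-x}+1}=\tfrac{x}{4}\tfrac{\tanh(x/2)}{x/2}+\tfrac12$, inserts the Fourier representation of $\tanh(x/2)/(x/2)$ from \cite[Lemma~12]{patel2024quantumboltzmannmachine}, and then reassembles several trace expressions (taking a real part along the way) into the nested commutator. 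You instead work \emph{backward}: you expand $\Tr\!\left[\rho\left[\left[\mathcal{U}^\dag_{R_j}\!(H_j),G\right],\Phi\!\left(\mathcal{U}^\dag_{R_i}\!(H_i)\right)\right]\right]$ in the joint eigenbasis of $\rho$ and $G$, use that $G$ and $\Phi$ act by the scalar multipliers $(\mu_\ell-\mu_k)$ and $\hat p(\mu_k-\mu_\ell)$ on matrix elements, and match the coefficient of each $\langle\tilde k|\mathcal{U}^\dag_{R_i}\!(H_i)|\tilde\ell\rangle\langle\tilde\ell|\mathcal{U}^\dag_{R_j}\!(H_j)|\tilde k\rangle$ against the spectral formula obtained from~\eqref{eq:last_aux_info}; everything then reduces to the single scalar identity $\hat p(\omega)=\tanh(\omega/2)/(\omega/2)$, which is exactly the defining property of the density in~\eqref{eq:high-peak-tent-density}. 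Your bookkeeping checks out (the coefficient on the right-hand side is indeed $(\lambda_\ell-\lambda_k)(\mu_k-\mu_\ell)\hat p(\mu_k-\mu_\ell)$ after the $k\leftrightarrow\ell$ relabeling, and the thermal relation gives $(\lambda_k-\lambda_\ell)/(\lambda_k+\lambda_\ell)=-\tanh((\mu_k-\mu_\ell)/2)$), and the degenerate case $\mu_k=\mu_\ell$ is handled correctly. What your approach buys is a cleaner verification: it avoids the intermediate anticommutator terms and the real-part argument entirely, since the pairwise coefficient matching is exact. What it gives up is that it only \emph{verifies} the stated closed form rather than \emph{deriving} it, so it presupposes knowing the answer; the paper's forward manipulation is what actually produces the double-commutator structure and the role of $\Phi$ in the first place.
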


\begin{proof}
    See Appendix~\ref{app:FB-proof}.
\end{proof}
\medskip

\begin{figure*}
    \centering
    \begin{subfigure}{\textwidth}
        \centering
        \scalebox{1.5}{ 
        \Qcircuit @C=1.0em @R=1.0em {
            \lstick{\ket{0}} & \gate{\operatorname{Had}} & \ctrl{1} & \qw & \gate{\operatorname{Had}} & \meter  & \rstick{\hspace{-1.2em}Z} \\
            \lstick{\rho} & \gate{\vphantom{U^\dagger_{R_j}}U_{R_j}} & \gate{\vphantom{U^\dagger_{R_j}}H_j} & \gate{U^\dagger_{R_j}} & \gate{\vphantom{U^\dagger_{R_j}}U_{R_i}} & \meter & \rstick{\hspace{-1.2em}H_{i}}
        }
        }
        \vspace{5pt}
        \caption{Quantum circuit that realizes an unbiased estimate of $\frac{1}{2}\left\langle \left\{ \mathcal{U}^\dag_{R_i}\!(H_i) , \mathcal{U}^\dag_{R_j}\!(H_j) \right\}\right\rangle_\rho$, which corresponds to the first term of the Wigner--Yanase information matrix elements up to a factor of $\frac{1}{8}$. For details of the algorithm, see Appendix~\ref{app:WY-first_term}.}
        \label{fig:WY-1}
    \end{subfigure}
    \vspace{5pt}

    \begin{subfigure}{\textwidth}
        \centering
        \scalebox{1.5}{ 
        \Qcircuit @C=1.0em @R=1.0em {
            \lstick{} &  \gate{U_{R_i}} & \meter & \rstick{\hspace{-1.2em}H_i} \\
            \lstick{} & \gate{\overline{U}_{R_j}} &   \meter & \rstick{\hspace{-1.2em}\overline{H}_{j}}
            \inputgroupv{1}{2}{0.9em}{1.4em}{\hspace{-0.2em}\ket{\psi}}
        }
        }
        \vspace{5pt}
        \caption{Quantum circuit that realizes an unbiased estimate of $\Tr\!\left[  \mathcal{U}^\dag_{R_i}\!(H_i) \sqrt{\rho} \  \mathcal{U}^\dag_{R_j}\!(H_j) \sqrt{\rho} \right]$, which corresponds to the second term of the Wigner--Yanase information matrix elements up to a factor of $-\frac{1}{8}$. For details of the algorithm, see Appendix~\ref{app:WY-second_term}.}
        \label{fig:WY-2}
    \end{subfigure}
    \caption{Quantum circuits involved in the estimation of $I^{\operatorname{WY}}_{ij}(\phi)$.}
    \label{fig:WY}
\end{figure*}
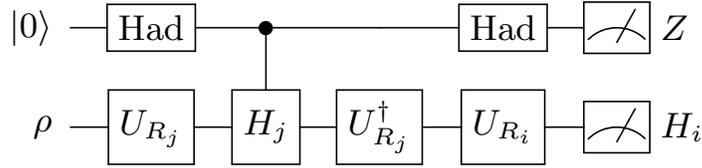
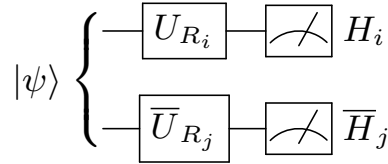

The expression in~\eqref{eq:FB} closely resembles the Fisher--Bures information matrix elements derived in \cite[Theorem~11]{minervini2025eqbm} for quantum evolution machines, a special case of evolved quantum Boltzmann machines. The only distinction lies in the form of the derivative of PQCs, given in~\eqref{eq:grad_pqc}, which differs from that of quantum evolution machines. As a result, the quantum channel $\Psi^\dagger_\phi$ ap\-pea\-ring in the Fisher--Bures information matrix elements of quantum evolution machines is replaced by $\mathcal{U}^\dag_{R_j}$, defined in~\eqref{eq:q_channel_right}. This similarity extends to the Wigner--Yanase (Theorem~\eqref{thm:WY}) and Kubo--Mori (Theorem~\eqref{thm:KM}) information matrices as well. In Appendix~\ref{app:QEM_PQC}, we provide a detailed discussion of the relationship between quantum evolution machines and PQCs.

The quantity in~\eqref{eq:FB} can be estimated on a quantum computer using the circuit shown in Figure~\ref{fig:FB}, with $S$ denoting the phase gate
\begin{equation}
    S \coloneqq
    \begingroup
    \renewcommand{\arraystretch}{1.2} 
    \setlength{\arraycolsep}{6pt}     
    \begin{pmatrix}
    1 & 0 \\
    0 & i
    \end{pmatrix}.
    \endgroup
    \end{equation}
A detailed description of the quantum algorithm is presented in Appendix~\ref{app:FB_algo}.

\subsection{Wigner--Yanase information matrix of PQCs with thermal-state initialization}\label{sec:WY}

Theorem~\ref{thm:WY}, the central result of this section, provides an explicit analytical expression for the Wigner–Yanase information matrix elements of PQCs initialized with a thermal state. Furthermore, we introduce a quantum circuit that plays a role in estimating these matrix elements. Before presenting the theorem, let us recall that the Wigner–Yanase information matrix elements are given by~\cite[Theorem~7]{minervini2025eqbm}
\begin{equation}
    I^{\operatorname{WY}}_{ij}(\phi) =\sum_{k,\ell}\frac{4}{\left(  \sqrt{\lambda_{k}}+\sqrt{\lambda_{\ell}}\right)  ^{2}}\langle k|\partial_{i}\rho(\phi)|\ell\rangle\langle\ell|\partial_{j}\rho(\phi)|k\rangle,\label{eq:WY_def}
\end{equation}
where $\rho(\phi)$ is expressed in terms of its spectral decomposition, as defined in~\eqref{eq:spcetr_decomp}.

To derive our results, we leverage an explicit con\-ne\-ction between the Wigner--Yanase
information of a parameterized family $\left(  \sigma(\gamma)\right)
_{\gamma\in\mathbb{R}^{L}}$ and the Fisher--Bures information of the corresponding pure
parameterized fa\-mi\-ly $\left(  \varphi^{\sigma}(\gamma)\right)  _{\gamma
\in\mathbb{R}^{L}}$. Here, $\varphi^{\sigma}(\gamma)$ denotes a canonical
purification of $\sigma(\gamma)$, defined as
\begin{equation}
\varphi^{\sigma}(\gamma)\coloneqq \left(  \sqrt{\sigma(\gamma)}\otimes
I\right)  \Gamma\left(  \sqrt{\sigma(\gamma)}\otimes I\right)
,\label{eq:canonical-purification-FB-WY}
\end{equation}
where $\Gamma$ is the standard maximally entangled operator:
\begin{equation}\label{eq:max_ent_vec}
\Gamma\coloneqq \sum_{k,\ell}|k\rangle\!\langle\ell|\otimes|k\rangle\!\langle\ell|.
\end{equation}
This relationship is formalized in the following proposition~\cite[Proposition~9]{minervini2025eqbm}:

\begin{proposition}[\cite{minervini2025eqbm}]
\label{prop:FB-WY-canonical-purifications}
For a parameterized, differentiable fa\-mi\-ly $\left(  \sigma(\gamma)\right)
_{\gamma\in\mathbb{R}^{L}}$ of positive-definite states, the following
equality holds:
\begin{equation}
I^{\operatorname{WY}}\!\left(  \gamma;\left(  \sigma(\gamma)\right)
_{\gamma\in\mathbb{R}^{L}}\right)  =I^{\operatorname{FB}}\!\left(
\gamma;\left(  \varphi^{\sigma}(\gamma)\right)  _{\gamma\in\mathbb{R}^{L}
}\right)  ,
\end{equation}
where $\varphi^{\sigma}(\gamma)$ is defined in~\eqref{eq:canonical-purification-FB-WY}.
\end{proposition}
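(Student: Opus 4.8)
The plan is to exploit the fact that each $\varphi^{\sigma}(\gamma)$ is a \emph{pure} state, so that the right-hand side $I^{\operatorname{FB}}(\gamma;(\varphi^{\sigma}(\gamma))_{\gamma\in\mathbb{R}^{L}})$ collapses to the Fubini--Study metric, which I can then evaluate directly in terms of $\sigma$. Concretely, I would fix the operator $\Gamma$ (equivalently the vector $|\Gamma\rangle=\sum_{n}|n\rangle|n\rangle$) in a reference basis that does not depend on $\gamma$, so that the purification vector is $|\psi(\gamma)\rangle=(\sqrt{\sigma(\gamma)}\otimes I)|\Gamma\rangle$ and all parameter dependence sits inside $\sqrt{\sigma(\gamma)}$. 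Specializing the defining formula \eqref{eq:FB-def} to the pure state $\varphi^{\sigma}=|\psi\rangle\!\langle\psi|$ gives the Fubini--Study expression $I^{\operatorname{FB}}_{ij}=4\operatorname{Re}[\langle\partial_{i}\psi|\partial_{j}\psi\rangle-\langle\partial_{i}\psi|\psi\rangle\langle\psi|\partial_{j}\psi\rangle]$; deriving this specialization is a short computation that uses $\langle\psi|\psi\rangle=1$ and $\partial_{j}\rho=|\partial_{j}\psi\rangle\langle\psi|+|\psi\rangle\langle\partial_{j}\psi|$ to kill the eigenvalue-$0$ sectors and combine the surviving coherences.

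The key technical input is the derivative of the matrix square root, and this is where I expect the main difficulty to lie, since $\sqrt{\sigma}$ and $\partial_{i}\sigma$ need not commute. I would obtain it from the Sylvester equation $\sqrt{\sigma}\,(\partial_{i}\sqrt{\sigma})+(\partial_{i}\sqrt{\sigma})\,\sqrt{\sigma}=\partial_{i}\sigma$, which follows by differentiating $\sqrt{\sigma}\sqrt{\sigma}=\sigma$. Reading this in the instantaneous eigenbasis $\sigma=\sum_{k}\lambda_{k}|k\rangle\!\langle k|$ yields the matrix elements $\langle k|\partial_{i}\sqrt{\sigma}|\ell\rangle=\langle k|\partial_{i}\sigma|\ell\rangle/(\sqrt{\lambda_{k}}+\sqrt{\lambda_{\ell}})$. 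The positive-definiteness hypothesis on $\sigma(\gamma)$ is exactly what guarantees that $\sqrt{\sigma}$ is differentiable and that the denominators $\sqrt{\lambda_{k}}+\sqrt{\lambda_{\ell}}$ never vanish.

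With these two pieces in hand, I would reduce both inner products to traces using the maximally entangled identity $\langle\Gamma|(A\otimes I)|\Gamma\rangle=\Tr[A]$, giving $\langle\partial_{i}\psi|\partial_{j}\psi\rangle=\Tr[(\partial_{i}\sqrt{\sigma})(\partial_{j}\sqrt{\sigma})]$ and $\langle\psi|\partial_{j}\psi\rangle=\Tr[\sqrt{\sigma}\,\partial_{j}\sqrt{\sigma}]$. The phase term vanishes for free, since $\Tr[\sqrt{\sigma}\,\partial_{j}\sqrt{\sigma}]=\tfrac{1}{2}\partial_{j}\Tr[\sigma]=\tfrac{1}{2}\partial_{j}(1)=0$, so the second summand of the Fubini--Study metric drops out entirely. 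Substituting the square-root matrix elements into the first trace produces $\sum_{k,\ell}\langle k|\partial_{i}\sigma|\ell\rangle\langle\ell|\partial_{j}\sigma|k\rangle/(\sqrt{\lambda_{k}}+\sqrt{\lambda_{\ell}})^{2}$.

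The last step is bookkeeping: I would check that this sum is already real, which follows by swapping the summation indices $k\leftrightarrow\ell$ and using the Hermiticity of $\partial_{i}\sigma$ and $\partial_{j}\sigma$, so the $\operatorname{Re}$ is redundant. Multiplying by the overall factor $4$ then reproduces exactly the Wigner--Yanase formula \eqref{eq:WY_def}, namely $I^{\operatorname{WY}}_{ij}(\gamma;(\sigma(\gamma))_{\gamma\in\mathbb{R}^{L}})$, which matches the left-hand side and closes the argument. The only genuinely delicate point is the square-root derivative in the second paragraph; everything downstream is linear algebra with the entangled-state trace trick and the normalization constraint $\Tr[\sigma]=1$.
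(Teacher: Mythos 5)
Your proof is correct. Note first that the paper itself does not prove this proposition --- it is imported verbatim from the cited reference as Proposition~9 there --- so there is no in-paper argument to compare against; your write-up supplies a complete, self-contained proof. Each step checks out: the reduction of $I^{\operatorname{FB}}$ for the pure family to the Fubini--Study form $4\operatorname{Re}[\langle\partial_{i}\psi|\partial_{j}\psi\rangle-\langle\partial_{i}\psi|\psi\rangle\langle\psi|\partial_{j}\psi\rangle]$ is exactly the formula the paper itself invokes (from Sidhu--Kok, Eq.~(132)) when it later \emph{applies} this proposition in Appendix~D.2; the Sylvester-equation computation $\langle k|\partial_{i}\sqrt{\sigma}|\ell\rangle=\langle k|\partial_{i}\sigma|\ell\rangle/(\sqrt{\lambda_{k}}+\sqrt{\lambda_{\ell}})$ is the right way to handle the noncommuting square root, and positive definiteness is indeed precisely what keeps the denominators away from zero and makes $\sqrt{\sigma}$ differentiable (the argument is purely algebraic at fixed $\gamma$, so eigenvalue degeneracies cause no trouble); the transpose/trace identity $\langle\Gamma|(A\otimes I)|\Gamma\rangle=\Tr[A]$ gives $\langle\partial_{i}\psi|\partial_{j}\psi\rangle=\Tr[(\partial_{i}\sqrt{\sigma})(\partial_{j}\sqrt{\sigma})]$ using Hermiticity of $\partial_{i}\sqrt{\sigma}$; the phase term vanishes via $\Tr[\sqrt{\sigma}\,\partial_{j}\sqrt{\sigma}]=\tfrac12\partial_{j}\Tr[\sigma]=0$ (cyclicity of trace turns the product into half the derivative of $\sigma$); and the reality check by swapping $k\leftrightarrow\ell$ is valid. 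Inserting the eigenbasis and multiplying by $4$ reproduces \eqref{eq:WY_def} exactly. The one point worth stating explicitly in a polished version is the normalization $\langle\psi|\psi\rangle=\Tr[\sigma]=1$, which you use implicitly when invoking the Fubini--Study formula.
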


To apply Proposition~\ref{prop:FB-WY-canonical-purifications}, we consider the canonical purification of the parameterized state in~\eqref{eq:state}, given by
\begin{equation}
|\psi(\phi)\rangle\coloneqq \left(  \sqrt{\rho(\phi)}\otimes
I\right)  |\Gamma\rangle,\label{eq:purified-state}
\end{equation}
where $\ket{\Gamma}$ is the  maximally entangled vector defined in~\eqref{eq:max_ent_vec}. Notably, $|\psi(\phi)\rangle$ purifies $\rho(\phi)$ since
\begin{equation}
    \rho(\phi) = \operatorname{Tr}_2[|\psi(\phi)\rangle\!\langle\psi(\phi)|].
\end{equation}

To establish our results, we require the partial derivatives of~\eqref{eq:purified-state}, given by the following lemma:
\begin{lemma}\label{thm:WY-partial-derivatives}
The partial derivatives of the purified state defined in~\eqref{eq:purified-state} satisfy
    \begin{align}
    |\partial_{j}\psi(\phi)\rangle & = i \!\left( \!\left[ \sqrt{\rho(\phi)}, \mathcal{U}_{L_{j+1}}\!(H_j) \right] \! \otimes \! I\!\right)  |\Gamma\rangle,\label{eq:part_der_phi_prufied-evolved-QBM}
\end{align}
where we used the shorthand $\partial_{j} \equiv \frac{\partial}{\partial \phi_j}$.
\end{lemma}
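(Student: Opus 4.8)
The plan is to differentiate the canonical purification $|\psi(\phi)\rangle = \left( \sqrt{\rho(\phi)} \otimes I \right)|\Gamma\rangle$ directly, exploiting the fact that $|\Gamma\rangle$ is a fixed vector and the parameter dependence sits entirely in $\sqrt{\rho(\phi)}$, which acts only on the first tensor factor. The derivative therefore passes through the maximally entangled vector, giving
\[
|\partial_j\psi(\phi)\rangle = \left( \partial_j\sqrt{\rho(\phi)} \otimes I \right)|\Gamma\rangle,
\]
so that the entire task reduces to evaluating $\partial_j\sqrt{\rho(\phi)}$.

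The observation that makes this tractable is that $\rho(\phi) = U(\phi)\rho\, U^\dagger(\phi)$ is a unitary conjugation of the fixed thermal state $\rho$. Hence its spectrum $\{\lambda_k\}$ is independent of $\phi$, and only the eigenvectors $|k\rangle = U(\phi)|\tilde{k}\rangle$ rotate. Since the square root is a function of the eigenvalues alone, it commutes with the conjugation:
\[
\sqrt{\rho(\phi)} = \sum_k \sqrt{\lambda_k}\,|k\rangle\!\langle k| = U(\phi)\sqrt{\rho}\,U^\dagger(\phi).
\]
This is the main conceptual point I would emphasize, since differentiating a matrix square root is generically delicate---it requires solving a Sylvester equation---whereas here the constancy of the spectrum sidesteps that difficulty completely.

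Given the representation $\sqrt{\rho(\phi)} = U(\phi)\sqrt{\rho}\,U^\dagger(\phi)$, the remaining computation is identical to the proof of Lemma~\ref{lem:part_der} with $\rho$ replaced by $\sqrt{\rho}$: that proof uses only that $\rho(\phi)$ is a fixed operator conjugated by $U(\phi)$, a property equally shared by $\sqrt{\rho}$. Concretely, I would apply the product rule together with the gradient $\partial_j U(\phi) = -i U_{L_{j+1}} H_j U_{R_j}$ and its adjoint, and then collapse the resulting products using the factorization $U(\phi) = U_{L_{j+1}} U_{R_j}$, i.e.\ $U^\dagger_{L_{j+1}} U(\phi) = U_{R_j}$ and $U^\dagger(\phi) U_{L_{j+1}} = U^\dagger_{R_j}$. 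Writing $A \equiv \mathcal{U}_{L_{j+1}}(H_j) = U_{L_{j+1}} H_j U^\dagger_{L_{j+1}}$, the two terms combine into a commutator,
\[
\partial_j\sqrt{\rho(\phi)} = i\left[ \sqrt{\rho(\phi)}, \mathcal{U}_{L_{j+1}}(H_j) \right].
\]

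Substituting this back into the expression for $|\partial_j\psi(\phi)\rangle$ yields
\[
|\partial_j\psi(\phi)\rangle = i\left( \left[ \sqrt{\rho(\phi)}, \mathcal{U}_{L_{j+1}}(H_j) \right] \otimes I \right)|\Gamma\rangle,
\]
which is the claimed identity. The only real obstacle is the commutation of the square root with the unitary conjugation in the second step; everything after that is routine bookkeeping of the partial-circuit factors $U_{L_{j+1}}$ and $U_{R_j}$, mirroring the argument already given for Lemma~\ref{lem:part_der}.
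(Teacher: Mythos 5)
Your proof is correct and follows essentially the same route as the paper's: pass the derivative through the fixed vector $|\Gamma\rangle$, use $\sqrt{\rho(\phi)} = U(\phi)\sqrt{\rho}\,U^\dagger(\phi)$, and repeat the product-rule computation of Lemma~\ref{lem:part_der} with $\rho$ replaced by $\sqrt{\rho}$. The only difference is that you explicitly justify why the square root commutes with the unitary conjugation (constancy of the spectrum), a step the paper uses silently.
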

\begin{proof}
    See Appendix~\ref{app:gradient_purified_state}.
\end{proof}
\medskip

\begin{figure*}
    \centering
    \scalebox{1.29}{
    \Qcircuit @C=0.9em @R=1.0em {
        \lstick{\ket{1}} & \gate{\operatorname{Had}} & \gate{S} & \ctrl{2} & \qw & \qw & \qw & \gate{\operatorname{Had}} & \meter &\rstick{\hspace{-1.2em}Z} \\
        \lstick{\ket{1}} & \gate{\operatorname{Had}} & \gate{S} & \qw & \qw & \qw & \ctrl{1} & \gate{\operatorname{Had}} & \meter & \rstick{\hspace{-1.2em}Z}\\
        \lstick{\rho} & \gate{\vphantom{U^\dagger_{R_j}}U_{R_i}} & \qw & \gate{\vphantom{U^\dagger_{R_j}}H_{i}} & \gate{U^\dagger_{R_i}} & \gate{\vphantom{U^\dagger_{R_j}}U_{R_j}} & \gate{\vphantom{U^\dagger_{R_j}}H_{j}} & \gate{U^\dagger_{R_j}} & \meter & \rstick{\hspace{-1.2em}G}
    }
    }
    \caption{Quantum circuit that realizes an unbiased estimate of $ \frac{1}{4} \left\langle \left[ \left[ \mathcal{U}^\dag_{R_j}\!(H_j) , G \right] , \mathcal{U}^\dag_{R_i}\!(H_i) \right] \right\rangle_\rho$, which corresponds to the Kubo--Mori information matrix elements up to a factor of $\frac{1}{4}$. For details of the algorithm, see Appendix~\ref{app:KM_algo}.}
    \label{fig:KM}
\end{figure*}
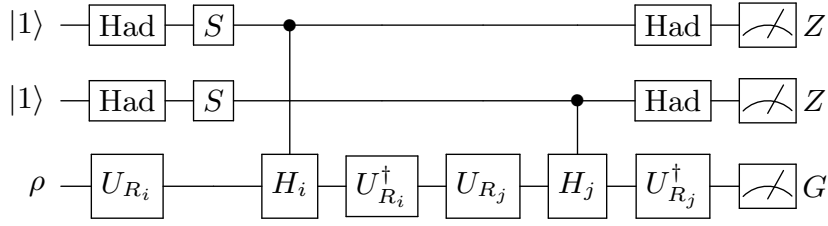

With this result, we can now express the elements of the Wigner–Yanase information matrix as follows:
\begin{theorem}\label{thm:WY}
    Let $\rho$ be an initial thermal state as defined in~\eqref{eq:thermal_state}. For a general layered parameterized quantum circuit of the form in~\eqref{eq:param_circ}, the Wigner--Yanase information matrix elements are given by:
    \begin{align}\label{eq:WY}
    \begin{split}
        I^{\operatorname{WY}}_{ij}(\phi) & = 4 \left\langle \left\{ \mathcal{U}^\dag_{R_i}\!(H_i) , \mathcal{U}^\dag_{R_j}\!(H_j) \right\}\right\rangle_\rho\\
        & \hspace{0.2cm}  - 8 \Tr\!\left[  \mathcal{U}^\dag_{R_i}\!(H_i) \sqrt{\rho} \ \mathcal{U}^\dag_{R_j}\!(H_j) \sqrt{\rho} \right] ,
    \end{split}
    \end{align}
    where $\mathcal{U}^\dagger_{R_j}\!$ denotes the unitary channel defined in~\eqref{eq:q_channel_right}.
\end{theorem}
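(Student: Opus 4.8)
The plan is to reduce the Wigner--Yanase information matrix to a Fisher--Bures information matrix of a \emph{pure}-state family via Proposition~\ref{prop:FB-WY-canonical-purifications}, and then exploit the fact that the Fisher--Bures information of a pure family has a simple closed form. Concretely, applying Proposition~\ref{prop:FB-WY-canonical-purifications} to the family $\rho(\phi)$ gives $I^{\operatorname{WY}}(\phi)=I^{\operatorname{FB}}\!\left(\phi;\left(\varphi^{\rho}(\phi)\right)\right)$, where $\varphi^{\rho}(\phi)=|\psi(\phi)\rangle\!\langle\psi(\phi)|$ is the canonical purification in~\eqref{eq:purified-state}. Thus the task becomes computing the Fisher--Bures information of the single pure vector $|\psi(\phi)\rangle$.

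First I would specialize the general Fisher--Bures formula~\eqref{eq:FB-def} to a rank-one family. Since $\varphi^{\rho}(\phi)$ has eigenvalues $\{1,0,0,\dots\}$ for all $\phi$, only the cross blocks with $\lambda_{k}+\lambda_{\ell}=1$ survive: the $(0,0)$ blocks vanish because $\partial_{j}\varphi^{\rho}$ always carries a factor of $|\psi\rangle$ or $\langle\psi|$, and the $(1,1)$ term vanishes since $\langle\psi|\partial_{j}\varphi^{\rho}|\psi\rangle=\partial_{j}\langle\psi|\psi\rangle=0$. Carrying out the sum over the orthogonal complement of $|\psi\rangle$ with weight $\tfrac{2}{1+0}=2$ and combining the two surviving blocks (which are complex conjugates of one another) yields the standard pure-state expression
\[
I^{\operatorname{FB}}_{ij}=4\operatorname{Re}\!\left[\langle\partial_{i}\psi|\partial_{j}\psi\rangle-\langle\partial_{i}\psi|\psi\rangle\langle\psi|\partial_{j}\psi\rangle\right].
\]

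Next I would substitute the derivative of the purification from Lemma~\ref{thm:WY-partial-derivatives}, writing $|\partial_{j}\psi\rangle=i(A_{j}\otimes I)|\Gamma\rangle$ with $A_{j}\coloneqq\left[\sqrt{\rho(\phi)},\mathcal{U}_{L_{j+1}}\!(H_{j})\right]$ anti-Hermitian. Using the overlap identity $\langle\Gamma|(X\otimes I)|\Gamma\rangle=\Tr[X]$ for the maximally entangled vector, the two inner products collapse to single-system traces, $\langle\partial_{i}\psi|\partial_{j}\psi\rangle=-\Tr[A_{i}A_{j}]$ and $\langle\partial_{i}\psi|\psi\rangle=i\Tr[A_{i}\sqrt{\rho(\phi)}]$; the latter vanishes by cyclicity, removing the second term. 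Expanding $\Tr[A_{i}A_{j}]$ and repeatedly applying cyclicity collapses the four terms into $2\Tr[B_{i}\sqrt{\rho(\phi)}B_{j}\sqrt{\rho(\phi)}]-\langle\{B_{i},B_{j}\}\rangle_{\rho(\phi)}$, where $B_{j}\coloneqq\mathcal{U}_{L_{j+1}}\!(H_{j})$, and a short reality argument (each of these traces equals its own conjugate) lets me drop the $\operatorname{Re}$.

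Finally I would pass from the ``left'' channels and $\rho(\phi)$ to the ``right'' channels and $\rho$. Using $\sqrt{\rho(\phi)}=U(\phi)\sqrt{\rho}\,U^{\dagger}(\phi)$ together with the identity $U^{\dagger}(\phi)\,\mathcal{U}_{L_{j+1}}\!(H_{j})\,U(\phi)=\mathcal{U}^{\dag}_{R_{j}}\!(H_{j})$ already exploited inside the proof of Lemma~\ref{lem:matr_part_der}, conjugating every factor by $U(\phi)$ turns $\Tr[B_{i}\sqrt{\rho(\phi)}B_{j}\sqrt{\rho(\phi)}]$ into $\Tr[\mathcal{U}^{\dag}_{R_{i}}\!(H_{i})\sqrt{\rho}\,\mathcal{U}^{\dag}_{R_{j}}\!(H_{j})\sqrt{\rho}]$ and $\langle\{B_{i},B_{j}\}\rangle_{\rho(\phi)}$ into $\langle\{\mathcal{U}^{\dag}_{R_{i}}\!(H_{i}),\mathcal{U}^{\dag}_{R_{j}}\!(H_{j})\}\rangle_{\rho}$, which assembles into~\eqref{eq:WY}. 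I expect the main obstacle to be the careful evaluation of the pure-state Fisher--Bures formula from the singular general expression~\eqref{eq:FB-def}, namely tracking which eigenvalue blocks survive and fixing the overall prefactor, together with ensuring that all imaginary parts genuinely cancel so that the final expression is manifestly real; the change of frame from $\mathcal{U}_{L}$ to $\mathcal{U}^{\dag}_{R}$ is then routine bookkeeping.
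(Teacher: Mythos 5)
Your proposal is correct and follows essentially the same route as the paper's proof: reduction to the Fisher--Bures information of the canonical purification via Proposition~\ref{prop:FB-WY-canonical-purifications}, the pure-state formula $4\operatorname{Re}[\langle\partial_{i}\psi|\partial_{j}\psi\rangle-\langle\partial_{i}\psi|\psi\rangle\langle\psi|\partial_{j}\psi\rangle]$, Lemma~\ref{thm:WY-partial-derivatives} together with $\langle\Gamma|(X\otimes I)|\Gamma\rangle=\Tr[X]$, vanishing of the Berry-connection term by cyclicity, and the final conjugation by $U(\phi)$ to pass from $\mathcal{U}_{L_{j+1}}$ and $\rho(\phi)$ to $\mathcal{U}^{\dag}_{R_{j}}$ and $\rho$. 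The only difference is cosmetic: you derive the pure-state Fisher--Bures formula from the singular general expression~\eqref{eq:FB-def}, whereas the paper simply cites it from the literature, and your eigenvalue-block bookkeeping and sign tracking there are correct.
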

\begin{proof}
    See Appendix~\ref{app:WY-proof}.
\end{proof}
\medskip

We observe that the expression in~\eqref{eq:WY} closely resembles the result derived in \cite[Theorem~15]{minervini2025eqbm} for the Wigner–Yanase information matrix of quantum evolution machines.  The only distinction lies again in the use of the quantum channel $\mathcal{U}^\dag_{R_i}$ for PQCs. For a detailed discussion of the relationship between quantum evolution machines and PQCs, see Appendix~\ref{app:QEM_PQC}.

The quantity in~\eqref{eq:WY} can be estimated on a quantum computer employing the quantum circuit depicted in Figure~\ref{fig:WY-1} for the first term and the quantum circuit in Figure~\ref{fig:WY-2} for the second term. Detailed procedures for evaluating each of these terms are provided in Appendix~\ref{app:WY-first_term} and \ref{app:WY-second_term}, respectively. Notably, estimating the second term requires access to the canonical purification of the parameterized state in~\eqref{eq:state}, as defined in~\eqref{eq:purified-state}. 

\begin{remark}
    The Wigner–Yanase information matrix e\-le\-men\-ts, given in~\eqref{eq:WY}, can be equivalently expressed as
    \begin{align}
        I^{\operatorname{WY}}_{ij}(\phi) & = - \, 4 \Tr\!\bigg[  \left[ \mathcal{U}^\dag_{R_i}\!(H_i), \sqrt{\rho}\right] \left[ \mathcal{U}^\dag_{R_j}\!(H_j), \sqrt{\rho}\right] \bigg].
    \end{align}
    Notably, this alternative form is consistent with the common expression for the Wigner–Yanase metric found in the literature~\cite{Gibilisco_2003,Hansen_2008}.
\end{remark}

\subsection{Kubo--Mori information matrix of PQCs with thermal-state initialization}\label{sec:KM}

The main result of this section, Theorem~\ref{thm:KM}, provides an explicit analytical expression for the Kubo--Mori information matrix elements of PQCs initialized with a thermal state. Additionally, we present a quantum circuit designed to estimate the terms in this expression (see Figure~\ref{fig:KM}). Before stating the theorem, let us recall that the Kubo–Mori information matrix elements are given by~\cite[Theorem~7]{minervini2025eqbm}
\begin{equation}
    I^{\operatorname{KM}}_{ij}(\phi) =\sum_{k,\ell}\frac{\ln\lambda_{k}-\ln\lambda_{\ell}}{\lambda_{k}-\lambda_{\ell}}\langle k|\partial_{i}\rho(\phi)|\ell\rangle\langle\ell|\partial_{j}\rho(\phi)|k\rangle,\label{eq:KM_def}
\end{equation}
where $\rho(\phi)$ is expressed in terms of its spectral decomposition, as defined in~\eqref{eq:spcetr_decomp}.

\begin{theorem}\label{thm:KM}
    Let $\rho$ be an initial thermal state as defined in~\eqref{eq:thermal_state}. For a general layered parameterized quantum circuit of the form in~\eqref{eq:param_circ}, the Kubo--Mori information matrix elements are given by:
    \begin{align}
        I^{\operatorname{KM}}_{ij}(\phi) = \left\langle \left[ \left[ \mathcal{U}^\dag_{R_j}\!(H_j) , G \right] \!, \mathcal{U}^\dag_{R_i}\!(H_i) \right] \right\rangle_\rho,\label{eq:KM}
    \end{align}
    where $\mathcal{U}^\dagger_{R_j}\!$ denotes the unitary channel defined in~\eqref{eq:q_channel_right}.
\end{theorem}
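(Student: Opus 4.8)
The plan is to evaluate the defining spectral sum~\eqref{eq:KM_def} directly in the common eigenbasis $\{|\tilde{k}\rangle\}$ of $\rho$ and $G$, to independently expand the double commutator appearing on the right-hand side of~\eqref{eq:KM} in the same basis, and then to match the two. For readability I would abbreviate $A_i \coloneqq \mathcal{U}^\dag_{R_i}(H_i)$ and $A_j \coloneqq \mathcal{U}^\dag_{R_j}(H_j)$ throughout.

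First, I would substitute the matrix-element formula~\eqref{eq:last_aux_info} from Lemma~\ref{lem:matr_part_der} into~\eqref{eq:KM_def}. Writing $\langle k|\partial_i\rho(\phi)|\ell\rangle = i(\lambda_k-\lambda_\ell)\langle\tilde{k}|A_i|\tilde{\ell}\rangle$ and $\langle\ell|\partial_j\rho(\phi)|k\rangle = i(\lambda_\ell-\lambda_k)\langle\tilde{\ell}|A_j|\tilde{k}\rangle$, the two factors of $i$ together with $(\lambda_k-\lambda_\ell)(\lambda_\ell-\lambda_k) = -(\lambda_k-\lambda_\ell)^2$ combine so that the Kubo--Mori weight $\tfrac{\ln\lambda_k-\ln\lambda_\ell}{\lambda_k-\lambda_\ell}$ gets multiplied by $(\lambda_k-\lambda_\ell)^2$. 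This cancels the singular-looking denominator and yields
\begin{equation*}
    I^{\operatorname{KM}}_{ij}(\phi) = \sum_{k,\ell}(\ln\lambda_k-\ln\lambda_\ell)(\lambda_k-\lambda_\ell)\,\langle\tilde{k}|A_i|\tilde{\ell}\rangle\langle\tilde{\ell}|A_j|\tilde{k}\rangle.
\end{equation*}
The thermal-state assumption now enters: since $\rho = e^{-G}/Z$ shares its eigenbasis with $G=\sum_k\mu_k|\tilde{k}\rangle\!\langle\tilde{k}|$, the eigenvalues satisfy $\ln\lambda_k = -\mu_k - \ln Z$, whence $\ln\lambda_k-\ln\lambda_\ell = -(\mu_k-\mu_\ell)$. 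Substituting this replaces the logarithmic factor by a difference of $G$-eigenvalues, giving $I^{\operatorname{KM}}_{ij}(\phi) = -\sum_{k,\ell}(\mu_k-\mu_\ell)(\lambda_k-\lambda_\ell)\langle\tilde{k}|A_i|\tilde{\ell}\rangle\langle\tilde{\ell}|A_j|\tilde{k}\rangle$.

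Next I would expand the target $\langle[[A_j,G],A_i]\rangle_\rho = \Tr[\rho(A_jGA_i - GA_jA_i - A_iA_jG + A_iGA_j)]$ term by term in the $\{|\tilde{k}\rangle\}$ basis, inserting a resolution of the identity $\sum_\ell|\tilde{\ell}\rangle\!\langle\tilde{\ell}|$ and using $\rho|\tilde{k}\rangle=\lambda_k|\tilde{k}\rangle$, $G|\tilde{k}\rangle=\mu_k|\tilde{k}\rangle$ together with cyclicity of the trace. Each of the four terms produces a coefficient built from one $\lambda$ and one $\mu$; collecting them yields $\lambda_k\mu_\ell - \lambda_k\mu_k - \lambda_\ell\mu_\ell + \lambda_\ell\mu_k = -(\mu_k-\mu_\ell)(\lambda_k-\lambda_\ell)$ multiplying $\langle\tilde{k}|A_j|\tilde{\ell}\rangle\langle\tilde{\ell}|A_i|\tilde{k}\rangle$. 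Finally, because this coefficient is symmetric under $k\leftrightarrow\ell$, relabeling the summation indices in one of the two expressions reconciles the operator orderings $A_iA_j$ versus $A_jA_i$, establishing equality with the expression obtained in the previous paragraph.

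I do not anticipate a genuine obstacle, as the argument is a direct spectral computation; the one point demanding care is the bookkeeping in the commutator expansion---correctly tracking which index carries $\lambda$ and which carries $\mu$ in each of the four trace terms, and performing the $k\leftrightarrow\ell$ relabeling that matches the orderings. The conceptual crux is recognizing that the thermal structure is precisely what converts the Kubo--Mori logarithmic weight into the $G$-eigenvalue differences generated by the commutator $[\,\cdot\,,G]$, which is why the initial state must be a Gibbs state of the accessible Hamiltonian $G$.
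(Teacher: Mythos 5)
Your proposal is correct and follows essentially the same route as the paper's proof in Appendix~\ref{app:KM-proof}: both substitute the matrix elements from Lemma~\ref{lem:matr_part_der} into~\eqref{eq:KM_def}, cancel the Kubo--Mori denominator against $(\lambda_k-\lambda_\ell)^2$, and exploit $\ln\rho = -G - (\ln Z)I$ to convert the logarithmic weights into $G$-eigenvalue differences. The only cosmetic difference is that the paper resynthesizes the four spectral sums into traces and rearranges them into the double commutator, whereas you expand the double commutator independently and match coefficients; the underlying computation is identical.
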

\begin{proof}
    See Appendix~\ref{app:KM-proof}.
\end{proof}
\medskip

Once again, we observe a similarity with the result in \cite[Theorem~19]{minervini2025eqbm} for quantum evolution machines. For details on the relationship between quantum evolution machines and PQCs, see Appendix~\ref{app:QEM_PQC}.

The quantity in~\eqref{eq:KM} can be estimated on a quantum computer by employing the quantum circuit shown in Figure~\ref{fig:KM}. A detailed procedure for its estimation is provided in Appendix~\ref{app:KM_algo}.

\section{Quantum natural gradient for PQCs with thermal-state initialization}\label{sec:natural_gradient}

In variational quantum algorithms, a typical cost function is defined as
\begin{equation}\label{eq:cost_function}
    C(\phi) \coloneqq \Tr[O \rho(\phi)],
\end{equation}
where $O$ is a Hermitian operator and $\rho(\phi)$ is a parameterized quantum state. Here, we consider the parameterized quantum state defined in~\eqref{eq:state}, which is generated by the parameterized quantum circuit in~\eqref{eq:param_circ}. 

Optimizing the cost function requires efficient strategies that appro\-xi\-mate the optimal solution within a reasonable number of iterations. A variety of optimization methods have been explored in the context of variational quantum algorithms, drawing inspiration from classical machine learning techniques~\cite{Sung2020optimizersvqa,Pellow_Jarman2021opt_pqc,Lavrijsen2020opt_pqc}. These optimizers can be categorized based on the order of information they require about the cost function. Zeroth-order methods, also known as direct optimization methods, rely solely on cost function evaluations~\cite{Nelder1965nelder_mead,Powell1964powell_method}. First-order methods require access to the gradient~\cite{ruder2017overviewgradientdescentoptimization,kingma2017adam}, while second-order methods utilize the Hessian or other metrics reflecting the local curvature of the optimization landscape~\cite{Martens2020}. Among these, considering that direct optimization does not scale efficiently to large problem sizes, gradient-based methods are currently the most widely used in quantum machine learning. The simplest gradient-based algorithm is gradient descent, which updates the parameter vector $\phi$ based on the gradient of the cost function and a fixed learning rate~$\mu >0$:
\begin{equation}\label{eq:grad_desc}
    \phi_{m+1} = \phi_m - \mu \nabla_\phi C(\phi_m).
\end{equation}
Thus, evaluating the gradient of the cost function~\eqref{eq:cost_function} is essential. Using~\eqref{eq:grad_pqc}, the gradient can be computed as
\begin{align}
   \frac{\partial}{\partial\phi_{j}}  C(\phi) &= \Tr\!\left[O \frac{\partial}{\partial\phi_{j}}\rho(\phi)\right]\\
    & = i \Tr\! \left[ O \left[ \rho(\phi) , U_{L_j} H_j U^\dagger_{L_j} \right] \right]\\
    &= i \Tr\! \left[ \left[  U_{L_j} H_j U^\dagger_{L_j} , O \right] \rho(\phi) \right]\\
    &= i \Tr\! \left[ \left[ \mathcal{U}_{L_j}\!(H_j) , O \right] \rho(\phi) \right].
\end{align}
The structure of parameterized quantum circuits in~\eqref{eq:param_circ} allows for efficient gradient evaluation on a quantum computer. Gradients can be computed using measurements of rotation generators, such as the parameter shift rule and finite difference methods~\cite{Schuld2019gradient_evaluation,Izmaylov2021param_shift_rule_general,Hubregtsen2022grad_pqc,Wierichs2022generalparameter,Banchi2021stoch_par_shift_rule}, or through ancilla-based quantum gradient estimation schemes~\cite{romero2018,li2024efficientquantumgradienthigherorder}.

Since the gradient points in the direction of stee\-pest ascent in the Euclidean geometry of the parameter space, gradient-based optimization updates parameters toward the steepest descent of the cost function within this framework, intuitively leading to convergence at a minimum. However, Euclidean geometry is typically not well suited for the space of quantum states~\cite{Stokes2020qng}. Consequently, while the optimizer in~\eqref{eq:grad_desc} detects convergence through a vanishing gradient, it cannot distinguish between local and global minima. Furthermore, the trai\-ning landscapes of variational quantum algorithms often exhibit numerous local minima that are far from optimal~\cite{Bittel2021vqaNPhard}. As a result, both gradient-based and higher-order descent algorithms typically converge to suboptimal solutions. 

To address these limitations, natural gradient descent provides a more judicious approach to navigating the optimization landscape and mi\-ti\-ga\-ting the risk of getting trapped in local minima~\cite{Wierichs2020nat_grad_avoid_local_min}. Originally introduced in the classical setting~\cite{amari1998nat_grad}, natural gradient descent differs from standard gradient descent by incorporating the information geometry of the parameterized space, rather than relying on Euclidean geometry. This is achieved through the Fisher information matrix, which ensures parameter updates follow the steepest direction in the state space. The algorithm has been extended to the quantum setting by incorporating quantum generalizations of the Fisher information matrix, which are derived from feasible smooth divergences of quantum states~\cite{Stokes2020qng,sbahi2022provablyefficientvariationalgenerative,Koczor2022qng_noisy,sohail2024quantumnaturalstochasticpairwise,patel2024naturalgradientparameterestimation,minervini2025eqbm}. 

The update rule of the quantum natural gradient descent algorithm is given by
\begin{equation}\label{eq:nat_grad}
    \phi_{m+1} = \phi_m - \mu \left[ I^{\boldsymbol{D}} (\phi_m) \right]^{-1} \nabla_\phi C(\phi_m),
\end{equation}
where $I^{\boldsymbol{D}}(\gamma_m)$ is the $\boldsymbol{D}$-based Fisher information matrix, defined in~\eqref{eq:D-based-Fisher-matrix}, which encodes the curvature of the parameter space, and $\mu > 0$ is the learning rate. By incorporating the inverse of $I^{\boldsymbol{D}}(\phi_m)$, the gradient is rescaled to align with the steepest descent direction in the geometry defined by the smooth divergence $\boldsymbol{D}$, rather than the Euclidean metric. This rescaling improves optimization efficiency, helping to avoid local minima and enhancing overall convergence. A convergence analysis of quantum natural gradient under certain assumptions has been presented in~\cite{sohail2024quantumnaturalstochasticpairwise}.

Applying natural gradient descent to PQCs requires estimating the relevant information matrices and computing their inverses. While this additional computation introduces some overhead, it is often compensated by improved convergence rates and a reduced number of iterations, as well as the ability to escape saddle points more effectively~\cite{Straaten2021cost_nat_grad}.
In Section~\ref{sec:infor_matrices}, we outlined methods for evaluating the Fisher--Bures $I^{\operatorname{FB}}(\phi)$ (Section~\ref{sec:FB}), Wigner--Yanase $I^{\operatorname{WY}}(\phi)$ (Section~\ref{sec:WY}), and Kubo--Mori $I^{\operatorname{KM}}(\phi)$ (Section~\ref{sec:KM}) information matrix elements for PQCs. Each of these quantum generalizations of the Fisher information matrix can be directly incorporated into the quantum natural gradient descent algorithm by substituting the chosen matrix into the update rule defined in~\eqref{eq:nat_grad}. This additional flexibility allows for customization of the optimization process to fit specific problem settings. In this regard, \cite[Corollary 8]{minervini2025eqbm} establishes that, in the matrix Loewner order, the Fisher–Bures and Wigner–Yanase information matrices differ by at most a factor of two. As a result, the optimization steps derived from these two matrices differ only by a constant factor, making them essentially interchangeable in practical implementations of quantum natural gradient descent. This flexibility is particularly advantageous when one metric is computationally easier to estimate than the other.

\section{Parameter estimation of states generated by PQCs}

In addition to their direct application in quantum natural gradient descent, our results on the quantum generalizations of the Fisher information matrix are also relevant to parameter estimation (see~\cite{Liu2019q_Fisher_multi_par_est,Sidhu2020} for recent reviews). In this context, the Fisher--Bures information matrix provides a fundamental limit on how precisely unknown parameters can be estimated, as described by the following multiparameter Cramer--Rao bound, which holds for an arbitrary unbiased estimator and for a general parameterized family $(\sigma(\gamma))_{\gamma\in\mathbb{R}^{L}
}$ of states:
\begin{equation}
\operatorname{Cov}^{(n)}(\hat{\gamma},\gamma)\geq\frac{1}{n}\left[  I^{\operatorname{FB}
}(\gamma)\right]  ^{-1}.
\label{eq:Cramer--Rao-multiple}
\end{equation}
Here, $n\in\mathbb{N}$ is the number of copies of the state $\sigma(\gamma)$
available, $\hat{\gamma}$ is an estimate of the parameter vector~$\gamma$, the matrix $I^{\operatorname{FB}}(\gamma)$ denotes the Fisher--Bures
information matrix, and the covariance matrix
$\operatorname{Cov}^{(n)}(\hat{\gamma},\gamma)$ measures errors in estimation and is
defined in terms of its matrix elements as
\begin{multline}
\lbrack\operatorname{Cov}^{(n)}(\hat{\gamma},\gamma)]_{k,\ell}\coloneqq\\
\sum_{m}\operatorname{Tr}[M_{m}^{(n)}\sigma(\gamma)^{\otimes n}](\hat{\gamma
}_{k}(m)-\gamma_{k})(\hat{\gamma}_{\ell}(m)-\gamma_{\ell}).
\label{eq:cov-mat-def}
\end{multline}
In~\eqref{eq:cov-mat-def}, $(M_{m}^{(n)})_{m}$ is an arbitrary positive operator-valued measure used for estimation, i.e.,
satisfying $M_{m}^{(n)}\geq0$ for all $m$ and $\sum_{m}M_{m}^{(n)}=I^{\otimes n}$. This measurement acts, in general, collectively on all $n$ copies of the
state $\sigma(\gamma)^{\otimes n}$. Additionally,
\begin{equation}
\hat{\gamma}(m)\coloneqq(\hat{\gamma}_{1}(m),\hat{\gamma}_{2}(m),\ldots
,\hat{\gamma}_{J}(m))^{\mathsf{T}}
\label{eq:parameter-estimate-function}
\end{equation}
is a function that maps the measurement outcome $m$ to an estimate
$\hat{\gamma}(m)$ of the parameter vector $\gamma$. The inequality in~\eqref{eq:Cramer--Rao-multiple} exploits the additivity of the Fisher--Bures information matrix, as reviewed in~\cite[Appendix~A]{patel2024naturalgradientparameterestimation}. As noted in~\cite[Eq.~(C11)]{sbahi2022provablyefficientvariationalgenerative}, the multiparameter Cramer--Rao bound in~\eqref{eq:Cramer--Rao-multiple} can be written as follows:
\begin{equation}
\begin{pmatrix}
    \operatorname{Cov}^{(n)}(\hat{\gamma},\gamma) & I \\
    I & n  I^{\operatorname{FB}}(\gamma)
\end{pmatrix} \geq 0,
\end{equation}
which is a direct consequence of the Schur complement lemma.

The Fisher--Bures, Wigner--Yanase, and Kubo--Mori information matrices are related through the following operator inequalities (see~\cite[Section~III-D-1]{Jarzyna2020} and~\cite[Corollary 8]{minervini2025eqbm}):
\begin{align}
I^{\operatorname{KM}}(\gamma) \geq I^{\operatorname{WY}}(\gamma) \geq I^{\operatorname{FB}}(\gamma) \geq \frac{1}{2} I^{\operatorname{WY}}(\gamma).
\end{align}
From these inequalities and the multiparameter Cramer--Rao bound in~\eqref{eq:Cramer--Rao-multiple}, we see that the Fisher--Bures and Wigner--Yanase information matrices can be used interchangeably in the low-error regime, as they provide similar bounds on the covariance matrix $\operatorname{Cov}^{(n)}(\hat{\gamma},\gamma)$. Both matrices also give tighter bounds than the Kubo--Mori information matrix. This interchangeability is useful in the case that the Fisher--Bures information matrix elements are difficult to evaluate but the Wigner--Yanase
information matrix elements are not.

Now let us consider a scenario where a quantum state $\rho(\phi)$ is generated by an unknown PQC with unknown parameter vector $\phi$. The task is to estimate the unknown parameter vector $\phi$. The Fisher--Bures information matrix provides a natural framework for this task, as it quantifies how sensitive $\rho(\phi)$ is to changes in $\phi$. By computing the Fisher--Bures information matrix, we obtain fundamental limits on the performance of an arbitrary scheme for estimating the parameter vector $\phi$ from the results of measurements performed on $\rho(\phi)$, as described by the multiparameter Cramer--Rao bound in~\eqref{eq:Cramer--Rao-multiple}.
Thus, in the context of estimating the parameters of a state generated by a PQC, our unbiased expressions and quantum algorithms for computing the Fisher--Bures and Wigner--Yanase information matrices can guide the search for the unknown PQC parameters. By evaluating the information matrices for different candidate PQCs, one can assess how sensitive each circuit is to changes in its parameters. This information can be used to iteratively refine the PQC, by adjusting the parameter vector $\phi$.


\section{Conclusion and future directions}

In this paper, we have characterized and developed quantum algorithms to compute three quantum generalizations of the Fisher information matrix, namely Fisher--Bures, Wigner--Yanase, and Kubo--Mori, for PQCs initialized with thermal states. 
These algorithms require access to either the Hamiltonian that prepares the initial thermal state or its canonical purification. Specifically, for the Hamiltonian, the access is semi-classical: we assume knowledge of its description as a local Hamiltonian, which can be efficiently represented classically and measured on a quantum computer. This is a reasonable assumption in many practical scenarios, as Hamiltonians in quantum systems are often known and can be decomposed into local terms. On the other hand, access to the canonical purification of the initial state involves creating and manipulating a maximally entangled state between the system and an auxiliary register, which is a quantum task requiring specific state-preparation techniques. The distinction between these two types of access highlights the trade-offs in resource requirements for implementing our algorithms: the Hamiltonian-based approach leverages classical descriptions and requires quantum control for measurements, while the purification-based approach necessitates quantum control and state preparation capabilities.

Despite these requirements, our results represent the first exact expressions for these quantum information matrices that can be computed on a quantum computer, advancing beyond previous approximation methods for mixed-state scenarios and addressing an open problem regarding the geometry of mixed-state PQCs~\cite{Stokes2020qng}. A summary of our findings is provided in Table~\ref{table:FB-WY-KM-results}, for which we highlight the similarity with the information matrix elements of quantum evolution machines, as shown in~\cite[Table~1]{minervini2025eqbm}. The key difference between information matrices of PQCs and quantum evolution machines lies in the quantum channels arising from their respective gradients, which reflect the distinct structures of the two ansatzes. Moreover, the thermal state initialization in our framework aligns with the construction of Quantum Boltzmann Machines (QBMs). In this context, employing PQCs on top of QBMs provides an alternative approach for leveraging potential trainability advantages of QBMs~\cite{coopmans2024qbm,patel2024quantumboltzmannmachine} while retaining the flexibility and ease of use of PQCs.

By deriving three quantum generalizations of the Fisher information matrix, our work enables the implementation of three distinct quantum natural gradient descent algorithms for PQCs initialized with a thermal state. The availability of multiple quantum information matrices allows for flexibility in choosing the most sui\-ta\-ble one for a given optimization problem. Furthermore, as established in~\cite[Corollary 8]{minervini2025eqbm}, the Fisher–Bures and Wigner–Yanase information matrices differ by no more than a factor of two in the matrix Loewner order, making them essentially interchangeable in practical implementations of natural gradient descent.

Beyond optimization, our results also establish fundamental limits on the precision with which the parameters of a PQC can be estimated. Through the quantum Cramer--Rao bound, the quantum generalizations of the Fisher information matrix derived in this work quantify the sensitivity of the quantum state to changes in the circuit parameters, providing fundamental limitations on how well one can estimate the unknown parameters of a state generated by a PQC.

Looking ahead, while our work establishes a detailed theoretical foundation for the Fisher–Bures, Wigner–Yanase, and Kubo–Mori information matrices, several important questions remain open. First, the question of which matrix is best suited for specific applications remains open, as discussed in~\cite{minervini2025eqbm}. Addressing this question could involve numerical simulations to evaluate the performance of the corresponding quantum natural gradient algorithms on benchmark problems, including comparisons with standard gradient descent and other optimization methods. Such simulations would offer insights into convergence rates, scalability, and robustness, helping to identify the most effective matrix for different scenarios. In future work, we plan to explore these directions, building on the theoretical developments presented here and the simplifying methods from \cite{Gacon2021stoch_approx_Fisher,Fitzek2024optimizing,sohail2024quantumnaturalstochasticpairwise,halla2025estimationquantumfisherinformation,halla2025modifiedconjugatequantumnatural}. 

Furthermore, a key focus will be to investigate whether these novel optimization methods can address critical challenges in training PQCs, such as avoiding local minima and escaping barren plateaus. Indeed, while it is well known that, under certain conditions, the variance of the cost function gradient for PQCs decays exponentially with system size, it has yet to be proven whether the \textit{natural gradient}---obtained by rescaling the gradient with the inverse of the quantum Fisher information matrix---exhibits the same exponential decay. By tackling these issues, our work has the potential to enhance the scalability and efficiency of PQCs in variational quantum algorithms, paving the way for more robust quantum computing applications.

While our work provides exact formulas and unbiased estimators for these information matrices, practical implementations face significant challenges. The resource overhead of estimating and inverting these matrices, particularly in high-dimensional settings, requires careful analysis of sample complexity and computational costs. Moreover, the inversion of noisy matrix estimates introduces numerical instability and bias, which may affect both optimization performance and the validity of the Cramér--Rao bound. Potential mitigation strategies include regularization techniques, iterative linear system solvers, and averaging multiple independent estimates, though a rigorous analysis of these approaches remains an important direction for future work.

Finally, our results assume full-rank thermal states, yet in practice one may encounter low-rank density operators or projected Gibbs states $P\rho P$. An open question is whether proximity in trace norm implies proximity of the corresponding information matrices.  Specifically, given a rank-$k$ state $\rho$ and full-rank state $\rho_G$ with Hamiltonian $G$, an open question is to determine whether $\|\rho-\rho_G\|_1\le\varepsilon$ implies that one can bound $\bigl|I_{ij}(\phi)-I^{G}_{ij}(\phi)\bigr|\le f_{ij}(\varepsilon)$.  
Existing analytic results~\cite{liu2016,rezakhani2019} for non-full-rank states suggest that such continuity bounds should scale linearly with $\varepsilon$ for small perturbations. Establishing precise inequalities of this type and extending our unbiased-estimation schemes to arbitrary rank-$k$ or projected thermal states represent an interesting direction for future research.

\section*{Acknowledgments}

We thank Zoe Holmes and Soorya Rethinasamy for helpful discussions. 

MM and DP acknowledge support from the Air Force Office of Scientific Research under agreement no. FA2386-24-1-4069. DP and MMW acknowledge support from Air Force Research Laboratory under agreement no. FA8750-23-2-0031.

The U.S. Government is authorized to reproduce and distribute reprints for Governmental purposes notwithstanding any copyright notation thereon. The views and conclusions contained herein are those of the authors and should not be interpreted as necessarily representing the official policies or endorsements, either expressed or implied, of Air Force Research Laboratory or the United States Air Force or the U.S. Government.

\section*{Author Contributions}
\noindent
\textbf{Author Contributions}: The following describes the
different contributions of the authors of this work, using
roles defined by the CRediT (Contributor Roles Taxonomy) project~\cite{CRediT}:

\noindent\textbf{MM:} Formal analysis, Investigation, Methodology, Validation, Writing - original draft, Writing - review \& editing.

\noindent\textbf{DP:} Formal analysis, Validation, Writing - review \& editing.

\noindent\textbf{MMW:} Conceptualization, Funding acquisition, Supervision, Validation, Writing - review \& editing.

\bibliography{references}

\begin{thebibliography}{101}%
\makeatletter
\providecommand \@ifxundefined [1]{%
 \@ifx{#1\undefined}
}%
\providecommand \@ifnum [1]{%
 \ifnum #1\expandafter \@firstoftwo
 \else \expandafter \@secondoftwo
 \fi
}%
\providecommand \@ifx [1]{%
 \ifx #1\expandafter \@firstoftwo
 \else \expandafter \@secondoftwo
 \fi
}%
\providecommand \natexlab [1]{#1}%
\providecommand \enquote  [1]{``#1''}%
\providecommand \bibnamefont  [1]{#1}%
\providecommand \bibfnamefont [1]{#1}%
\providecommand \citenamefont [1]{#1}%
\providecommand \href@noop [0]{\@secondoftwo}%
\providecommand \href [0]{\begingroup \@sanitize@url \@href}%
\providecommand \@href[1]{\@@startlink{#1}\@@href}%
\providecommand \@@href[1]{\endgroup#1\@@endlink}%
\providecommand \@sanitize@url [0]{\catcode `\\12\catcode `\$12\catcode `\&12\catcode `\#12\catcode `\^12\catcode `\_12\catcode `\%12\relax}%
\providecommand \@@startlink[1]{}%
\providecommand \@@endlink[0]{}%
\providecommand \url  [0]{\begingroup\@sanitize@url \@url }%
\providecommand \@url [1]{\endgroup\@href {#1}{\urlprefix }}%
\providecommand \urlprefix  [0]{URL }%
\providecommand \Eprint [0]{\href }%
\providecommand \doibase [0]{https://doi.org/}%
\providecommand \selectlanguage [0]{\@gobble}%
\providecommand \bibinfo  [0]{\@secondoftwo}%
\providecommand \bibfield  [0]{\@secondoftwo}%
\providecommand \translation [1]{[#1]}%
\providecommand \BibitemOpen [0]{}%
\providecommand \bibitemStop [0]{}%
\providecommand \bibitemNoStop [0]{.\EOS\space}%
\providecommand \EOS [0]{\spacefactor3000\relax}%
\providecommand \BibitemShut  [1]{\csname bibitem#1\endcsname}%
\let\auto@bib@innerbib\@empty
\bibitem [{\citenamefont {Mitarai}\ \emph {et~al.}(2018)\citenamefont {Mitarai}, \citenamefont {Negoro}, \citenamefont {Kitagawa},\ and\ \citenamefont {Fujii}}]{Mitarai2018quantum_circuit_learning}%
  \BibitemOpen
  \bibfield  {author} {\bibinfo {author} {\bibfnamefont {K.}~\bibnamefont {Mitarai}}, \bibinfo {author} {\bibfnamefont {M.}~\bibnamefont {Negoro}}, \bibinfo {author} {\bibfnamefont {M.}~\bibnamefont {Kitagawa}},\ and\ \bibinfo {author} {\bibfnamefont {K.}~\bibnamefont {Fujii}},\ }\bibfield  {title} {\bibinfo {title} {Quantum circuit learning},\ }\href {https://doi.org/10.1103/PhysRevA.98.032309} {\bibfield  {journal} {\bibinfo  {journal} {Phyical Review A}\ }\textbf {\bibinfo {volume} {98}},\ \bibinfo {pages} {032309} (\bibinfo {year} {2018})}\BibitemShut {NoStop}%
\bibitem [{\citenamefont {Cerezo}\ \emph {et~al.}(2021{\natexlab{a}})\citenamefont {Cerezo}, \citenamefont {Arrasmith}, \citenamefont {Babbush}, \citenamefont {Benjamin}, \citenamefont {Endo}, \citenamefont {Fujii}, \citenamefont {McClean}, \citenamefont {Mitarai}, \citenamefont {Yuan}, \citenamefont {Cincio},\ and\ \citenamefont {Coles}}]{Cerezo2021vqa}%
  \BibitemOpen
  \bibfield  {author} {\bibinfo {author} {\bibfnamefont {M.}~\bibnamefont {Cerezo}}, \bibinfo {author} {\bibfnamefont {A.}~\bibnamefont {Arrasmith}}, \bibinfo {author} {\bibfnamefont {R.}~\bibnamefont {Babbush}}, \bibinfo {author} {\bibfnamefont {S.~C.}\ \bibnamefont {Benjamin}}, \bibinfo {author} {\bibfnamefont {S.}~\bibnamefont {Endo}}, \bibinfo {author} {\bibfnamefont {K.}~\bibnamefont {Fujii}}, \bibinfo {author} {\bibfnamefont {J.~R.}\ \bibnamefont {McClean}}, \bibinfo {author} {\bibfnamefont {K.}~\bibnamefont {Mitarai}}, \bibinfo {author} {\bibfnamefont {X.}~\bibnamefont {Yuan}}, \bibinfo {author} {\bibfnamefont {L.}~\bibnamefont {Cincio}},\ and\ \bibinfo {author} {\bibfnamefont {P.~J.}\ \bibnamefont {Coles}},\ }\bibfield  {title} {\bibinfo {title} {Variational quantum algorithms},\ }\href {https://doi.org/10.1038/s42254-021-00348-9} {\bibfield  {journal} {\bibinfo  {journal} {Nature Reviews Physics}\ }\textbf {\bibinfo {volume} {3}},\ \bibinfo {pages} {625–644} (\bibinfo {year}
  {2021}{\natexlab{a}})}\BibitemShut {NoStop}%
\bibitem [{\citenamefont {Lavrijsen}\ \emph {et~al.}(2020)\citenamefont {Lavrijsen}, \citenamefont {Tudor}, \citenamefont {Muller}, \citenamefont {Iancu},\ and\ \citenamefont {de~Jong}}]{Lavrijsen2020opt_pqc}%
  \BibitemOpen
  \bibfield  {author} {\bibinfo {author} {\bibfnamefont {W.}~\bibnamefont {Lavrijsen}}, \bibinfo {author} {\bibfnamefont {A.}~\bibnamefont {Tudor}}, \bibinfo {author} {\bibfnamefont {J.}~\bibnamefont {Muller}}, \bibinfo {author} {\bibfnamefont {C.}~\bibnamefont {Iancu}},\ and\ \bibinfo {author} {\bibfnamefont {W.}~\bibnamefont {de~Jong}},\ }\bibfield  {title} {\bibinfo {title} {Classical optimizers for noisy intermediate-scale quantum devices},\ }in\ \href {https://doi.org/10.1109/qce49297.2020.00041} {\emph {\bibinfo {booktitle} {2020 IEEE International Conference on Quantum Computing and Engineering (QCE)}}}\ (\bibinfo  {publisher} {IEEE},\ \bibinfo {year} {2020})\ p.\ \bibinfo {pages} {267–277}\BibitemShut {NoStop}%
\bibitem [{\citenamefont {Sweke}\ \emph {et~al.}(2020)\citenamefont {Sweke}, \citenamefont {Wilde}, \citenamefont {Meyer}, \citenamefont {Schuld}, \citenamefont {Faehrmann}, \citenamefont {Meynard-Piganeau},\ and\ \citenamefont {Eisert}}]{Sweke2020stoch_grad_desc}%
  \BibitemOpen
  \bibfield  {author} {\bibinfo {author} {\bibfnamefont {R.}~\bibnamefont {Sweke}}, \bibinfo {author} {\bibfnamefont {F.}~\bibnamefont {Wilde}}, \bibinfo {author} {\bibfnamefont {J.}~\bibnamefont {Meyer}}, \bibinfo {author} {\bibfnamefont {M.}~\bibnamefont {Schuld}}, \bibinfo {author} {\bibfnamefont {P.~K.}\ \bibnamefont {Faehrmann}}, \bibinfo {author} {\bibfnamefont {B.}~\bibnamefont {Meynard-Piganeau}},\ and\ \bibinfo {author} {\bibfnamefont {J.}~\bibnamefont {Eisert}},\ }\bibfield  {title} {\bibinfo {title} {Stochastic gradient descent for hybrid quantum-classical optimization},\ }\href {https://doi.org/10.22331/q-2020-08-31-314} {\bibfield  {journal} {\bibinfo  {journal} {Quantum}\ }\textbf {\bibinfo {volume} {4}},\ \bibinfo {pages} {314} (\bibinfo {year} {2020})}\BibitemShut {NoStop}%
\bibitem [{\citenamefont {Peruzzo}\ \emph {et~al.}(2014)\citenamefont {Peruzzo}, \citenamefont {McClean}, \citenamefont {Shadbolt}, \citenamefont {Yung}, \citenamefont {Zhou}, \citenamefont {Love}, \citenamefont {Aspuru-Guzik},\ and\ \citenamefont {O’Brien}}]{Peruzzo2014vqe}%
  \BibitemOpen
  \bibfield  {author} {\bibinfo {author} {\bibfnamefont {A.}~\bibnamefont {Peruzzo}}, \bibinfo {author} {\bibfnamefont {J.}~\bibnamefont {McClean}}, \bibinfo {author} {\bibfnamefont {P.}~\bibnamefont {Shadbolt}}, \bibinfo {author} {\bibfnamefont {M.-H.}\ \bibnamefont {Yung}}, \bibinfo {author} {\bibfnamefont {X.-Q.}\ \bibnamefont {Zhou}}, \bibinfo {author} {\bibfnamefont {P.~J.}\ \bibnamefont {Love}}, \bibinfo {author} {\bibfnamefont {A.}~\bibnamefont {Aspuru-Guzik}},\ and\ \bibinfo {author} {\bibfnamefont {J.~L.}\ \bibnamefont {O’Brien}},\ }\bibfield  {title} {\bibinfo {title} {A variational eigenvalue solver on a photonic quantum processor},\ }\href {https://doi.org/10.1038/ncomms5213} {\bibfield  {journal} {\bibinfo  {journal} {Nature Communications}\ }\textbf {\bibinfo {volume} {5}},\ \bibinfo {pages} {4213} (\bibinfo {year} {2014})}\BibitemShut {NoStop}%
\bibitem [{\citenamefont {McClean}\ \emph {et~al.}(2016)\citenamefont {McClean}, \citenamefont {Romero}, \citenamefont {Babbush},\ and\ \citenamefont {Aspuru-Guzik}}]{McClean2016theore_vqa}%
  \BibitemOpen
  \bibfield  {author} {\bibinfo {author} {\bibfnamefont {J.~R.}\ \bibnamefont {McClean}}, \bibinfo {author} {\bibfnamefont {J.}~\bibnamefont {Romero}}, \bibinfo {author} {\bibfnamefont {R.}~\bibnamefont {Babbush}},\ and\ \bibinfo {author} {\bibfnamefont {A.}~\bibnamefont {Aspuru-Guzik}},\ }\bibfield  {title} {\bibinfo {title} {The theory of variational hybrid quantum-classical algorithms},\ }\href {https://doi.org/10.1088/1367-2630/18/2/023023} {\bibfield  {journal} {\bibinfo  {journal} {New Journal of Physics}\ }\textbf {\bibinfo {volume} {18}},\ \bibinfo {pages} {023023} (\bibinfo {year} {2016})}\BibitemShut {NoStop}%
\bibitem [{\citenamefont {Bittel}\ and\ \citenamefont {Kliesch}(2021)}]{Bittel2021vqaNPhard}%
  \BibitemOpen
  \bibfield  {author} {\bibinfo {author} {\bibfnamefont {L.}~\bibnamefont {Bittel}}\ and\ \bibinfo {author} {\bibfnamefont {M.}~\bibnamefont {Kliesch}},\ }\bibfield  {title} {\bibinfo {title} {Training variational quantum algorithms is {NP}-{H}ard},\ }\href {https://doi.org/10.1103/PhysRevLett.127.120502} {\bibfield  {journal} {\bibinfo  {journal} {Physical Review Letters}\ }\textbf {\bibinfo {volume} {127}},\ \bibinfo {pages} {120502} (\bibinfo {year} {2021})}\BibitemShut {NoStop}%
\bibitem [{\citenamefont {Anschuetz}\ and\ \citenamefont {Kiani}(2022)}]{Anschuetz2022qva_traps}%
  \BibitemOpen
  \bibfield  {author} {\bibinfo {author} {\bibfnamefont {E.~R.}\ \bibnamefont {Anschuetz}}\ and\ \bibinfo {author} {\bibfnamefont {B.~T.}\ \bibnamefont {Kiani}},\ }\bibfield  {title} {\bibinfo {title} {Quantum variational algorithms are swamped with traps},\ }\href {http://dx.doi.org/10.1038/s41467-022-35364-5} {\bibfield  {journal} {\bibinfo  {journal} {Nature Communications}\ }\textbf {\bibinfo {volume} {13}} (\bibinfo {year} {2022})}\BibitemShut {NoStop}%
\bibitem [{\citenamefont {Tikku}\ and\ \citenamefont {Kim}(2022)}]{tikku2022circuitdepthversusenergy}%
  \BibitemOpen
  \bibfield  {author} {\bibinfo {author} {\bibfnamefont {A.}~\bibnamefont {Tikku}}\ and\ \bibinfo {author} {\bibfnamefont {I.~H.}\ \bibnamefont {Kim}},\ }\href {https://arxiv.org/abs/2210.06796} {\bibinfo {title} {Circuit depth versus energy in topologically ordered systems}} (\bibinfo {year} {2022}),\ \Eprint {https://arxiv.org/abs/2210.06796} {arXiv:2210.06796 [quant-ph]} \BibitemShut {NoStop}%
\bibitem [{\citenamefont {Holmes}\ \emph {et~al.}(2022)\citenamefont {Holmes}, \citenamefont {Sharma}, \citenamefont {Cerezo},\ and\ \citenamefont {Coles}}]{Holmes2022expressibility_barrenplateaus}%
  \BibitemOpen
  \bibfield  {author} {\bibinfo {author} {\bibfnamefont {Z.}~\bibnamefont {Holmes}}, \bibinfo {author} {\bibfnamefont {K.}~\bibnamefont {Sharma}}, \bibinfo {author} {\bibfnamefont {M.}~\bibnamefont {Cerezo}},\ and\ \bibinfo {author} {\bibfnamefont {P.~J.}\ \bibnamefont {Coles}},\ }\bibfield  {title} {\bibinfo {title} {Connecting ansatz expressibility to gradient magnitudes and barren plateaus},\ }\href {https://doi.org/10.1103/PRXQuantum.3.010313} {\bibfield  {journal} {\bibinfo  {journal} {PRX Quantum}\ }\textbf {\bibinfo {volume} {3}},\ \bibinfo {pages} {010313} (\bibinfo {year} {2022})}\BibitemShut {NoStop}%
\bibitem [{\citenamefont {McClean}\ \emph {et~al.}(2018)\citenamefont {McClean}, \citenamefont {Boixo}, \citenamefont {Smelyanskiy}, \citenamefont {Babbush},\ and\ \citenamefont {Neven}}]{McClean2018barren_plateaus}%
  \BibitemOpen
  \bibfield  {author} {\bibinfo {author} {\bibfnamefont {J.~R.}\ \bibnamefont {McClean}}, \bibinfo {author} {\bibfnamefont {S.}~\bibnamefont {Boixo}}, \bibinfo {author} {\bibfnamefont {V.~N.}\ \bibnamefont {Smelyanskiy}}, \bibinfo {author} {\bibfnamefont {R.}~\bibnamefont {Babbush}},\ and\ \bibinfo {author} {\bibfnamefont {H.}~\bibnamefont {Neven}},\ }\bibfield  {title} {\bibinfo {title} {Barren plateaus in quantum neural network training landscapes},\ }\href {https://doi.org/10.1038/s41467-018-07090-4} {\bibfield  {journal} {\bibinfo  {journal} {Nature Communications}\ }\textbf {\bibinfo {volume} {9}},\ \bibinfo {pages} {4812} (\bibinfo {year} {2018})}\BibitemShut {NoStop}%
\bibitem [{\citenamefont {Larocca}\ \emph {et~al.}(2024)\citenamefont {Larocca}, \citenamefont {Thanasilp}, \citenamefont {Wang}, \citenamefont {Sharma}, \citenamefont {Biamonte}, \citenamefont {Coles}, \citenamefont {Cincio}, \citenamefont {McClean}, \citenamefont {Holmes},\ and\ \citenamefont {Cerezo}}]{larocca2024review_barren_plateaus}%
  \BibitemOpen
  \bibfield  {author} {\bibinfo {author} {\bibfnamefont {M.}~\bibnamefont {Larocca}}, \bibinfo {author} {\bibfnamefont {S.}~\bibnamefont {Thanasilp}}, \bibinfo {author} {\bibfnamefont {S.}~\bibnamefont {Wang}}, \bibinfo {author} {\bibfnamefont {K.}~\bibnamefont {Sharma}}, \bibinfo {author} {\bibfnamefont {J.}~\bibnamefont {Biamonte}}, \bibinfo {author} {\bibfnamefont {P.~J.}\ \bibnamefont {Coles}}, \bibinfo {author} {\bibfnamefont {L.}~\bibnamefont {Cincio}}, \bibinfo {author} {\bibfnamefont {J.~R.}\ \bibnamefont {McClean}}, \bibinfo {author} {\bibfnamefont {Z.}~\bibnamefont {Holmes}},\ and\ \bibinfo {author} {\bibfnamefont {M.}~\bibnamefont {Cerezo}},\ }\href {https://arxiv.org/abs/2405.00781} {\bibinfo {title} {A review of barren plateaus in variational quantum computing}} (\bibinfo {year} {2024}),\ \Eprint {https://arxiv.org/abs/2405.00781} {arXiv:2405.00781 [quant-ph]} \BibitemShut {NoStop}%
\bibitem [{\citenamefont {Cerezo}\ \emph {et~al.}(2022)\citenamefont {Cerezo}, \citenamefont {Verdon}, \citenamefont {Huang}, \citenamefont {Cincio},\ and\ \citenamefont {Coles}}]{cerezo2022challengesQML}%
  \BibitemOpen
  \bibfield  {author} {\bibinfo {author} {\bibfnamefont {M.}~\bibnamefont {Cerezo}}, \bibinfo {author} {\bibfnamefont {G.}~\bibnamefont {Verdon}}, \bibinfo {author} {\bibfnamefont {H.-Y.}\ \bibnamefont {Huang}}, \bibinfo {author} {\bibfnamefont {L.}~\bibnamefont {Cincio}},\ and\ \bibinfo {author} {\bibfnamefont {P.~J.}\ \bibnamefont {Coles}},\ }\bibfield  {title} {\bibinfo {title} {Challenges and opportunities in quantum machine learning},\ }\href {https://doi.org/10.1038/s43588-022-00311-3} {\bibfield  {journal} {\bibinfo  {journal} {Nature Computational Science}\ }\textbf {\bibinfo {volume} {2}},\ \bibinfo {pages} {567} (\bibinfo {year} {2022})}\BibitemShut {NoStop}%
\bibitem [{\citenamefont {Zimborás}\ \emph {et~al.}(2025)\citenamefont {Zimborás}, \citenamefont {Koczor}, \citenamefont {Holmes}, \citenamefont {Borrelli}, \citenamefont {Gilyén}, \citenamefont {Huang}, \citenamefont {Cai}, \citenamefont {Acín}, \citenamefont {Aolita}, \citenamefont {Banchi}, \citenamefont {Brandão}, \citenamefont {Cavalcanti}, \citenamefont {Cubitt}, \citenamefont {Filippov}, \citenamefont {García-Pérez}, \citenamefont {Goold}, \citenamefont {Kálmán}, \citenamefont {Kyoseva}, \citenamefont {Rossi}, \citenamefont {Sokolov}, \citenamefont {Tavernelli},\ and\ \citenamefont {Maniscalco}}]{zimborás2025myths_quantum_computation_fault}%
  \BibitemOpen
  \bibfield  {author} {\bibinfo {author} {\bibfnamefont {Z.}~\bibnamefont {Zimborás}}, \bibinfo {author} {\bibfnamefont {B.}~\bibnamefont {Koczor}}, \bibinfo {author} {\bibfnamefont {Z.}~\bibnamefont {Holmes}}, \bibinfo {author} {\bibfnamefont {E.-M.}\ \bibnamefont {Borrelli}}, \bibinfo {author} {\bibfnamefont {A.}~\bibnamefont {Gilyén}}, \bibinfo {author} {\bibfnamefont {H.-Y.}\ \bibnamefont {Huang}}, \bibinfo {author} {\bibfnamefont {Z.}~\bibnamefont {Cai}}, \bibinfo {author} {\bibfnamefont {A.}~\bibnamefont {Acín}}, \bibinfo {author} {\bibfnamefont {L.}~\bibnamefont {Aolita}}, \bibinfo {author} {\bibfnamefont {L.}~\bibnamefont {Banchi}}, \bibinfo {author} {\bibfnamefont {F.~G. S.~L.}\ \bibnamefont {Brandão}}, \bibinfo {author} {\bibfnamefont {D.}~\bibnamefont {Cavalcanti}}, \bibinfo {author} {\bibfnamefont {T.}~\bibnamefont {Cubitt}}, \bibinfo {author} {\bibfnamefont {S.~N.}\ \bibnamefont {Filippov}}, \bibinfo {author} {\bibfnamefont {G.}~\bibnamefont {García-Pérez}}, \bibinfo {author} {\bibfnamefont
  {J.}~\bibnamefont {Goold}}, \bibinfo {author} {\bibfnamefont {O.}~\bibnamefont {Kálmán}}, \bibinfo {author} {\bibfnamefont {E.}~\bibnamefont {Kyoseva}}, \bibinfo {author} {\bibfnamefont {M.~A.~C.}\ \bibnamefont {Rossi}}, \bibinfo {author} {\bibfnamefont {B.}~\bibnamefont {Sokolov}}, \bibinfo {author} {\bibfnamefont {I.}~\bibnamefont {Tavernelli}},\ and\ \bibinfo {author} {\bibfnamefont {S.}~\bibnamefont {Maniscalco}},\ }\href {https://arxiv.org/abs/2501.05694} {\bibinfo {title} {Myths around quantum computation before full fault tolerance: What no-go theorems rule out and what they don't}} (\bibinfo {year} {2025}),\ \Eprint {https://arxiv.org/abs/2501.05694} {arXiv:2501.05694 [quant-ph]} \BibitemShut {NoStop}%
\bibitem [{\citenamefont {You}\ and\ \citenamefont {Wu}(2021)}]{You2021many_local_min}%
  \BibitemOpen
  \bibfield  {author} {\bibinfo {author} {\bibfnamefont {X.}~\bibnamefont {You}}\ and\ \bibinfo {author} {\bibfnamefont {X.}~\bibnamefont {Wu}},\ }\bibfield  {title} {\bibinfo {title} {Exponentially many local minima in quantum neural networks},\ }in\ \href {https://proceedings.mlr.press/v139/you21c.html} {\emph {\bibinfo {booktitle} {Proceedings of the 38th International Conference on Machine Learning}}},\ \bibinfo {series} {Proceedings of Machine Learning Research}, Vol.\ \bibinfo {volume} {139}\ (\bibinfo  {publisher} {PMLR},\ \bibinfo {year} {2021})\ pp.\ \bibinfo {pages} {12144--12155}\BibitemShut {NoStop}%
\bibitem [{\citenamefont {Ge}\ \emph {et~al.}(2022)\citenamefont {Ge}, \citenamefont {Wu},\ and\ \citenamefont {Rabitz}}]{Xiaozhen2022opt_landscape}%
  \BibitemOpen
  \bibfield  {author} {\bibinfo {author} {\bibfnamefont {X.}~\bibnamefont {Ge}}, \bibinfo {author} {\bibfnamefont {R.-B.}\ \bibnamefont {Wu}},\ and\ \bibinfo {author} {\bibfnamefont {H.}~\bibnamefont {Rabitz}},\ }\bibfield  {title} {\bibinfo {title} {The optimization landscape of hybrid quantum–classical algorithms: From quantum control to {NISQ} applications},\ }\href {https://doi.org/https://doi.org/10.1016/j.arcontrol.2022.06.001} {\bibfield  {journal} {\bibinfo  {journal} {Annual Reviews in Control}\ }\textbf {\bibinfo {volume} {54}},\ \bibinfo {pages} {314} (\bibinfo {year} {2022})}\BibitemShut {NoStop}%
\bibitem [{\citenamefont {McArdle}\ \emph {et~al.}(2019)\citenamefont {McArdle}, \citenamefont {Jones}, \citenamefont {Endo}, \citenamefont {Li}, \citenamefont {Benjamin},\ and\ \citenamefont {Yuan}}]{McArdle2019}%
  \BibitemOpen
  \bibfield  {author} {\bibinfo {author} {\bibfnamefont {S.}~\bibnamefont {McArdle}}, \bibinfo {author} {\bibfnamefont {T.}~\bibnamefont {Jones}}, \bibinfo {author} {\bibfnamefont {S.}~\bibnamefont {Endo}}, \bibinfo {author} {\bibfnamefont {Y.}~\bibnamefont {Li}}, \bibinfo {author} {\bibfnamefont {S.~C.}\ \bibnamefont {Benjamin}},\ and\ \bibinfo {author} {\bibfnamefont {X.}~\bibnamefont {Yuan}},\ }\bibfield  {title} {\bibinfo {title} {Variational ansatz-based quantum simulation of imaginary time evolution},\ }\href {https://doi.org/10.1038/s41534-019-0187-2} {\bibfield  {journal} {\bibinfo  {journal} {npj Quantum Information}\ }\textbf {\bibinfo {volume} {5}},\ \bibinfo {pages} {75} (\bibinfo {year} {2019})}\BibitemShut {NoStop}%
\bibitem [{\citenamefont {Wierichs}\ \emph {et~al.}(2020)\citenamefont {Wierichs}, \citenamefont {Gogolin},\ and\ \citenamefont {Kastoryano}}]{Wierichs2020nat_grad_avoid_local_min}%
  \BibitemOpen
  \bibfield  {author} {\bibinfo {author} {\bibfnamefont {D.}~\bibnamefont {Wierichs}}, \bibinfo {author} {\bibfnamefont {C.}~\bibnamefont {Gogolin}},\ and\ \bibinfo {author} {\bibfnamefont {M.}~\bibnamefont {Kastoryano}},\ }\bibfield  {title} {\bibinfo {title} {Avoiding local minima in variational quantum eigensolvers with the natural gradient optimizer},\ }\href {https://doi.org/10.1103/PhysRevResearch.2.043246} {\bibfield  {journal} {\bibinfo  {journal} {Physical Review Research}\ }\textbf {\bibinfo {volume} {2}},\ \bibinfo {pages} {043246} (\bibinfo {year} {2020})}\BibitemShut {NoStop}%
\bibitem [{\citenamefont {Amari}(1998)}]{amari1998nat_grad}%
  \BibitemOpen
  \bibfield  {author} {\bibinfo {author} {\bibfnamefont {S.-I.}\ \bibnamefont {Amari}},\ }\bibfield  {title} {\bibinfo {title} {Natural gradient works efficiently in learning},\ }\href {https://doi.org/10.1162/089976698300017746} {\bibfield  {journal} {\bibinfo  {journal} {Neural Computation}\ }\textbf {\bibinfo {volume} {10}},\ \bibinfo {pages} {251} (\bibinfo {year} {1998})}\BibitemShut {NoStop}%
\bibitem [{\citenamefont {Pascanu}\ and\ \citenamefont {Bengio}(2014)}]{pascanu2014revisitingnaturalgradientdeep}%
  \BibitemOpen
  \bibfield  {author} {\bibinfo {author} {\bibfnamefont {R.}~\bibnamefont {Pascanu}}\ and\ \bibinfo {author} {\bibfnamefont {Y.}~\bibnamefont {Bengio}},\ }\href {https://arxiv.org/abs/1301.3584} {\bibinfo {title} {Revisiting natural gradient for deep networks}} (\bibinfo {year} {2014}),\ \Eprint {https://arxiv.org/abs/1301.3584} {arXiv:1301.3584 [cs.LG]} \BibitemShut {NoStop}%
\bibitem [{\citenamefont {Kakade}(2001)}]{kakade2002natgrad_policygrad}%
  \BibitemOpen
  \bibfield  {author} {\bibinfo {author} {\bibfnamefont {S.~M.}\ \bibnamefont {Kakade}},\ }\bibfield  {title} {\bibinfo {title} {A natural policy gradient},\ }in\ \href {https://proceedings.neurips.cc/paper_files/paper/2001/file/4b86abe48d358ecf194c56c69108433e-Paper.pdf} {\emph {\bibinfo {booktitle} {Advances in Neural Information Processing Systems}}},\ Vol.~\bibinfo {volume} {14},\ \bibinfo {editor} {edited by\ \bibinfo {editor} {\bibfnamefont {T.}~\bibnamefont {Dietterich}}, \bibinfo {editor} {\bibfnamefont {S.}~\bibnamefont {Becker}},\ and\ \bibinfo {editor} {\bibfnamefont {Z.}~\bibnamefont {Ghahramani}}}\ (\bibinfo  {publisher} {MIT Press},\ \bibinfo {year} {2001})\BibitemShut {NoStop}%
\bibitem [{\citenamefont {Peters}\ and\ \citenamefont {Schaal}(2008)}]{Peters2005nat_grad_reinforcement_learn}%
  \BibitemOpen
  \bibfield  {author} {\bibinfo {author} {\bibfnamefont {J.}~\bibnamefont {Peters}}\ and\ \bibinfo {author} {\bibfnamefont {S.}~\bibnamefont {Schaal}},\ }\bibfield  {title} {\bibinfo {title} {Natural actor-critic},\ }\href {https://www.sciencedirect.com/science/article/pii/S0925231208000532} {\bibfield  {journal} {\bibinfo  {journal} {Neurocomputing}\ }\textbf {\bibinfo {volume} {71}},\ \bibinfo {pages} {1180} (\bibinfo {year} {2008})}\BibitemShut {NoStop}%
\bibitem [{\citenamefont {Martens}(2020)}]{Martens2020}%
  \BibitemOpen
  \bibfield  {author} {\bibinfo {author} {\bibfnamefont {J.}~\bibnamefont {Martens}},\ }\bibfield  {title} {\bibinfo {title} {New insights and perspectives on the natural gradient method},\ }\href {http://jmlr.org/papers/v21/17-678.html} {\bibfield  {journal} {\bibinfo  {journal} {Journal of Machine Learning Research}\ }\textbf {\bibinfo {volume} {21}},\ \bibinfo {pages} {1} (\bibinfo {year} {2020})}\BibitemShut {NoStop}%
\bibitem [{\citenamefont {Stokes}\ \emph {et~al.}(2020)\citenamefont {Stokes}, \citenamefont {Izaac}, \citenamefont {Killoran},\ and\ \citenamefont {Carleo}}]{Stokes2020qng}%
  \BibitemOpen
  \bibfield  {author} {\bibinfo {author} {\bibfnamefont {J.}~\bibnamefont {Stokes}}, \bibinfo {author} {\bibfnamefont {J.}~\bibnamefont {Izaac}}, \bibinfo {author} {\bibfnamefont {N.}~\bibnamefont {Killoran}},\ and\ \bibinfo {author} {\bibfnamefont {G.}~\bibnamefont {Carleo}},\ }\bibfield  {title} {\bibinfo {title} {Quantum natural gradient},\ }\href {https://doi.org/10.22331/q-2020-05-25-269} {\bibfield  {journal} {\bibinfo  {journal} {Quantum}\ }\textbf {\bibinfo {volume} {4}},\ \bibinfo {pages} {269} (\bibinfo {year} {2020})}\BibitemShut {NoStop}%
\bibitem [{\citenamefont {Sbahi}\ \emph {et~al.}(2022)\citenamefont {Sbahi}, \citenamefont {Martinez}, \citenamefont {Patel}, \citenamefont {Saberi}, \citenamefont {Yoo}, \citenamefont {Roeder},\ and\ \citenamefont {Verdon}}]{sbahi2022provablyefficientvariationalgenerative}%
  \BibitemOpen
  \bibfield  {author} {\bibinfo {author} {\bibfnamefont {F.~M.}\ \bibnamefont {Sbahi}}, \bibinfo {author} {\bibfnamefont {A.~J.}\ \bibnamefont {Martinez}}, \bibinfo {author} {\bibfnamefont {S.}~\bibnamefont {Patel}}, \bibinfo {author} {\bibfnamefont {D.}~\bibnamefont {Saberi}}, \bibinfo {author} {\bibfnamefont {J.~H.}\ \bibnamefont {Yoo}}, \bibinfo {author} {\bibfnamefont {G.}~\bibnamefont {Roeder}},\ and\ \bibinfo {author} {\bibfnamefont {G.}~\bibnamefont {Verdon}},\ }\href {https://arxiv.org/abs/2206.04663v1} {\bibinfo {title} {Provably efficient variational generative modeling of quantum many-body systems via quantum-probabilistic information geometry}} (\bibinfo {year} {2022}),\ \Eprint {https://arxiv.org/abs/2206.04663v1} {arXiv:2206.04663v1 [quant-ph]} \BibitemShut {NoStop}%
\bibitem [{\citenamefont {Koczor}\ and\ \citenamefont {Benjamin}(2022)}]{Koczor2022qng_noisy}%
  \BibitemOpen
  \bibfield  {author} {\bibinfo {author} {\bibfnamefont {B.}~\bibnamefont {Koczor}}\ and\ \bibinfo {author} {\bibfnamefont {S.~C.}\ \bibnamefont {Benjamin}},\ }\bibfield  {title} {\bibinfo {title} {Quantum natural gradient generalized to noisy and nonunitary circuits},\ }\href {https://doi.org/10.1103/PhysRevA.106.062416} {\bibfield  {journal} {\bibinfo  {journal} {Physical Review A}\ }\textbf {\bibinfo {volume} {106}},\ \bibinfo {pages} {062416} (\bibinfo {year} {2022})}\BibitemShut {NoStop}%
\bibitem [{\citenamefont {Sohail}\ \emph {et~al.}(2024)\citenamefont {Sohail}, \citenamefont {Khoozani},\ and\ \citenamefont {Pradhan}}]{sohail2024quantumnaturalstochasticpairwise}%
  \BibitemOpen
  \bibfield  {author} {\bibinfo {author} {\bibfnamefont {M.~A.}\ \bibnamefont {Sohail}}, \bibinfo {author} {\bibfnamefont {M.~H.}\ \bibnamefont {Khoozani}},\ and\ \bibinfo {author} {\bibfnamefont {S.~S.}\ \bibnamefont {Pradhan}},\ }\href {https://arxiv.org/abs/2407.13858} {\bibinfo {title} {Quantum natural stochastic pairwise coordinate descent}} (\bibinfo {year} {2024}),\ \Eprint {https://arxiv.org/abs/2407.13858} {arXiv:2407.13858 [quant-ph]} \BibitemShut {NoStop}%
\bibitem [{\citenamefont {Patel}\ and\ \citenamefont {Wilde}(2024)}]{patel2024naturalgradientparameterestimation}%
  \BibitemOpen
  \bibfield  {author} {\bibinfo {author} {\bibfnamefont {D.}~\bibnamefont {Patel}}\ and\ \bibinfo {author} {\bibfnamefont {M.~M.}\ \bibnamefont {Wilde}},\ }\href {https://arxiv.org/abs/2410.24058} {\bibinfo {title} {Natural gradient and parameter estimation for quantum {B}oltzmann machines}} (\bibinfo {year} {2024}),\ \Eprint {https://arxiv.org/abs/2410.24058} {arXiv:2410.24058 [quant-ph]} \BibitemShut {NoStop}%
\bibitem [{\citenamefont {Minervini}\ \emph {et~al.}(2025)\citenamefont {Minervini}, \citenamefont {Patel},\ and\ \citenamefont {Wilde}}]{minervini2025eqbm}%
  \BibitemOpen
  \bibfield  {author} {\bibinfo {author} {\bibfnamefont {M.}~\bibnamefont {Minervini}}, \bibinfo {author} {\bibfnamefont {D.}~\bibnamefont {Patel}},\ and\ \bibinfo {author} {\bibfnamefont {M.~M.}\ \bibnamefont {Wilde}},\ }\href {https://arxiv.org/abs/2501.03367v2} {\bibinfo {title} {Evolved quantum {B}oltzmann machines}} (\bibinfo {year} {2025}),\ \Eprint {https://arxiv.org/abs/2501.03367v2} {arXiv:2501.03367v2 [quant-ph]} \BibitemShut {NoStop}%
\bibitem [{\citenamefont {Liu}\ \emph {et~al.}(2019)\citenamefont {Liu}, \citenamefont {Yuan}, \citenamefont {Lu},\ and\ \citenamefont {Wang}}]{Liu2019q_Fisher_multi_par_est}%
  \BibitemOpen
  \bibfield  {author} {\bibinfo {author} {\bibfnamefont {J.}~\bibnamefont {Liu}}, \bibinfo {author} {\bibfnamefont {H.}~\bibnamefont {Yuan}}, \bibinfo {author} {\bibfnamefont {X.-M.}\ \bibnamefont {Lu}},\ and\ \bibinfo {author} {\bibfnamefont {X.}~\bibnamefont {Wang}},\ }\bibfield  {title} {\bibinfo {title} {Quantum {F}isher information matrix and multiparameter estimation},\ }\href {https://doi.org/10.1088/1751-8121/ab5d4d} {\bibfield  {journal} {\bibinfo  {journal} {Journal of Physics A: Mathematical and Theoretical}\ }\textbf {\bibinfo {volume} {53}},\ \bibinfo {pages} {023001} (\bibinfo {year} {2019})}\BibitemShut {NoStop}%
\bibitem [{\citenamefont {Sidhu}\ and\ \citenamefont {Kok}(2020)}]{Sidhu2020}%
  \BibitemOpen
  \bibfield  {author} {\bibinfo {author} {\bibfnamefont {J.~S.}\ \bibnamefont {Sidhu}}\ and\ \bibinfo {author} {\bibfnamefont {P.}~\bibnamefont {Kok}},\ }\bibfield  {title} {\bibinfo {title} {Geometric perspective on quantum parameter estimation},\ }\href {https://doi.org/10.1116/1.5119961} {\bibfield  {journal} {\bibinfo  {journal} {AVS Quantum Science}\ }\textbf {\bibinfo {volume} {2}},\ \bibinfo {pages} {014701} (\bibinfo {year} {2020})}\BibitemShut {NoStop}%
\bibitem [{\citenamefont {Jarzyna}\ and\ \citenamefont {Kolodynski}(2020)}]{Jarzyna2020}%
  \BibitemOpen
  \bibfield  {author} {\bibinfo {author} {\bibfnamefont {M.}~\bibnamefont {Jarzyna}}\ and\ \bibinfo {author} {\bibfnamefont {J.}~\bibnamefont {Kolodynski}},\ }\bibfield  {title} {\bibinfo {title} {Geometric approach to quantum statistical inference},\ }\href {https://doi.org/10.1109/JSAIT.2020.3017469} {\bibfield  {journal} {\bibinfo  {journal} {IEEE Journal on Selected Areas in Information Theory}\ }\textbf {\bibinfo {volume} {1}},\ \bibinfo {pages} {367} (\bibinfo {year} {2020})}\BibitemShut {NoStop}%
\bibitem [{\citenamefont {Meyer}(2021)}]{Meyer2021fisher_info_NISQ}%
  \BibitemOpen
  \bibfield  {author} {\bibinfo {author} {\bibfnamefont {J.~J.}\ \bibnamefont {Meyer}},\ }\bibfield  {title} {\bibinfo {title} {Fisher information in noisy intermediate-scale quantum applications},\ }\href {https://doi.org/10.22331/q-2021-09-09-539} {\bibfield  {journal} {\bibinfo  {journal} {Quantum}\ }\textbf {\bibinfo {volume} {5}},\ \bibinfo {pages} {539} (\bibinfo {year} {2021})}\BibitemShut {NoStop}%
\bibitem [{\citenamefont {Scandi}\ \emph {et~al.}(2024)\citenamefont {Scandi}, \citenamefont {Abiuso}, \citenamefont {Surace},\ and\ \citenamefont {Santis}}]{scandi2024quantumfisherinformationdynamical}%
  \BibitemOpen
  \bibfield  {author} {\bibinfo {author} {\bibfnamefont {M.}~\bibnamefont {Scandi}}, \bibinfo {author} {\bibfnamefont {P.}~\bibnamefont {Abiuso}}, \bibinfo {author} {\bibfnamefont {J.}~\bibnamefont {Surace}},\ and\ \bibinfo {author} {\bibfnamefont {D.~D.}\ \bibnamefont {Santis}},\ }\href {https://arxiv.org/abs/2304.14984} {\bibinfo {title} {Quantum {F}isher information and its dynamical nature}} (\bibinfo {year} {2024}),\ \Eprint {https://arxiv.org/abs/2304.14984} {arXiv:2304.14984 [quant-ph]} \BibitemShut {NoStop}%
\bibitem [{\citenamefont {Yamamoto}(2019)}]{yamamoto2019naturalgradientvariationalquantum}%
  \BibitemOpen
  \bibfield  {author} {\bibinfo {author} {\bibfnamefont {N.}~\bibnamefont {Yamamoto}},\ }\href {https://arxiv.org/abs/1909.05074} {\bibinfo {title} {On the natural gradient for variational quantum eigensolver}} (\bibinfo {year} {2019}),\ \Eprint {https://arxiv.org/abs/1909.05074} {arXiv:1909.05074 [quant-ph]} \BibitemShut {NoStop}%
\bibitem [{\citenamefont {Meyer}\ \emph {et~al.}(2023)\citenamefont {Meyer}, \citenamefont {Scherer}, \citenamefont {Plinge}, \citenamefont {Mutschler},\ and\ \citenamefont {Hartmann}}]{meyer2023qnpg}%
  \BibitemOpen
  \bibfield  {author} {\bibinfo {author} {\bibfnamefont {N.}~\bibnamefont {Meyer}}, \bibinfo {author} {\bibfnamefont {D.~D.}\ \bibnamefont {Scherer}}, \bibinfo {author} {\bibfnamefont {A.}~\bibnamefont {Plinge}}, \bibinfo {author} {\bibfnamefont {C.}~\bibnamefont {Mutschler}},\ and\ \bibinfo {author} {\bibfnamefont {M.~J.}\ \bibnamefont {Hartmann}},\ }\bibfield  {title} {\bibinfo {title} {Quantum natural policy gradients: Towards sample-efficient reinforcement learning},\ }in\ \href {https://doi.org/10.1109/QCE57702.2023.10181} {\emph {\bibinfo {booktitle} {2023 IEEE International Conference on Quantum Computing and Engineering (QCE)}}},\ Vol.~\bibinfo {volume} {02}\ (\bibinfo {year} {2023})\ pp.\ \bibinfo {pages} {36--41}\BibitemShut {NoStop}%
\bibitem [{\citenamefont {Cerezo}\ \emph {et~al.}(2021{\natexlab{b}})\citenamefont {Cerezo}, \citenamefont {Sone}, \citenamefont {Beckey},\ and\ \citenamefont {Coles}}]{Cerezo2021sub_opt_Fisher}%
  \BibitemOpen
  \bibfield  {author} {\bibinfo {author} {\bibfnamefont {M.}~\bibnamefont {Cerezo}}, \bibinfo {author} {\bibfnamefont {A.}~\bibnamefont {Sone}}, \bibinfo {author} {\bibfnamefont {J.~L.}\ \bibnamefont {Beckey}},\ and\ \bibinfo {author} {\bibfnamefont {P.~J.}\ \bibnamefont {Coles}},\ }\bibfield  {title} {\bibinfo {title} {Sub-quantum {F}isher information},\ }\href {https://doi.org/10.1088/2058-9565/abfbef} {\bibfield  {journal} {\bibinfo  {journal} {Quantum Science and Technology}\ }\textbf {\bibinfo {volume} {6}},\ \bibinfo {pages} {035008} (\bibinfo {year} {2021}{\natexlab{b}})}\BibitemShut {NoStop}%
\bibitem [{\citenamefont {Gacon}\ \emph {et~al.}(2021)\citenamefont {Gacon}, \citenamefont {Zoufal}, \citenamefont {Carleo},\ and\ \citenamefont {Woerner}}]{Gacon2021stoch_approx_Fisher}%
  \BibitemOpen
  \bibfield  {author} {\bibinfo {author} {\bibfnamefont {J.}~\bibnamefont {Gacon}}, \bibinfo {author} {\bibfnamefont {C.}~\bibnamefont {Zoufal}}, \bibinfo {author} {\bibfnamefont {G.}~\bibnamefont {Carleo}},\ and\ \bibinfo {author} {\bibfnamefont {S.}~\bibnamefont {Woerner}},\ }\bibfield  {title} {\bibinfo {title} {Simultaneous perturbation stochastic approximation of the quantum {F}isher information},\ }\href {https://doi.org/10.22331/q-2021-10-20-567} {\bibfield  {journal} {\bibinfo  {journal} {Quantum}\ }\textbf {\bibinfo {volume} {5}},\ \bibinfo {pages} {567} (\bibinfo {year} {2021})}\BibitemShut {NoStop}%
\bibitem [{\citenamefont {Beckey}\ \emph {et~al.}(2022)\citenamefont {Beckey}, \citenamefont {Cerezo}, \citenamefont {Sone},\ and\ \citenamefont {Coles}}]{Beckey2022vqa_est_Fisher_info}%
  \BibitemOpen
  \bibfield  {author} {\bibinfo {author} {\bibfnamefont {J.~L.}\ \bibnamefont {Beckey}}, \bibinfo {author} {\bibfnamefont {M.}~\bibnamefont {Cerezo}}, \bibinfo {author} {\bibfnamefont {A.}~\bibnamefont {Sone}},\ and\ \bibinfo {author} {\bibfnamefont {P.~J.}\ \bibnamefont {Coles}},\ }\bibfield  {title} {\bibinfo {title} {Variational quantum algorithm for estimating the quantum {F}isher information},\ }\href {https://doi.org/10.1103/physrevresearch.4.013083} {\bibfield  {journal} {\bibinfo  {journal} {Physical Review Research}\ }\textbf {\bibinfo {volume} {4}},\ \bibinfo {pages} {013083} (\bibinfo {year} {2022})}\BibitemShut {NoStop}%
\bibitem [{\citenamefont {van Straaten}\ and\ \citenamefont {Koczor}(2021)}]{Straaten2021cost_nat_grad}%
  \BibitemOpen
  \bibfield  {author} {\bibinfo {author} {\bibfnamefont {B.}~\bibnamefont {van Straaten}}\ and\ \bibinfo {author} {\bibfnamefont {B.}~\bibnamefont {Koczor}},\ }\bibfield  {title} {\bibinfo {title} {Measurement cost of metric-aware variational quantum algorithms},\ }\href {https://doi.org/10.1103/PRXQuantum.2.030324} {\bibfield  {journal} {\bibinfo  {journal} {PRX Quantum}\ }\textbf {\bibinfo {volume} {2}},\ \bibinfo {pages} {030324} (\bibinfo {year} {2021})}\BibitemShut {NoStop}%
\bibitem [{\citenamefont {Fitzek}\ \emph {et~al.}(2024)\citenamefont {Fitzek}, \citenamefont {Jonsson}, \citenamefont {Dobrautz},\ and\ \citenamefont {Sch{\"{a}}fer}}]{Fitzek2024optimizing}%
  \BibitemOpen
  \bibfield  {author} {\bibinfo {author} {\bibfnamefont {D.}~\bibnamefont {Fitzek}}, \bibinfo {author} {\bibfnamefont {R.~S.}\ \bibnamefont {Jonsson}}, \bibinfo {author} {\bibfnamefont {W.}~\bibnamefont {Dobrautz}},\ and\ \bibinfo {author} {\bibfnamefont {C.}~\bibnamefont {Sch{\"{a}}fer}},\ }\bibfield  {title} {\bibinfo {title} {Optimizing variational quantum algorithms with q{B}ang: {E}fficiently interweaving metric and momentum to navigate flat energy landscapes},\ }\href {https://doi.org/10.22331/q-2024-04-09-1313} {\bibfield  {journal} {\bibinfo  {journal} {{Quantum}}\ }\textbf {\bibinfo {volume} {8}},\ \bibinfo {pages} {1313} (\bibinfo {year} {2024})}\BibitemShut {NoStop}%
\bibitem [{\citenamefont {Halla}(2025{\natexlab{a}})}]{halla2025estimationquantumfisherinformation}%
  \BibitemOpen
  \bibfield  {author} {\bibinfo {author} {\bibfnamefont {M.}~\bibnamefont {Halla}},\ }\href {https://arxiv.org/abs/2502.17231} {\bibinfo {title} {Estimation of quantum {F}isher information via {S}tein's identity in variational quantum algorithms}} (\bibinfo {year} {2025}{\natexlab{a}}),\ \Eprint {https://arxiv.org/abs/2502.17231} {arXiv:2502.17231 [quant-ph]} \BibitemShut {NoStop}%
\bibitem [{\citenamefont {Halla}(2025{\natexlab{b}})}]{halla2025modifiedconjugatequantumnatural}%
  \BibitemOpen
  \bibfield  {author} {\bibinfo {author} {\bibfnamefont {M.}~\bibnamefont {Halla}},\ }\href {https://arxiv.org/abs/2501.05847} {\bibinfo {title} {Modified conjugate quantum natural gradient}} (\bibinfo {year} {2025}{\natexlab{b}}),\ \Eprint {https://arxiv.org/abs/2501.05847} {arXiv:2501.05847 [quant-ph]} \BibitemShut {NoStop}%
\bibitem [{\citenamefont {Biamonte}\ \emph {et~al.}(2017)\citenamefont {Biamonte}, \citenamefont {Wittek}, \citenamefont {Pancotti}, \citenamefont {Rebentrost}, \citenamefont {Wiebe},\ and\ \citenamefont {Lloyd}}]{Biamonte2017qml}%
  \BibitemOpen
  \bibfield  {author} {\bibinfo {author} {\bibfnamefont {J.}~\bibnamefont {Biamonte}}, \bibinfo {author} {\bibfnamefont {P.}~\bibnamefont {Wittek}}, \bibinfo {author} {\bibfnamefont {N.}~\bibnamefont {Pancotti}}, \bibinfo {author} {\bibfnamefont {P.}~\bibnamefont {Rebentrost}}, \bibinfo {author} {\bibfnamefont {N.}~\bibnamefont {Wiebe}},\ and\ \bibinfo {author} {\bibfnamefont {S.}~\bibnamefont {Lloyd}},\ }\bibfield  {title} {\bibinfo {title} {Quantum machine learning},\ }\href {https://doi.org/10.1038/nature23474} {\bibfield  {journal} {\bibinfo  {journal} {Nature}\ }\textbf {\bibinfo {volume} {549}},\ \bibinfo {pages} {195} (\bibinfo {year} {2017})}\BibitemShut {NoStop}%
\bibitem [{\citenamefont {Brandao}\ and\ \citenamefont {Svore}(2017)}]{brandao2017quantumspeedupssemidefiniteprogramming}%
  \BibitemOpen
  \bibfield  {author} {\bibinfo {author} {\bibfnamefont {F.~G. S.~L.}\ \bibnamefont {Brandao}}\ and\ \bibinfo {author} {\bibfnamefont {K.}~\bibnamefont {Svore}},\ }\href {https://arxiv.org/abs/1609.05537} {\bibinfo {title} {Quantum speed-ups for semidefinite programming}} (\bibinfo {year} {2017}),\ \Eprint {https://arxiv.org/abs/1609.05537} {arXiv:1609.05537 [quant-ph]} \BibitemShut {NoStop}%
\bibitem [{\citenamefont {Dallaire-Demers}\ and\ \citenamefont {Killoran}(2018)}]{Dallaire_Demers_2018}%
  \BibitemOpen
  \bibfield  {author} {\bibinfo {author} {\bibfnamefont {P.-L.}\ \bibnamefont {Dallaire-Demers}}\ and\ \bibinfo {author} {\bibfnamefont {N.}~\bibnamefont {Killoran}},\ }\bibfield  {title} {\bibinfo {title} {Quantum generative adversarial networks},\ }\href {https://doi.org/10.1103/PhysRevA.98.012324} {\bibfield  {journal} {\bibinfo  {journal} {Physical Review A}\ }\textbf {\bibinfo {volume} {98}},\ \bibinfo {pages} {012324} (\bibinfo {year} {2018})}\BibitemShut {NoStop}%
\bibitem [{\citenamefont {Patel}\ \emph {et~al.}(2024{\natexlab{a}})\citenamefont {Patel}, \citenamefont {Coles},\ and\ \citenamefont {Wilde}}]{patel2024vqa_sdp}%
  \BibitemOpen
  \bibfield  {author} {\bibinfo {author} {\bibfnamefont {D.}~\bibnamefont {Patel}}, \bibinfo {author} {\bibfnamefont {P.~J.}\ \bibnamefont {Coles}},\ and\ \bibinfo {author} {\bibfnamefont {M.~M.}\ \bibnamefont {Wilde}},\ }\bibfield  {title} {\bibinfo {title} {Variational quantum algorithms for semidefinite programming},\ }\href {https://doi.org/10.22331/q-2024-06-17-1374} {\bibfield  {journal} {\bibinfo  {journal} {Quantum}\ }\textbf {\bibinfo {volume} {8}},\ \bibinfo {pages} {1374} (\bibinfo {year} {2024}{\natexlab{a}})}\BibitemShut {NoStop}%
\bibitem [{\citenamefont {Chen}\ \emph {et~al.}(2025)\citenamefont {Chen}, \citenamefont {Westerheim}, \citenamefont {Holmes}, \citenamefont {Luo}, \citenamefont {Nuradha}, \citenamefont {Patel}, \citenamefont {Rethinasamy}, \citenamefont {Wang},\ and\ \citenamefont {Wilde}}]{chen2023qslackslackvariableapproachvariational}%
  \BibitemOpen
  \bibfield  {author} {\bibinfo {author} {\bibfnamefont {J.}~\bibnamefont {Chen}}, \bibinfo {author} {\bibfnamefont {H.}~\bibnamefont {Westerheim}}, \bibinfo {author} {\bibfnamefont {Z.}~\bibnamefont {Holmes}}, \bibinfo {author} {\bibfnamefont {I.}~\bibnamefont {Luo}}, \bibinfo {author} {\bibfnamefont {T.}~\bibnamefont {Nuradha}}, \bibinfo {author} {\bibfnamefont {D.}~\bibnamefont {Patel}}, \bibinfo {author} {\bibfnamefont {S.}~\bibnamefont {Rethinasamy}}, \bibinfo {author} {\bibfnamefont {K.}~\bibnamefont {Wang}},\ and\ \bibinfo {author} {\bibfnamefont {M.~M.}\ \bibnamefont {Wilde}},\ }\bibfield  {title} {\bibinfo {title} {Slack-variable approach for variational quantum semidefinite programming},\ }\href {https://doi.org/10.1103/lwxq-4myj} {\bibfield  {journal} {\bibinfo  {journal} {Physical Review A}\ }\textbf {\bibinfo {volume} {112}},\ \bibinfo {pages} {022607} (\bibinfo {year} {2025})},\ \Eprint {https://arxiv.org/abs/2312.03830} {arXiv:2312.03830 [quant-ph]} \BibitemShut {NoStop}%
\bibitem [{\citenamefont {Liu}\ \emph {et~al.}(2025)\citenamefont {Liu}, \citenamefont {Minervini}, \citenamefont {Patel},\ and\ \citenamefont {Wilde}}]{liu2025quantumthermodynamicssemidefiniteoptimization}%
  \BibitemOpen
  \bibfield  {author} {\bibinfo {author} {\bibfnamefont {N.}~\bibnamefont {Liu}}, \bibinfo {author} {\bibfnamefont {M.}~\bibnamefont {Minervini}}, \bibinfo {author} {\bibfnamefont {D.}~\bibnamefont {Patel}},\ and\ \bibinfo {author} {\bibfnamefont {M.~M.}\ \bibnamefont {Wilde}},\ }\href {https://arxiv.org/abs/2505.04514} {\bibinfo {title} {Quantum thermodynamics and semi-definite optimization}} (\bibinfo {year} {2025}),\ \Eprint {https://arxiv.org/abs/2505.04514} {arXiv:2505.04514 [quant-ph]} \BibitemShut {NoStop}%
\bibitem [{\citenamefont {LaRose}\ and\ \citenamefont {Coyle}(2020)}]{LaRose2020data_encoding_qml}%
  \BibitemOpen
  \bibfield  {author} {\bibinfo {author} {\bibfnamefont {R.}~\bibnamefont {LaRose}}\ and\ \bibinfo {author} {\bibfnamefont {B.}~\bibnamefont {Coyle}},\ }\bibfield  {title} {\bibinfo {title} {Robust data encodings for quantum classifiers},\ }\href {https://doi.org/10.1103/physreva.102.032420} {\bibfield  {journal} {\bibinfo  {journal} {Physical Review A}\ }\textbf {\bibinfo {volume} {102}},\ \bibinfo {pages} {032420} (\bibinfo {year} {2020})}\BibitemShut {NoStop}%
\bibitem [{\citenamefont {Beer}\ \emph {et~al.}(2020)\citenamefont {Beer}, \citenamefont {Bondarenko}, \citenamefont {Farrelly}, \citenamefont {Osborne}, \citenamefont {Salzmann}, \citenamefont {Scheiermann},\ and\ \citenamefont {Wolf}}]{beer2020qnn}%
  \BibitemOpen
  \bibfield  {author} {\bibinfo {author} {\bibfnamefont {K.}~\bibnamefont {Beer}}, \bibinfo {author} {\bibfnamefont {D.}~\bibnamefont {Bondarenko}}, \bibinfo {author} {\bibfnamefont {T.}~\bibnamefont {Farrelly}}, \bibinfo {author} {\bibfnamefont {T.~J.}\ \bibnamefont {Osborne}}, \bibinfo {author} {\bibfnamefont {R.}~\bibnamefont {Salzmann}}, \bibinfo {author} {\bibfnamefont {D.}~\bibnamefont {Scheiermann}},\ and\ \bibinfo {author} {\bibfnamefont {R.}~\bibnamefont {Wolf}},\ }\bibfield  {title} {\bibinfo {title} {Training deep quantum neural networks},\ }\href {https://doi.org/https://doi.org/10.1038/s41467-020-14454-2} {\bibfield  {journal} {\bibinfo  {journal} {Nature Communications}\ }\textbf {\bibinfo {volume} {11}},\ \bibinfo {pages} {808} (\bibinfo {year} {2020})}\BibitemShut {NoStop}%
\bibitem [{\citenamefont {Cleve}\ \emph {et~al.}(1998)\citenamefont {Cleve}, \citenamefont {Ekert}, \citenamefont {Macchiavello},\ and\ \citenamefont {Mosca}}]{Cleve1998}%
  \BibitemOpen
  \bibfield  {author} {\bibinfo {author} {\bibfnamefont {R.}~\bibnamefont {Cleve}}, \bibinfo {author} {\bibfnamefont {A.}~\bibnamefont {Ekert}}, \bibinfo {author} {\bibfnamefont {C.}~\bibnamefont {Macchiavello}},\ and\ \bibinfo {author} {\bibfnamefont {M.}~\bibnamefont {Mosca}},\ }\bibfield  {title} {\bibinfo {title} {Quantum algorithms revisited},\ }\href {https://doi.org/10.1098/rspa.1998.0164} {\bibfield  {journal} {\bibinfo  {journal} {Proceedings of the Royal Society A}\ }\textbf {\bibinfo {volume} {454}},\ \bibinfo {pages} {339} (\bibinfo {year} {1998})}\BibitemShut {NoStop}%
\bibitem [{\citenamefont {Lloyd}(1996)}]{lloyd1996universal}%
  \BibitemOpen
  \bibfield  {author} {\bibinfo {author} {\bibfnamefont {S.}~\bibnamefont {Lloyd}},\ }\bibfield  {title} {\bibinfo {title} {Universal quantum simulators},\ }\href {https://doi.org/10.1126/science.273.5278.1073} {\bibfield  {journal} {\bibinfo  {journal} {Science}\ }\textbf {\bibinfo {volume} {273}},\ \bibinfo {pages} {1073} (\bibinfo {year} {1996})}\BibitemShut {NoStop}%
\bibitem [{\citenamefont {Childs}\ \emph {et~al.}(2018)\citenamefont {Childs}, \citenamefont {Maslov}, \citenamefont {Nam}, \citenamefont {Ross},\ and\ \citenamefont {Su}}]{childs2018toward}%
  \BibitemOpen
  \bibfield  {author} {\bibinfo {author} {\bibfnamefont {A.~M.}\ \bibnamefont {Childs}}, \bibinfo {author} {\bibfnamefont {D.}~\bibnamefont {Maslov}}, \bibinfo {author} {\bibfnamefont {Y.}~\bibnamefont {Nam}}, \bibinfo {author} {\bibfnamefont {N.~J.}\ \bibnamefont {Ross}},\ and\ \bibinfo {author} {\bibfnamefont {Y.}~\bibnamefont {Su}},\ }\bibfield  {title} {\bibinfo {title} {Toward the first quantum simulation with quantum speedup},\ }\href {https://doi.org/10.1073/pnas.1801723115} {\bibfield  {journal} {\bibinfo  {journal} {Proceedings of the National Academy of Sciences}\ }\textbf {\bibinfo {volume} {115}},\ \bibinfo {pages} {9456} (\bibinfo {year} {2018})}\BibitemShut {NoStop}%
\bibitem [{\citenamefont {Chen}\ \emph {et~al.}(2023{\natexlab{a}})\citenamefont {Chen}, \citenamefont {Kastoryano},\ and\ \citenamefont {Gilyén}}]{chen2023q_Gibbs_sampl}%
  \BibitemOpen
  \bibfield  {author} {\bibinfo {author} {\bibfnamefont {C.-F.}\ \bibnamefont {Chen}}, \bibinfo {author} {\bibfnamefont {M.~J.}\ \bibnamefont {Kastoryano}},\ and\ \bibinfo {author} {\bibfnamefont {A.}~\bibnamefont {Gilyén}},\ }\href {https://arxiv.org/abs/2311.09207} {\bibinfo {title} {An efficient and exact noncommutative quantum {G}ibbs sampler}} (\bibinfo {year} {2023}{\natexlab{a}}),\ \Eprint {https://arxiv.org/abs/2311.09207} {arXiv:2311.09207 [quant-ph]} \BibitemShut {NoStop}%
\bibitem [{\citenamefont {Chen}\ \emph {et~al.}(2023{\natexlab{b}})\citenamefont {Chen}, \citenamefont {Kastoryano}, \citenamefont {Brandão},\ and\ \citenamefont {Gilyén}}]{chen2023thermalstatepreparation}%
  \BibitemOpen
  \bibfield  {author} {\bibinfo {author} {\bibfnamefont {C.-F.}\ \bibnamefont {Chen}}, \bibinfo {author} {\bibfnamefont {M.~J.}\ \bibnamefont {Kastoryano}}, \bibinfo {author} {\bibfnamefont {F.~G. S.~L.}\ \bibnamefont {Brandão}},\ and\ \bibinfo {author} {\bibfnamefont {A.}~\bibnamefont {Gilyén}},\ }\href {https://arxiv.org/abs/2303.18224} {\bibinfo {title} {Quantum thermal state preparation}} (\bibinfo {year} {2023}{\natexlab{b}}),\ \Eprint {https://arxiv.org/abs/2303.18224} {arXiv:2303.18224 [quant-ph]} \BibitemShut {NoStop}%
\bibitem [{\citenamefont {Rajakumar}\ and\ \citenamefont {Watson}(2024)}]{rajakumar2024gibbssampling}%
  \BibitemOpen
  \bibfield  {author} {\bibinfo {author} {\bibfnamefont {J.}~\bibnamefont {Rajakumar}}\ and\ \bibinfo {author} {\bibfnamefont {J.~D.}\ \bibnamefont {Watson}},\ }\href {https://arxiv.org/abs/2408.01516} {\bibinfo {title} {Gibbs sampling gives quantum advantage at constant temperatures with {O}(1)-local {H}amiltonians}} (\bibinfo {year} {2024}),\ \Eprint {https://arxiv.org/abs/2408.01516} {arXiv:2408.01516 [quant-ph]} \BibitemShut {NoStop}%
\bibitem [{\citenamefont {Bergamaschi}\ \emph {et~al.}(2024)\citenamefont {Bergamaschi}, \citenamefont {Chen},\ and\ \citenamefont {Liu}}]{bergamaschi2024gibbs_sampling}%
  \BibitemOpen
  \bibfield  {author} {\bibinfo {author} {\bibfnamefont {T.}~\bibnamefont {Bergamaschi}}, \bibinfo {author} {\bibfnamefont {C.-F.}\ \bibnamefont {Chen}},\ and\ \bibinfo {author} {\bibfnamefont {Y.}~\bibnamefont {Liu}},\ }\href {https://arxiv.org/abs/2404.14639} {\bibinfo {title} {Quantum computational advantage with constant-temperature {G}ibbs sampling}} (\bibinfo {year} {2024}),\ \Eprint {https://arxiv.org/abs/2404.14639} {arXiv:2404.14639 [quant-ph]} \BibitemShut {NoStop}%
\bibitem [{\citenamefont {Chen}\ \emph {et~al.}(2024)\citenamefont {Chen}, \citenamefont {Li}, \citenamefont {Lu},\ and\ \citenamefont {Ying}}]{chen2024sim_Lindblad}%
  \BibitemOpen
  \bibfield  {author} {\bibinfo {author} {\bibfnamefont {H.}~\bibnamefont {Chen}}, \bibinfo {author} {\bibfnamefont {B.}~\bibnamefont {Li}}, \bibinfo {author} {\bibfnamefont {J.}~\bibnamefont {Lu}},\ and\ \bibinfo {author} {\bibfnamefont {L.}~\bibnamefont {Ying}},\ }\href {https://doi.org/10.48550/arXiv.2407.06594} {\bibinfo {title} {A randomized method for simulating {L}indblad equations and thermal state preparation}} (\bibinfo {year} {2024})\BibitemShut {NoStop}%
\bibitem [{\citenamefont {Rouzé}\ \emph {et~al.}(2024)\citenamefont {Rouzé}, \citenamefont {França},\ and\ \citenamefont {\'{A}lvaro M.~Alhambra}}]{rouzé2024efficientthermalization}%
  \BibitemOpen
  \bibfield  {author} {\bibinfo {author} {\bibfnamefont {C.}~\bibnamefont {Rouzé}}, \bibinfo {author} {\bibfnamefont {D.~S.}\ \bibnamefont {França}},\ and\ \bibinfo {author} {\bibnamefont {\'{A}lvaro M.~Alhambra}},\ }\href {https://arxiv.org/abs/2403.12691} {\bibinfo {title} {Efficient thermalization and universal quantum computing with quantum {G}ibbs samplers}} (\bibinfo {year} {2024}),\ \Eprint {https://arxiv.org/abs/2403.12691} {arXiv:2403.12691 [quant-ph]} \BibitemShut {NoStop}%
\bibitem [{\citenamefont {Bakshi}\ \emph {et~al.}(2024{\natexlab{a}})\citenamefont {Bakshi}, \citenamefont {Liu}, \citenamefont {Moitra},\ and\ \citenamefont {Tang}}]{bakshi2024hightemperaturegibbsstates}%
  \BibitemOpen
  \bibfield  {author} {\bibinfo {author} {\bibfnamefont {A.}~\bibnamefont {Bakshi}}, \bibinfo {author} {\bibfnamefont {A.}~\bibnamefont {Liu}}, \bibinfo {author} {\bibfnamefont {A.}~\bibnamefont {Moitra}},\ and\ \bibinfo {author} {\bibfnamefont {E.}~\bibnamefont {Tang}},\ }\href {https://arxiv.org/abs/2403.16850} {\bibinfo {title} {High-temperature {G}ibbs states are unentangled and efficiently preparable}} (\bibinfo {year} {2024}{\natexlab{a}}),\ \Eprint {https://arxiv.org/abs/2403.16850} {arXiv:2403.16850 [quant-ph]} \BibitemShut {NoStop}%
\bibitem [{\citenamefont {Ding}\ \emph {et~al.}(2024)\citenamefont {Ding}, \citenamefont {Li}, \citenamefont {Lin},\ and\ \citenamefont {Zhang}}]{ding2024preparationlowtemperaturegibbs}%
  \BibitemOpen
  \bibfield  {author} {\bibinfo {author} {\bibfnamefont {Z.}~\bibnamefont {Ding}}, \bibinfo {author} {\bibfnamefont {B.}~\bibnamefont {Li}}, \bibinfo {author} {\bibfnamefont {L.}~\bibnamefont {Lin}},\ and\ \bibinfo {author} {\bibfnamefont {R.}~\bibnamefont {Zhang}},\ }\href {https://arxiv.org/abs/2410.01206} {\bibinfo {title} {Polynomial-time preparation of low-temperature {G}ibbs states for {2D} toric code}} (\bibinfo {year} {2024}),\ \Eprint {https://arxiv.org/abs/2410.01206} {arXiv:2410.01206 [quant-ph]} \BibitemShut {NoStop}%
\bibitem [{\citenamefont {Sim}\ \emph {et~al.}(2019)\citenamefont {Sim}, \citenamefont {Johnson},\ and\ \citenamefont {Aspuru‐Guzik}}]{Sim2019ansatz_pqc}%
  \BibitemOpen
  \bibfield  {author} {\bibinfo {author} {\bibfnamefont {S.}~\bibnamefont {Sim}}, \bibinfo {author} {\bibfnamefont {P.~D.}\ \bibnamefont {Johnson}},\ and\ \bibinfo {author} {\bibfnamefont {A.}~\bibnamefont {Aspuru‐Guzik}},\ }\bibfield  {title} {\bibinfo {title} {Expressibility and entangling capability of parameterized quantum circuits for hybrid quantum‐classical algorithms},\ }\href {https://doi.org/10.1002/qute.201900070} {\bibfield  {journal} {\bibinfo  {journal} {Advanced Quantum Technologies}\ }\textbf {\bibinfo {volume} {2}},\ \bibinfo {pages} {1900070} (\bibinfo {year} {2019})}\BibitemShut {NoStop}%
\bibitem [{\citenamefont {Du}\ \emph {et~al.}(2020)\citenamefont {Du}, \citenamefont {Hsieh}, \citenamefont {Liu},\ and\ \citenamefont {Tao}}]{Du2020expressivity}%
  \BibitemOpen
  \bibfield  {author} {\bibinfo {author} {\bibfnamefont {Y.}~\bibnamefont {Du}}, \bibinfo {author} {\bibfnamefont {M.-H.}\ \bibnamefont {Hsieh}}, \bibinfo {author} {\bibfnamefont {T.}~\bibnamefont {Liu}},\ and\ \bibinfo {author} {\bibfnamefont {D.}~\bibnamefont {Tao}},\ }\bibfield  {title} {\bibinfo {title} {Expressive power of parametrized quantum circuits},\ }\href {https://doi.org/10.1103/PhysRevResearch.2.033125} {\bibfield  {journal} {\bibinfo  {journal} {Physical Review Research}\ }\textbf {\bibinfo {volume} {2}},\ \bibinfo {pages} {033125} (\bibinfo {year} {2020})}\BibitemShut {NoStop}%
\bibitem [{\citenamefont {Nakaji}\ and\ \citenamefont {Yamamoto}(2021)}]{Nakaji2021expressibility}%
  \BibitemOpen
  \bibfield  {author} {\bibinfo {author} {\bibfnamefont {K.}~\bibnamefont {Nakaji}}\ and\ \bibinfo {author} {\bibfnamefont {N.}~\bibnamefont {Yamamoto}},\ }\bibfield  {title} {\bibinfo {title} {Expressibility of the alternating layered ansatz for quantum computation},\ }\href {https://doi.org/10.22331/q-2021-04-19-434} {\bibfield  {journal} {\bibinfo  {journal} {{Quantum}}\ }\textbf {\bibinfo {volume} {5}},\ \bibinfo {pages} {434} (\bibinfo {year} {2021})}\BibitemShut {NoStop}%
\bibitem [{\citenamefont {Farhi}\ \emph {et~al.}(2014)\citenamefont {Farhi}, \citenamefont {Goldstone},\ and\ \citenamefont {Gutmann}}]{farhi2014qaoa}%
  \BibitemOpen
  \bibfield  {author} {\bibinfo {author} {\bibfnamefont {E.}~\bibnamefont {Farhi}}, \bibinfo {author} {\bibfnamefont {J.}~\bibnamefont {Goldstone}},\ and\ \bibinfo {author} {\bibfnamefont {S.}~\bibnamefont {Gutmann}},\ }\href {https://arxiv.org/abs/1411.4028} {\bibinfo {title} {A quantum approximate optimization algorithm}} (\bibinfo {year} {2014}),\ \Eprint {https://arxiv.org/abs/1411.4028} {arXiv:1411.4028 [quant-ph]} \BibitemShut {NoStop}%
\bibitem [{\citenamefont {Kandala}\ \emph {et~al.}(2017)\citenamefont {Kandala}, \citenamefont {Mezzacapo}, \citenamefont {Temme}, \citenamefont {Takita}, \citenamefont {Brink}, \citenamefont {Chow},\ and\ \citenamefont {Gambetta}}]{Kandala2017hardware_efficient_ansatz}%
  \BibitemOpen
  \bibfield  {author} {\bibinfo {author} {\bibfnamefont {A.}~\bibnamefont {Kandala}}, \bibinfo {author} {\bibfnamefont {A.}~\bibnamefont {Mezzacapo}}, \bibinfo {author} {\bibfnamefont {K.}~\bibnamefont {Temme}}, \bibinfo {author} {\bibfnamefont {M.}~\bibnamefont {Takita}}, \bibinfo {author} {\bibfnamefont {M.}~\bibnamefont {Brink}}, \bibinfo {author} {\bibfnamefont {J.~M.}\ \bibnamefont {Chow}},\ and\ \bibinfo {author} {\bibfnamefont {J.~M.}\ \bibnamefont {Gambetta}},\ }\bibfield  {title} {\bibinfo {title} {Hardware-efficient variational quantum eigensolver for small molecules and quantum magnets},\ }\href {https://doi.org/10.1038/nature23879} {\bibfield  {journal} {\bibinfo  {journal} {Nature}\ }\textbf {\bibinfo {volume} {549}},\ \bibinfo {pages} {242–246} (\bibinfo {year} {2017})}\BibitemShut {NoStop}%
\bibitem [{\citenamefont {Grimsley}\ \emph {et~al.}(2019)\citenamefont {Grimsley}, \citenamefont {Economou}, \citenamefont {Barnes},\ and\ \citenamefont {Mayhall}}]{Grimsley2019}%
  \BibitemOpen
  \bibfield  {author} {\bibinfo {author} {\bibfnamefont {H.~R.}\ \bibnamefont {Grimsley}}, \bibinfo {author} {\bibfnamefont {S.~E.}\ \bibnamefont {Economou}}, \bibinfo {author} {\bibfnamefont {E.}~\bibnamefont {Barnes}},\ and\ \bibinfo {author} {\bibfnamefont {N.~J.}\ \bibnamefont {Mayhall}},\ }\bibfield  {title} {\bibinfo {title} {An adaptive variational algorithm for exact molecular simulations on a quantum computer},\ }\href {https://doi.org/10.1038/s41467-019-10988-2} {\bibfield  {journal} {\bibinfo  {journal} {Nature Communications}\ }\textbf {\bibinfo {volume} {10}},\ \bibinfo {pages} {3007} (\bibinfo {year} {2019})}\BibitemShut {NoStop}%
\bibitem [{\citenamefont {Hadfield}\ \emph {et~al.}(2019)\citenamefont {Hadfield}, \citenamefont {Wang}, \citenamefont {O’Gorman}, \citenamefont {Rieffel}, \citenamefont {Venturelli},\ and\ \citenamefont {Biswas}}]{Hadfield2019qaoa}%
  \BibitemOpen
  \bibfield  {author} {\bibinfo {author} {\bibfnamefont {S.}~\bibnamefont {Hadfield}}, \bibinfo {author} {\bibfnamefont {Z.}~\bibnamefont {Wang}}, \bibinfo {author} {\bibfnamefont {B.}~\bibnamefont {O’Gorman}}, \bibinfo {author} {\bibfnamefont {E.~G.}\ \bibnamefont {Rieffel}}, \bibinfo {author} {\bibfnamefont {D.}~\bibnamefont {Venturelli}},\ and\ \bibinfo {author} {\bibfnamefont {R.}~\bibnamefont {Biswas}},\ }\bibfield  {title} {\bibinfo {title} {From the quantum approximate optimization algorithm to a quantum alternating operator ansatz},\ }\href {https://doi.org/10.3390/a12020034} {\bibfield  {journal} {\bibinfo  {journal} {Algorithms}\ }\textbf {\bibinfo {volume} {12}},\ \bibinfo {pages} {34} (\bibinfo {year} {2019})}\BibitemShut {NoStop}%
\bibitem [{\citenamefont {Wiersema}\ \emph {et~al.}(2020)\citenamefont {Wiersema}, \citenamefont {Zhou}, \citenamefont {de~Sereville}, \citenamefont {Carrasquilla}, \citenamefont {Kim},\ and\ \citenamefont {Yuen}}]{Wiersema2020ham_var_ansatz}%
  \BibitemOpen
  \bibfield  {author} {\bibinfo {author} {\bibfnamefont {R.}~\bibnamefont {Wiersema}}, \bibinfo {author} {\bibfnamefont {C.}~\bibnamefont {Zhou}}, \bibinfo {author} {\bibfnamefont {Y.}~\bibnamefont {de~Sereville}}, \bibinfo {author} {\bibfnamefont {J.~F.}\ \bibnamefont {Carrasquilla}}, \bibinfo {author} {\bibfnamefont {Y.~B.}\ \bibnamefont {Kim}},\ and\ \bibinfo {author} {\bibfnamefont {H.}~\bibnamefont {Yuen}},\ }\bibfield  {title} {\bibinfo {title} {Exploring entanglement and optimization within the {H}amiltonian variational ansatz},\ }\href {https://doi.org/10.1103/PRXQuantum.1.020319} {\bibfield  {journal} {\bibinfo  {journal} {PRX Quantum}\ }\textbf {\bibinfo {volume} {1}},\ \bibinfo {pages} {020319} (\bibinfo {year} {2020})}\BibitemShut {NoStop}%
\bibitem [{\citenamefont {Haah}\ \emph {et~al.}(2024)\citenamefont {Haah}, \citenamefont {Kothari},\ and\ \citenamefont {Tang}}]{Haah2024learning_Ham}%
  \BibitemOpen
  \bibfield  {author} {\bibinfo {author} {\bibfnamefont {J.}~\bibnamefont {Haah}}, \bibinfo {author} {\bibfnamefont {R.}~\bibnamefont {Kothari}},\ and\ \bibinfo {author} {\bibfnamefont {E.}~\bibnamefont {Tang}},\ }\bibfield  {title} {\bibinfo {title} {Learning quantum {H}amiltonians from high-temperature {G}ibbs states and real-time evolutions},\ }\href {https://doi.org/10.1038/s41567-023-02376-x} {\bibfield  {journal} {\bibinfo  {journal} {Nature Physics}\ }\textbf {\bibinfo {volume} {20}},\ \bibinfo {pages} {1027–1031} (\bibinfo {year} {2024})}\BibitemShut {NoStop}%
\bibitem [{\citenamefont {Gu}\ \emph {et~al.}(2024)\citenamefont {Gu}, \citenamefont {Cincio},\ and\ \citenamefont {Coles}}]{gu2024ham_learn}%
  \BibitemOpen
  \bibfield  {author} {\bibinfo {author} {\bibfnamefont {A.}~\bibnamefont {Gu}}, \bibinfo {author} {\bibfnamefont {L.}~\bibnamefont {Cincio}},\ and\ \bibinfo {author} {\bibfnamefont {P.~J.}\ \bibnamefont {Coles}},\ }\bibfield  {title} {\bibinfo {title} {Practical {H}amiltonian learning with unitary dynamics and {G}ibbs states},\ }\href {https://doi.org/10.1038/s41467-023-44008-1} {\bibfield  {journal} {\bibinfo  {journal} {Nature Communications}\ }\textbf {\bibinfo {volume} {15}},\ \bibinfo {pages} {312} (\bibinfo {year} {2024})}\BibitemShut {NoStop}%
\bibitem [{\citenamefont {Bakshi}\ \emph {et~al.}(2024{\natexlab{b}})\citenamefont {Bakshi}, \citenamefont {Liu}, \citenamefont {Moitra},\ and\ \citenamefont {Tang}}]{bakshi2024learning_ham}%
  \BibitemOpen
  \bibfield  {author} {\bibinfo {author} {\bibfnamefont {A.}~\bibnamefont {Bakshi}}, \bibinfo {author} {\bibfnamefont {A.}~\bibnamefont {Liu}}, \bibinfo {author} {\bibfnamefont {A.}~\bibnamefont {Moitra}},\ and\ \bibinfo {author} {\bibfnamefont {E.}~\bibnamefont {Tang}},\ }\bibfield  {title} {\bibinfo {title} {Learning quantum {H}amiltonians at any temperature in polynomial time},\ }in\ \href {https://hdl.handle.net/1721.1/155708} {\emph {\bibinfo {booktitle} {Proceedings of the 56th Annual ACM Symposium on Theory of Computing}}}\ (\bibinfo {year} {2024})\ pp.\ \bibinfo {pages} {1470--1477}\BibitemShut {NoStop}%
\bibitem [{\citenamefont {Gilyén}\ \emph {et~al.}(2019)\citenamefont {Gilyén}, \citenamefont {Su}, \citenamefont {Low},\ and\ \citenamefont {Wiebe}}]{Gilyen2019q_sing_value_tranf}%
  \BibitemOpen
  \bibfield  {author} {\bibinfo {author} {\bibfnamefont {A.}~\bibnamefont {Gilyén}}, \bibinfo {author} {\bibfnamefont {Y.}~\bibnamefont {Su}}, \bibinfo {author} {\bibfnamefont {G.~H.}\ \bibnamefont {Low}},\ and\ \bibinfo {author} {\bibfnamefont {N.}~\bibnamefont {Wiebe}},\ }\bibfield  {title} {\bibinfo {title} {Quantum singular value transformation and beyond: exponential improvements for quantum matrix arithmetics},\ }in\ \href {https://doi.org/10.1145/3313276.3316366} {\emph {\bibinfo {booktitle} {Proceedings of the 51st Annual ACM SIGACT Symposium on Theory of Computing}}}\ (\bibinfo  {publisher} {ACM},\ \bibinfo {year} {2019})\ p.\ \bibinfo {pages} {193–204}\BibitemShut {NoStop}%
\bibitem [{\citenamefont {Martyn}\ \emph {et~al.}(2021)\citenamefont {Martyn}, \citenamefont {Rossi}, \citenamefont {Tan},\ and\ \citenamefont {Chuang}}]{Martyn2021}%
  \BibitemOpen
  \bibfield  {author} {\bibinfo {author} {\bibfnamefont {J.~M.}\ \bibnamefont {Martyn}}, \bibinfo {author} {\bibfnamefont {Z.~M.}\ \bibnamefont {Rossi}}, \bibinfo {author} {\bibfnamefont {A.~K.}\ \bibnamefont {Tan}},\ and\ \bibinfo {author} {\bibfnamefont {I.~L.}\ \bibnamefont {Chuang}},\ }\bibfield  {title} {\bibinfo {title} {Grand unification of quantum algorithms},\ }\href {https://doi.org/10.1103/PRXQuantum.2.040203} {\bibfield  {journal} {\bibinfo  {journal} {PRX Quantum}\ }\textbf {\bibinfo {volume} {2}},\ \bibinfo {pages} {040203} (\bibinfo {year} {2021})}\BibitemShut {NoStop}%
\bibitem [{\citenamefont {Izmaylov}\ \emph {et~al.}(2021)\citenamefont {Izmaylov}, \citenamefont {Lang},\ and\ \citenamefont {Yen}}]{Izmaylov2021param_shift_rule_general}%
  \BibitemOpen
  \bibfield  {author} {\bibinfo {author} {\bibfnamefont {A.~F.}\ \bibnamefont {Izmaylov}}, \bibinfo {author} {\bibfnamefont {R.~A.}\ \bibnamefont {Lang}},\ and\ \bibinfo {author} {\bibfnamefont {T.-C.}\ \bibnamefont {Yen}},\ }\bibfield  {title} {\bibinfo {title} {Analytic gradients in variational quantum algorithms: Algebraic extensions of the parameter-shift rule to general unitary transformations},\ }\href {https://doi.org/10.1103/PhysRevA.104.062443} {\bibfield  {journal} {\bibinfo  {journal} {Physical Review A}\ }\textbf {\bibinfo {volume} {104}},\ \bibinfo {pages} {062443} (\bibinfo {year} {2021})}\BibitemShut {NoStop}%
\bibitem [{\citenamefont {Bengtsson}\ and\ \citenamefont {Zyczkowski}(2006)}]{Bengtsson2006}%
  \BibitemOpen
  \bibfield  {author} {\bibinfo {author} {\bibfnamefont {I.}~\bibnamefont {Bengtsson}}\ and\ \bibinfo {author} {\bibfnamefont {K.}~\bibnamefont {Zyczkowski}},\ }\href {https://doi.org/10.1017/CBO9780511535048} {\emph {\bibinfo {title} {Geometry of Quantum States: An Introduction to Quantum Entanglement}}}\ (\bibinfo  {publisher} {Cambridge University Press},\ \bibinfo {year} {2006})\BibitemShut {NoStop}%
\bibitem [{\citenamefont {Holevo}(1972)}]{Kholevo1972}%
  \BibitemOpen
  \bibfield  {author} {\bibinfo {author} {\bibfnamefont {A.~S.}\ \bibnamefont {Holevo}},\ }\bibfield  {title} {\bibinfo {title} {On quasiequivalence of locally normal states},\ }\href {https://doi.org/10.1007/BF01035528} {\bibfield  {journal} {\bibinfo  {journal} {Theoretical and Mathematical Physics}\ }\textbf {\bibinfo {volume} {13}},\ \bibinfo {pages} {1071} (\bibinfo {year} {1972})}\BibitemShut {NoStop}%
\bibitem [{\citenamefont {Uhlmann}(1976)}]{Uhl76}%
  \BibitemOpen
  \bibfield  {author} {\bibinfo {author} {\bibfnamefont {A.}~\bibnamefont {Uhlmann}},\ }\bibfield  {title} {\bibinfo {title} {The `transition probability' in the state space of a *-algebra},\ }\href {https://www.sciencedirect.com/science/article/pii/0034487776900604} {\bibfield  {journal} {\bibinfo  {journal} {Reports on Mathematical Physics}\ }\textbf {\bibinfo {volume} {9}},\ \bibinfo {pages} {273} (\bibinfo {year} {1976})}\BibitemShut {NoStop}%
\bibitem [{\citenamefont {Patel}\ \emph {et~al.}(2024{\natexlab{b}})\citenamefont {Patel}, \citenamefont {Koch}, \citenamefont {Patel},\ and\ \citenamefont {Wilde}}]{patel2024quantumboltzmannmachine}%
  \BibitemOpen
  \bibfield  {author} {\bibinfo {author} {\bibfnamefont {D.}~\bibnamefont {Patel}}, \bibinfo {author} {\bibfnamefont {D.}~\bibnamefont {Koch}}, \bibinfo {author} {\bibfnamefont {S.}~\bibnamefont {Patel}},\ and\ \bibinfo {author} {\bibfnamefont {M.~M.}\ \bibnamefont {Wilde}},\ }\href {https://arxiv.org/abs/2410.12935} {\bibinfo {title} {Quantum {B}oltzmann machine learning of ground-state energies}} (\bibinfo {year} {2024}{\natexlab{b}}),\ \Eprint {https://arxiv.org/abs/2410.12935} {arXiv:2410.12935 [quant-ph]} \BibitemShut {NoStop}%
\bibitem [{\citenamefont {Gibilisco}\ and\ \citenamefont {Isola}(2003)}]{Gibilisco_2003}%
  \BibitemOpen
  \bibfield  {author} {\bibinfo {author} {\bibfnamefont {P.}~\bibnamefont {Gibilisco}}\ and\ \bibinfo {author} {\bibfnamefont {T.}~\bibnamefont {Isola}},\ }\bibfield  {title} {\bibinfo {title} {Wigner–{Y}anase information on quantum state space: The geometric approach},\ }\href {https://doi.org/10.1063/1.1598279} {\bibfield  {journal} {\bibinfo  {journal} {Journal of Mathematical Physics}\ }\textbf {\bibinfo {volume} {44}},\ \bibinfo {pages} {3752–3762} (\bibinfo {year} {2003})}\BibitemShut {NoStop}%
\bibitem [{\citenamefont {Hansen}(2008)}]{Hansen_2008}%
  \BibitemOpen
  \bibfield  {author} {\bibinfo {author} {\bibfnamefont {F.}~\bibnamefont {Hansen}},\ }\bibfield  {title} {\bibinfo {title} {Metric adjusted skew information},\ }\href {https://doi.org/10.1073/pnas.0803323105} {\bibfield  {journal} {\bibinfo  {journal} {Proceedings of the National Academy of Sciences}\ }\textbf {\bibinfo {volume} {105}},\ \bibinfo {pages} {9909–9916} (\bibinfo {year} {2008})}\BibitemShut {NoStop}%
\bibitem [{\citenamefont {Sung}\ \emph {et~al.}(2020)\citenamefont {Sung}, \citenamefont {Yao}, \citenamefont {Harrigan}, \citenamefont {Rubin}, \citenamefont {Jiang}, \citenamefont {Lin}, \citenamefont {Babbush},\ and\ \citenamefont {McClean}}]{Sung2020optimizersvqa}%
  \BibitemOpen
  \bibfield  {author} {\bibinfo {author} {\bibfnamefont {K.~J.}\ \bibnamefont {Sung}}, \bibinfo {author} {\bibfnamefont {J.}~\bibnamefont {Yao}}, \bibinfo {author} {\bibfnamefont {M.~P.}\ \bibnamefont {Harrigan}}, \bibinfo {author} {\bibfnamefont {N.~C.}\ \bibnamefont {Rubin}}, \bibinfo {author} {\bibfnamefont {Z.}~\bibnamefont {Jiang}}, \bibinfo {author} {\bibfnamefont {L.}~\bibnamefont {Lin}}, \bibinfo {author} {\bibfnamefont {R.}~\bibnamefont {Babbush}},\ and\ \bibinfo {author} {\bibfnamefont {J.~R.}\ \bibnamefont {McClean}},\ }\bibfield  {title} {\bibinfo {title} {Using models to improve optimizers for variational quantum algorithms},\ }\href {https://doi.org/10.1088/2058-9565/abb6d9} {\bibfield  {journal} {\bibinfo  {journal} {Quantum Science and Technology}\ }\textbf {\bibinfo {volume} {5}},\ \bibinfo {pages} {044008} (\bibinfo {year} {2020})}\BibitemShut {NoStop}%
\bibitem [{\citenamefont {Pellow-Jarman}\ \emph {et~al.}(2021)\citenamefont {Pellow-Jarman}, \citenamefont {Sinayskiy}, \citenamefont {Pillay},\ and\ \citenamefont {Petruccione}}]{Pellow_Jarman2021opt_pqc}%
  \BibitemOpen
  \bibfield  {author} {\bibinfo {author} {\bibfnamefont {A.}~\bibnamefont {Pellow-Jarman}}, \bibinfo {author} {\bibfnamefont {I.}~\bibnamefont {Sinayskiy}}, \bibinfo {author} {\bibfnamefont {A.}~\bibnamefont {Pillay}},\ and\ \bibinfo {author} {\bibfnamefont {F.}~\bibnamefont {Petruccione}},\ }\bibfield  {title} {\bibinfo {title} {A comparison of various classical optimizers for a variational quantum linear solver},\ }\href {https://doi.org/10.1007/s11128-021-03140-x} {\bibfield  {journal} {\bibinfo  {journal} {Quantum Information Processing}\ }\textbf {\bibinfo {volume} {20}},\ \bibinfo {pages} {202} (\bibinfo {year} {2021})}\BibitemShut {NoStop}%
\bibitem [{\citenamefont {Nelder}\ and\ \citenamefont {Mead}(1965)}]{Nelder1965nelder_mead}%
  \BibitemOpen
  \bibfield  {author} {\bibinfo {author} {\bibfnamefont {J.~A.}\ \bibnamefont {Nelder}}\ and\ \bibinfo {author} {\bibfnamefont {R.}~\bibnamefont {Mead}},\ }\bibfield  {title} {\bibinfo {title} {A simplex method for function minimization},\ }\href {https://doi.org/10.1093/comjnl/7.4.308} {\bibfield  {journal} {\bibinfo  {journal} {The Computer Journal}\ }\textbf {\bibinfo {volume} {7}},\ \bibinfo {pages} {308} (\bibinfo {year} {1965})}\BibitemShut {NoStop}%
\bibitem [{\citenamefont {Powell}(1964)}]{Powell1964powell_method}%
  \BibitemOpen
  \bibfield  {author} {\bibinfo {author} {\bibfnamefont {M.~J.~D.}\ \bibnamefont {Powell}},\ }\bibfield  {title} {\bibinfo {title} {An efficient method for finding the minimum of a function of several variables without calculating derivatives},\ }\href {https://doi.org/10.1093/comjnl/7.2.155} {\bibfield  {journal} {\bibinfo  {journal} {The Computer Journal}\ }\textbf {\bibinfo {volume} {7}},\ \bibinfo {pages} {155} (\bibinfo {year} {1964})}\BibitemShut {NoStop}%
\bibitem [{\citenamefont {Ruder}(2017)}]{ruder2017overviewgradientdescentoptimization}%
  \BibitemOpen
  \bibfield  {author} {\bibinfo {author} {\bibfnamefont {S.}~\bibnamefont {Ruder}},\ }\href {https://arxiv.org/abs/1609.04747} {\bibinfo {title} {An overview of gradient descent optimization algorithms}} (\bibinfo {year} {2017}),\ \Eprint {https://arxiv.org/abs/1609.04747} {arXiv:1609.04747 [cs.LG]} \BibitemShut {NoStop}%
\bibitem [{\citenamefont {Kingma}\ and\ \citenamefont {Ba}(2017)}]{kingma2017adam}%
  \BibitemOpen
  \bibfield  {author} {\bibinfo {author} {\bibfnamefont {D.~P.}\ \bibnamefont {Kingma}}\ and\ \bibinfo {author} {\bibfnamefont {J.}~\bibnamefont {Ba}},\ }\href {https://arxiv.org/abs/1412.6980} {\bibinfo {title} {Adam: A method for stochastic optimization}} (\bibinfo {year} {2017}),\ \Eprint {https://arxiv.org/abs/1412.6980} {arXiv:1412.6980 [cs.LG]} \BibitemShut {NoStop}%
\bibitem [{\citenamefont {Schuld}\ \emph {et~al.}(2019)\citenamefont {Schuld}, \citenamefont {Bergholm}, \citenamefont {Gogolin}, \citenamefont {Izaac},\ and\ \citenamefont {Killoran}}]{Schuld2019gradient_evaluation}%
  \BibitemOpen
  \bibfield  {author} {\bibinfo {author} {\bibfnamefont {M.}~\bibnamefont {Schuld}}, \bibinfo {author} {\bibfnamefont {V.}~\bibnamefont {Bergholm}}, \bibinfo {author} {\bibfnamefont {C.}~\bibnamefont {Gogolin}}, \bibinfo {author} {\bibfnamefont {J.}~\bibnamefont {Izaac}},\ and\ \bibinfo {author} {\bibfnamefont {N.}~\bibnamefont {Killoran}},\ }\bibfield  {title} {\bibinfo {title} {Evaluating analytic gradients on quantum hardware},\ }\href {https://doi.org/10.1103/PhysRevA.99.032331} {\bibfield  {journal} {\bibinfo  {journal} {Physical Review A}\ }\textbf {\bibinfo {volume} {99}},\ \bibinfo {pages} {032331} (\bibinfo {year} {2019})}\BibitemShut {NoStop}%
\bibitem [{\citenamefont {Hubregtsen}\ \emph {et~al.}(2022)\citenamefont {Hubregtsen}, \citenamefont {Wilde}, \citenamefont {Qasim},\ and\ \citenamefont {Eisert}}]{Hubregtsen2022grad_pqc}%
  \BibitemOpen
  \bibfield  {author} {\bibinfo {author} {\bibfnamefont {T.}~\bibnamefont {Hubregtsen}}, \bibinfo {author} {\bibfnamefont {F.}~\bibnamefont {Wilde}}, \bibinfo {author} {\bibfnamefont {S.}~\bibnamefont {Qasim}},\ and\ \bibinfo {author} {\bibfnamefont {J.}~\bibnamefont {Eisert}},\ }\bibfield  {title} {\bibinfo {title} {Single-component gradient rules for variational quantum algorithms},\ }\href {https://doi.org/10.1088/2058-9565/ac6824} {\bibfield  {journal} {\bibinfo  {journal} {Quantum Science and Technology}\ }\textbf {\bibinfo {volume} {7}},\ \bibinfo {pages} {035008} (\bibinfo {year} {2022})}\BibitemShut {NoStop}%
\bibitem [{\citenamefont {Wierichs}\ \emph {et~al.}(2022)\citenamefont {Wierichs}, \citenamefont {Izaac}, \citenamefont {Wang},\ and\ \citenamefont {Lin}}]{Wierichs2022generalparameter}%
  \BibitemOpen
  \bibfield  {author} {\bibinfo {author} {\bibfnamefont {D.}~\bibnamefont {Wierichs}}, \bibinfo {author} {\bibfnamefont {J.}~\bibnamefont {Izaac}}, \bibinfo {author} {\bibfnamefont {C.}~\bibnamefont {Wang}},\ and\ \bibinfo {author} {\bibfnamefont {C.~Y.-Y.}\ \bibnamefont {Lin}},\ }\bibfield  {title} {\bibinfo {title} {General parameter-shift rules for quantum gradients},\ }\href {https://doi.org/10.22331/q-2022-03-30-677} {\bibfield  {journal} {\bibinfo  {journal} {{Quantum}}\ }\textbf {\bibinfo {volume} {6}},\ \bibinfo {pages} {677} (\bibinfo {year} {2022})}\BibitemShut {NoStop}%
\bibitem [{\citenamefont {Banchi}\ and\ \citenamefont {Crooks}(2021)}]{Banchi2021stoch_par_shift_rule}%
  \BibitemOpen
  \bibfield  {author} {\bibinfo {author} {\bibfnamefont {L.}~\bibnamefont {Banchi}}\ and\ \bibinfo {author} {\bibfnamefont {G.~E.}\ \bibnamefont {Crooks}},\ }\bibfield  {title} {\bibinfo {title} {Measuring analytic gradients of general quantum evolution with the stochastic parameter shift rule},\ }\href {https://doi.org/10.22331/q-2021-01-25-386} {\bibfield  {journal} {\bibinfo  {journal} {Quantum}\ }\textbf {\bibinfo {volume} {5}},\ \bibinfo {pages} {386} (\bibinfo {year} {2021})}\BibitemShut {NoStop}%
\bibitem [{\citenamefont {Romero}\ \emph {et~al.}(2018)\citenamefont {Romero}, \citenamefont {Babbush}, \citenamefont {McClean}, \citenamefont {Hempel}, \citenamefont {Love},\ and\ \citenamefont {Aspuru-Guzik}}]{romero2018}%
  \BibitemOpen
  \bibfield  {author} {\bibinfo {author} {\bibfnamefont {J.}~\bibnamefont {Romero}}, \bibinfo {author} {\bibfnamefont {R.}~\bibnamefont {Babbush}}, \bibinfo {author} {\bibfnamefont {J.~R.}\ \bibnamefont {McClean}}, \bibinfo {author} {\bibfnamefont {C.}~\bibnamefont {Hempel}}, \bibinfo {author} {\bibfnamefont {P.~J.}\ \bibnamefont {Love}},\ and\ \bibinfo {author} {\bibfnamefont {A.}~\bibnamefont {Aspuru-Guzik}},\ }\bibfield  {title} {\bibinfo {title} {Strategies for quantum computing molecular energies using the unitary coupled cluster ansatz},\ }\href {https://doi.org/10.1088/2058-9565/aad3e4} {\bibfield  {journal} {\bibinfo  {journal} {Quantum Science and Technology}\ }\textbf {\bibinfo {volume} {4}},\ \bibinfo {pages} {014008} (\bibinfo {year} {2018})}\BibitemShut {NoStop}%
\bibitem [{\citenamefont {Li}\ \emph {et~al.}(2024)\citenamefont {Li}, \citenamefont {Dulal}, \citenamefont {Ohorodnikov}, \citenamefont {Wang},\ and\ \citenamefont {Ding}}]{li2024efficientquantumgradienthigherorder}%
  \BibitemOpen
  \bibfield  {author} {\bibinfo {author} {\bibfnamefont {D.}~\bibnamefont {Li}}, \bibinfo {author} {\bibfnamefont {D.}~\bibnamefont {Dulal}}, \bibinfo {author} {\bibfnamefont {M.}~\bibnamefont {Ohorodnikov}}, \bibinfo {author} {\bibfnamefont {H.}~\bibnamefont {Wang}},\ and\ \bibinfo {author} {\bibfnamefont {Y.}~\bibnamefont {Ding}},\ }\href {https://arxiv.org/abs/2408.05406} {\bibinfo {title} {Efficient quantum gradient and higher-order derivative estimation via generalized {H}adamard {T}est}} (\bibinfo {year} {2024}),\ \Eprint {https://arxiv.org/abs/2408.05406} {arXiv:2408.05406 [quant-ph]} \BibitemShut {NoStop}%
\bibitem [{\citenamefont {Coopmans}\ and\ \citenamefont {Benedetti}(2024)}]{coopmans2024qbm}%
  \BibitemOpen
  \bibfield  {author} {\bibinfo {author} {\bibfnamefont {L.}~\bibnamefont {Coopmans}}\ and\ \bibinfo {author} {\bibfnamefont {M.}~\bibnamefont {Benedetti}},\ }\bibfield  {title} {\bibinfo {title} {On the sample complexity of quantum {B}oltzmann machine learning},\ }\href {http://dx.doi.org/10.1038/s42005-024-01763-x} {\bibfield  {journal} {\bibinfo  {journal} {Communications Physics}\ }\textbf {\bibinfo {volume} {7}},\ \bibinfo {pages} {274} (\bibinfo {year} {2024})}\BibitemShut {NoStop}%
\bibitem [{\citenamefont {Liu}\ \emph {et~al.}(2016)\citenamefont {Liu}, \citenamefont {Chen}, \citenamefont {Jing},\ and\ \citenamefont {Wang}}]{liu2016}%
  \BibitemOpen
  \bibfield  {author} {\bibinfo {author} {\bibfnamefont {J.}~\bibnamefont {Liu}}, \bibinfo {author} {\bibfnamefont {J.}~\bibnamefont {Chen}}, \bibinfo {author} {\bibfnamefont {X.-X.}\ \bibnamefont {Jing}},\ and\ \bibinfo {author} {\bibfnamefont {X.}~\bibnamefont {Wang}},\ }\bibfield  {title} {\bibinfo {title} {Quantum {F}isher information and symmetric logarithmic derivative via anti-commutators},\ }\href {https://doi.org/10.1088/1751-8113/49/27/275302} {\bibfield  {journal} {\bibinfo  {journal} {Journal of Physics A: Mathematical and Theoretical}\ }\textbf {\bibinfo {volume} {49}},\ \bibinfo {pages} {275302} (\bibinfo {year} {2016})}\BibitemShut {NoStop}%
\bibitem [{\citenamefont {Rezakhani}\ \emph {et~al.}(2019)\citenamefont {Rezakhani}, \citenamefont {Hassani},\ and\ \citenamefont {Alipour}}]{rezakhani2019}%
  \BibitemOpen
  \bibfield  {author} {\bibinfo {author} {\bibfnamefont {A.~T.}\ \bibnamefont {Rezakhani}}, \bibinfo {author} {\bibfnamefont {M.}~\bibnamefont {Hassani}},\ and\ \bibinfo {author} {\bibfnamefont {S.}~\bibnamefont {Alipour}},\ }\bibfield  {title} {\bibinfo {title} {Continuity of the quantum {F}isher information},\ }\href {https://doi.org/10.1103/PhysRevA.100.032317} {\bibfield  {journal} {\bibinfo  {journal} {Phys. Rev. A}\ }\textbf {\bibinfo {volume} {100}},\ \bibinfo {pages} {032317} (\bibinfo {year} {2019})}\BibitemShut {NoStop}%
\bibitem [{\citenamefont {{NISO}}()}]{CRediT}%
  \BibitemOpen
  \bibfield  {author} {\bibinfo {author} {\bibnamefont {{NISO}}},\ }\href@noop {} {\bibinfo {title} {Credit -- contributor roles taxonomy}},\ \bibinfo {howpublished} {\url{https://credit.niso.org/}},\ \bibinfo {note} {accessed: 2024-10-28}\BibitemShut {NoStop}%
\bibitem [{\citenamefont {Kottmann}\ and\ \citenamefont {Killoran}(2023)}]{kottmann2023evaluatinganalyticgradientspulse}%
  \BibitemOpen
  \bibfield  {author} {\bibinfo {author} {\bibfnamefont {K.}~\bibnamefont {Kottmann}}\ and\ \bibinfo {author} {\bibfnamefont {N.}~\bibnamefont {Killoran}},\ }\href {https://arxiv.org/abs/2309.16756} {\bibinfo {title} {Evaluating analytic gradients of pulse programs on quantum computers}} (\bibinfo {year} {2023}),\ \Eprint {https://arxiv.org/abs/2309.16756} {arXiv:2309.16756 [quant-ph]} \BibitemShut {NoStop}%
\bibitem [{\citenamefont {Suzuki}(1976)}]{Suzuki1976}%
  \BibitemOpen
  \bibfield  {author} {\bibinfo {author} {\bibfnamefont {M.}~\bibnamefont {Suzuki}},\ }\bibfield  {title} {\bibinfo {title} {Generalized {T}rotter’s formula and systematic approximants of exponential operators and inner derivations with applications to many-body problems},\ }\href {https://doi.org/10.1007/bf01609348} {\bibfield  {journal} {\bibinfo  {journal} {Communications in Mathematical Physics}\ }\textbf {\bibinfo {volume} {51}},\ \bibinfo {pages} {183} (\bibinfo {year} {1976})}\BibitemShut {NoStop}%
\bibitem [{\citenamefont {Wilde}(2017)}]{Wbook17}%
  \BibitemOpen
  \bibfield  {author} {\bibinfo {author} {\bibfnamefont {M.~M.}\ \bibnamefont {Wilde}},\ }\href {https://doi.org/10.1017/CBO9781139525343} {\emph {\bibinfo {title} {Quantum Information Theory}}},\ \bibinfo {edition} {2nd}\ ed.\ (\bibinfo  {publisher} {Cambridge University Press},\ \bibinfo {year} {2017})\ \Eprint {https://arxiv.org/abs/1106.1445} {arXiv:1106.1445} \BibitemShut {NoStop}%
\end{thebibliography}%

\appendix
\onecolumngrid 

\section{Gradient of PQCs (Proof of Lemma~\ref{lem:part_der})}
\label{app:grad_pqc}

Using the shorthand $\partial_{j}\equiv\frac{\partial}{\partial\phi_{j}}$, consider that 
\begin{align}
    \partial_j \rho(\phi) &= (\partial_j U(\phi)) \rho U^\dagger(\phi) +U(\phi) \rho \ (\partial_j U^\dagger(\phi))\\
    &= -i U_{L_{j+1}} H_j U_{R_j} \rho U^\dagger(\phi) + i U(\phi) \rho U^\dagger_{R_j} H_j U^\dagger_{L_{j+1}}\\
    &= -i U_{L_{j+1}} H_j U^\dagger_{L_{j+1}} U_{L_{j+1}} U_{R_j} \rho U^\dagger(\phi) + i U(\phi) \rho U^\dagger_{R_j} U^\dagger_{L_{j+1}} U_{L_{j+1}} H_j U^\dagger_{L_{j+1}}\\
    &= -i U_{L_{j+1}} H_j U^\dagger_{L_{j+1}} U(\phi) \rho U^\dagger(\phi)  + i U(\phi) \rho U^\dagger(\phi) U_{L_{j+1}} H_j U^\dagger_{L_{j+1}}\\
    & = i \left[ U(\phi) \rho U^\dagger(\phi) , U_{L_{j+1}} H_j U^\dagger_{L_{j+1}} \right]\\
    & = i \left[ \rho(\phi) , U_{L_{j+1}} H_j U^\dagger_{L_{j+1}} \right]\\
    & =  i \left[ \rho(\phi) , \mathcal{U}_{L_{j+1}}\!(H_j) \right], \label{step:grad_rho-2}
\end{align}
where in~\eqref{step:grad_rho-2} we have used the shorthand introduced in~\eqref{eq:q_channel_left}.

\section{Fisher--Bures information matrix elements}\label{app:FB}

\subsection{Proof of Theorem~\ref{thm:FB}}\label{app:FB-proof}
The derivation follows a similar approach to that of \cite[Theorem~11]{minervini2025eqbm}, but with some distinctions. Therefore, we provide a detailed proof below. Using the result from Lemma~\ref{lem:matr_part_der}, consider that
\begin{align}
    I^{\operatorname{FB}}_{ij}(\phi) &= \sum_{k,\ell}\frac{2}{\lambda_{k}+\lambda_{\ell}} \langle {k}| \partial_{i}\rho(\phi) |\ell\rangle \langle\ell| \partial_{j}\rho(\phi) |{k}\rangle\\
    & = \sum_{k,\ell}\frac{2}{\lambda_{k}+\lambda_{\ell}} \left( i(\lambda_k - \lambda_\ell) \langle \tilde{k}| \mathcal{U}^\dag_{R_i}\!(H_i) |\tilde{\ell}\rangle \right) \left( i(\lambda_\ell - \lambda_k) \langle\tilde{\ell}| \mathcal{U}^\dag_{R_j}\!(H_j) |\tilde{k}\rangle \right)\\
    & = 2 \sum_{k,\ell}\frac{(\lambda_k - \lambda_\ell)^2}{\lambda_{k}+\lambda_{\ell}} \langle \tilde{k}| \mathcal{U}^\dag_{R_i}\!(H_i) |\tilde{\ell}\rangle \langle\tilde{\ell}| \mathcal{U}^\dag_{R_j}\!(H_j) |\tilde{k}\rangle\\
    & = 2 \sum_{k,\ell}\frac{(\lambda_k + \lambda_\ell)^2 - 4\lambda_k \lambda_\ell}{\lambda_{k}+\lambda_{\ell}} \langle \tilde{k}| \mathcal{U}^\dag_{R_i}\!(H_i) |\tilde{\ell}\rangle \langle\tilde{\ell}| \mathcal{U}^\dag_{R_j}\!(H_j) |\tilde{k}\rangle\\
    \begin{split}
    & = 2 \sum_{k,\ell}(\lambda_k + \lambda_\ell) \langle \tilde{k}| \mathcal{U}^\dag_{R_i}\!(H_i) |\tilde{\ell}\rangle \langle\tilde{\ell}| \mathcal{U}^\dag_{R_j}\!(H_j) |\tilde{k}\rangle \\
    & \hspace{1cm} - 8 \sum_{k,\ell}\frac{\lambda_k \lambda_\ell}{\lambda_{k}+\lambda_{\ell}} \langle \tilde{k}| \mathcal{U}^\dag_{R_i}\!(H_i) |\tilde{\ell}\rangle \langle\tilde{\ell}| \mathcal{U}^\dag_{R_j}\!(H_j) |\tilde{k}\rangle
    \end{split}\\
    \begin{split}& = 2 \sum_{k,\ell}\lambda_k \langle \tilde{k}| \mathcal{U}^\dag_{R_i}\!(H_i) |\tilde{\ell}\rangle \langle\tilde{\ell}| \mathcal{U}^\dag_{R_j}\!(H_j) |\tilde{k}\rangle + 2 \sum_{k,\ell}\lambda_\ell \langle \tilde{k}| \mathcal{U}^\dag_{R_i}\!(H_i) |\tilde{\ell}\rangle \langle\tilde{\ell}| \mathcal{U}^\dag_{R_j}\!(H_j) |\tilde{k}\rangle\\
    & \hspace{1cm} - 8 \sum_{k,\ell}\frac{\lambda_k \lambda_\ell}{\lambda_{k}+\lambda_{\ell}} \langle \tilde{k}| \mathcal{U}^\dag_{R_i}\!(H_i) |\tilde{\ell}\rangle \langle\tilde{\ell}| \mathcal{U}^\dag_{R_j}\!(H_j) |\tilde{k}\rangle
    \end{split}\\
    \begin{split}& = 2 \Tr\!\left[ \mathcal{U}^\dag_{R_j}\!(H_j) \rho \mathcal{U}^\dag_{R_i}\!(H_i) \right] + 2 \Tr\!\left[ \mathcal{U}^\dag_{R_i}\!(H_i) \rho \mathcal{U}^\dag_{R_j}\!(H_j) \right]\\
    & \hspace{1cm} - 8 \sum_{k,\ell}\frac{\lambda_k \lambda_\ell}{\lambda_{k}+\lambda_{\ell}} \langle \tilde{k}| \mathcal{U}^\dag_{R_i}\!(H_i) |\tilde{\ell}\rangle \langle\tilde{\ell}| \mathcal{U}^\dag_{R_j}\!(H_j) |\tilde{k}\rangle
    \end{split}\\
    \begin{split}& = 2 \Tr\!\left[\mathcal{U}^\dag_{R_i}\!(H_i) \mathcal{U}^\dag_{R_j}\!(H_j) \rho  \right] + 2 \Tr\!\left[ \mathcal{U}^\dag_{R_j}\!(H_j) \mathcal{U}^\dag_{R_i}\!(H_i) \rho \right]\\
    & \hspace{1cm} - 8 \sum_{k,\ell}\frac{\lambda_k \lambda_\ell}{\lambda_{k}+\lambda_{\ell}} \langle \tilde{k}| \mathcal{U}^\dag_{R_i}\!(H_i) |\tilde{\ell}\rangle \langle\tilde{\ell}| \mathcal{U}^\dag_{R_j}\!(H_j) |\tilde{k}\rangle
    \end{split}\\
    & = 2 \Tr\!\left[ \left\{ \mathcal{U}^\dag_{R_i}\!(H_i), \mathcal{U}^\dag_{R_j}\!(H_j) \right\} \rho  \right] - 8 \sum_{k,\ell}\frac{\lambda_k \lambda_\ell}{\lambda_{k}+\lambda_{\ell}} \langle \tilde{k}| \mathcal{U}^\dag_{R_i}\!(H_i) |\tilde{\ell}\rangle \langle\tilde{\ell}| \mathcal{U}^\dag_{R_j}\!(H_j) |\tilde{k}\rangle\\
    & = 2 \left\langle\left\{ \mathcal{U}^\dag_{R_i}\!(H_i), \mathcal{U}^\dag_{R_j}\!(H_j) \right\}\right\rangle_{\rho} - 8 \sum_{k,\ell}\frac{\lambda_k \lambda_\ell}{\lambda_{k}+\lambda_{\ell}} \langle \tilde{k}| \mathcal{U}^\dag_{R_i}\!(H_i) |\tilde{\ell}\rangle \langle\tilde{\ell}| \mathcal{U}^\dag_{R_j}\!(H_j) |\tilde{k}\rangle.
\end{align}
Now let us focus on the second term:
\begin{align}
    &\sum_{k,\ell}\frac{\lambda_k \lambda_\ell}{\lambda_{k}+\lambda_{\ell}} \langle \tilde{k}| \mathcal{U}^\dag_{R_i}\!(H_i) |\tilde{\ell}\rangle \langle\tilde{\ell}| \mathcal{U}^\dag_{R_j}\!(H_j) |\tilde{k}\rangle\notag\\
    & = \operatorname{Re}\!\left[ \sum_{k,\ell}\frac{\lambda_k \lambda_\ell}{\lambda_{k}+\lambda_{\ell}} \langle \tilde{k}| \mathcal{U}^\dag_{R_i}\!(H_i) |\tilde{\ell}\rangle \langle\tilde{\ell}| \mathcal{U}^\dag_{R_j}\!(H_j) |\tilde{k}\rangle \right]\\
    & = \operatorname{Re}\!\left[ \sum_{k,\ell}\frac{\lambda_k \frac{e^{-\mu_\ell}}{Z}}{\frac{e^{-\mu_k}}{Z}+\frac{e^{-\mu_\ell}}{Z}} \langle \tilde{k}| \mathcal{U}^\dag_{R_i}\!(H_i) |\tilde{\ell}\rangle \langle\tilde{\ell}| \mathcal{U}^\dag_{R_j}\!(H_j) |\tilde{k}\rangle \right]\\
    & = \operatorname{Re}\!\left[ \sum_{k,\ell}\lambda_k \frac{1}{e^{-(\mu_k-\mu_\ell)}+1} \langle \tilde{k}| \mathcal{U}^\dag_{R_i}\!(H_i) |\tilde{\ell}\rangle \langle\tilde{\ell}| \mathcal{U}^\dag_{R_j}\!(H_j) |\tilde{k}\rangle \right].
\end{align}
Now observe that, for all $x\in\mathbb{R}$,
\begin{align}
\frac{1}{e^{-x}+1} &  =\frac{e^{x/2}}{e^{x/2}+e^{-x/2}}\\
&  =\frac{e^{x/2}}{e^{x/2}+e^{-x/2}}-\frac{1}{2}+\frac{1}{2}\\
&  =\frac{e^{x/2}}{e^{x/2}+e^{-x/2}}-\frac{\frac{1}{2}e^{x/2}+\frac{1}
{2}e^{-x/2}}{e^{x/2}+e^{-x/2}}+\frac{1}{2}\\
&  =\frac{1}{2}\frac{e^{x/2}-e^{-x/2}}{e^{x/2}+e^{-x/2}}+\frac{1}{2}\\
&  =\frac{1}{2}\tanh(x/2)+\frac{1}{2}\\
&  =\frac{x}{4}\frac{\tanh(x/2)}{x/2}+\frac{1}{2}.
\end{align}
Substituting above, we find that
\begin{align}
&  \operatorname{Re}\!\left[  \sum_{k,\ell}\lambda_{k}\frac{1}{e^{-\left(
\mu_{k}-\mu_{\ell}\right)  }+1} \langle \tilde{k}| \mathcal{U}^\dag_{R_i}\!(H_i) |\tilde{\ell}\rangle \langle\tilde{\ell}| \mathcal{U}^\dag_{R_j}\!(H_j) |\tilde{k}\rangle \right]  \nonumber\\
&  =\operatorname{Re}\!\left[  \sum_{k,\ell}\lambda_{k}\left(  \frac{\mu_{k}
-\mu_{\ell}}{4}\frac{\tanh(\left(  \mu_{k}-\mu_{\ell}\right)  /2)}{\left(
\mu_{k}-\mu_{\ell}\right)  /2}+\frac{1}{2}\right)  \langle \tilde{k}| \mathcal{U}^\dag_{R_i}\!(H_i) |\tilde{\ell}\rangle \langle\tilde{\ell}| \mathcal{U}^\dag_{R_j}\!(H_j) |\tilde{k}\rangle \right]  \\
\begin{split}&  =\frac{1}{4}\operatorname{Re}\!\left[  \sum_{k,\ell}\lambda_{k}\left(
\mu_{k}-\mu_{\ell}\right)  \frac{\tanh(\left(  \mu_{k}-\mu_{\ell}\right)
/2)}{\left(  \mu_{k}-\mu_{\ell}\right)  /2} \langle \tilde{k}| \mathcal{U}^\dag_{R_i}\!(H_i) |\tilde{\ell}\rangle \langle\tilde{\ell}| \mathcal{U}^\dag_{R_j}\!(H_j) |\tilde{k}\rangle \right]  \\
&  \qquad+\frac{1}{2}\operatorname{Re}\!\left[  \sum_{k,\ell}\lambda_{k}
\langle \tilde{k}| \mathcal{U}^\dag_{R_i}\!(H_i) |\tilde{\ell}\rangle \langle\tilde{\ell}| \mathcal{U}^\dag_{R_j}\!(H_j) |\tilde{k}\rangle \right]  
\end{split}\\
\begin{split}&  =\frac{1}{4}\operatorname{Re}\!\left[  \sum_{k,\ell}\lambda_{k}\left(
\mu_{k}-\mu_{\ell}\right)  \int_{\mathbb{R}}dt\ p(t)e^{-i\left(  \mu_{k}
-\mu_{\ell}\right)  t} \langle \tilde{k}| \mathcal{U}^\dag_{R_i}\!(H_i) |\tilde{\ell}\rangle \langle\tilde{\ell}| \mathcal{U}^\dag_{R_j}\!(H_j) |\tilde{k}\rangle \right]
\\
&  \qquad+\frac{1}{2}\operatorname{Re}\!\left[  \operatorname{Tr}\!\left[ \mathcal{U}^\dag_{R_j}\!(H_j) \rho \mathcal{U}^\dag_{R_i}\!(H_i) \right]
\right]  
\end{split}\\
\begin{split}&  =\frac{1}{4}\operatorname{Re}\!\left[  \sum_{k,\ell}\lambda_{k}\left(
\mu_{k}-\mu_{\ell}\right)  \int_{\mathbb{R}}dt\ p(t)e^{-i\left(  \mu_{k}
-\mu_{\ell}\right)  t} \langle \tilde{k}| \mathcal{U}^\dag_{R_i}\!(H_i) |\tilde{\ell}\rangle \langle\tilde{\ell}| \mathcal{U}^\dag_{R_j}\!(H_j) |\tilde{k}\rangle \right]
\\
&  \qquad+\frac{1}{4}\left\langle \left\{  \mathcal{U}^\dag_{R_i}\!(H_i) , \mathcal{U}^\dag_{R_j}\!(H_j) \right\}  \right\rangle _{\rho}.\end{split}
\end{align}
The third equality above follows from \cite[Lemma~12]{patel2024quantumboltzmannmachine}.
Consider now that the key quantity in the first term above can be rewritten as
\begin{align}
&  \sum_{k,\ell}\lambda_{k}\left(  \mu_{k}-\mu_{\ell}\right)  \int
_{\mathbb{R}}dt\ p(t)e^{-i\left(  \mu_{k}-\mu_{\ell}\right)  t} \langle \tilde{k}| \mathcal{U}^\dag_{R_i}\!(H_i) |\tilde{\ell}\rangle \langle\tilde{\ell}| \mathcal{U}^\dag_{R_j}\!(H_j) |\tilde{k}\rangle \nonumber\\
&  =\int_{\mathbb{R}}dt\ p(t)\sum_{k,\ell}\lambda_{k}\left(  \mu_{k}-\mu
_{\ell}\right)  \operatorname{Tr}\!\left[  \mathcal{U}^\dag_{R_i}\!(H_i) e^{i\mu_{\ell}t}|\tilde{\ell}\rangle\langle\tilde{\ell}| \mathcal{U}^\dag_{R_j}\!(H_j) e^{-i\mu_{k}t}|\tilde{k}\rangle\langle\tilde{k}|\right]  \\
\begin{split} &  =\int_{\mathbb{R}}dt\ p(t)\sum_{k,\ell}\operatorname{Tr}\!\left[  \mathcal{U}^\dag_{R_i}\!(H_i) e^{i\mu_{\ell}t}|\tilde{\ell}\rangle\langle\tilde{\ell}
|\mathcal{U}^\dag_{R_j}\!(H_j)\lambda_{k}\mu_{k}e^{-i\mu_{k}t}|\tilde{k}
\rangle\langle\tilde{k}|\right]  \\
&  \qquad-\int_{\mathbb{R}}dt\ p(t)\sum_{k,\ell}\operatorname{Tr}\!\left[
\mathcal{U}^\dag_{R_i}\!(H_i)  \mu_{\ell}e^{i\mu_{\ell}t}|\tilde{\ell}\rangle
\langle\tilde{\ell}|\mathcal{U}^\dag_{R_j}\!(H_j)\lambda_{k}e^{-i\mu_{k}t}
|\tilde{k}\rangle\langle\tilde{k}|\right]  \end{split}\\
\begin{split}&  =\int_{\mathbb{R}}dt\ p(t)\operatorname{Tr}\!\left[  \mathcal{U}^\dag_{R_i}\!(H_i) e^{iG t} \mathcal{U}^\dag_{R_j}\!(H_j) \rho G e^{-iG t}\right]  \\
&  \qquad-\int_{\mathbb{R}}dt\ p(t)\operatorname{Tr}\!\left[ \mathcal{U}^\dag_{R_i}\!(H_i) G e^{iG t} \mathcal{U}^\dag_{R_j}\!(H_j) \rho e^{-iG t}\right]  \end{split}\\
&  =\operatorname{Tr}\!\left[  \Phi\!\left(\mathcal{U}^\dag_{R_i}\!(H_i)\right) \mathcal{U}^\dag_{R_j}\!(H_j) 
 G\rho \right]  -\operatorname{Tr}\!
\left[ \Phi\!\left(\mathcal{U}^\dag_{R_i}\!(H_i)\right) G \ \mathcal{U}^\dag_{R_j}\!(H_j) 
 \rho\right].
\end{align}
Putting everything together, we find that
\begin{align}
  I_{ij}^{\text{FB}}(\phi)
&  =2 \left\langle\left\{ \mathcal{U}^\dag_{R_i}\!(H_i), \mathcal{U}^\dag_{R_j}\!(H_j) \right\}\right\rangle_{\rho}\nonumber\\
&  \qquad-8\left(
\begin{array}
[c]{c}
\frac{1}{4}\left(
\begin{array}
[c]{c}
\operatorname{Re}\!\left[  \operatorname{Tr}\!\left[  \Phi\!\left(\mathcal{U}^\dag_{R_i}\!(H_i)\right) \mathcal{U}^\dag_{R_j}\!(H_j) 
 G\rho \right]
\right]  \\
-\operatorname{Re}\!\left[  \operatorname{Tr}\!
\left[ \Phi\!\left(\mathcal{U}^\dag_{R_i}\!(H_i)\right) G \ \mathcal{U}^\dag_{R_j}\!(H_j) 
 \rho\right]
\right]
\end{array}
\right)  \\
+\frac{1}{4}\left\langle \left\{  \mathcal{U}^\dag_{R_i}\!(H_i) ,\mathcal{U}^\dag_{R_j}\!(H_j) \right\}  \right\rangle _{\rho}
\end{array}
\right)  \\
\begin{split}&  =2\left\langle\left\{ \mathcal{U}^\dag_{R_i}\!(H_i), \mathcal{U}^\dag_{R_j}\!(H_j) \right\}\right\rangle_{\rho} -\operatorname{Tr}\!\left[  \Phi\!\left(\mathcal{U}^\dag_{R_i}\!(H_i)\right) \mathcal{U}^\dag_{R_j}\!(H_j) 
 G\rho \right]  \\
&  \qquad -\operatorname{Tr}\!\left[ \rho G \ \mathcal{U}^\dag_{R_j}\!(H_j) 
 \Phi\!\left(\mathcal{U}^\dag_{R_i}\!(H_i)\right) 
 \right] + \operatorname{Tr}\!
\left[ \Phi\!\left(\mathcal{U}^\dag_{R_i}\!(H_i)\right) G \ \mathcal{U}^\dag_{R_j}\!(H_j) 
 \rho\right]\\
&  \qquad + \operatorname{Tr}\!
\left[ \rho \mathcal{U}^\dag_{R_j}\!(H_j) G \ \Phi\!\left(\mathcal{U}^\dag_{R_i}\!(H_i)\right)  
 \right] 
- 2\left\langle\left\{ \mathcal{U}^\dag_{R_i}\!(H_i), \mathcal{U}^\dag_{R_j}\!(H_j) \right\}\right\rangle_{\rho}\end{split}\\
\begin{split}&  =\operatorname{Tr}\!\left[  \Phi\!\left(\mathcal{U}^\dag_{R_i}\!(H_i)\right) G \ \mathcal{U}^\dag_{R_j}\!(H_j) \rho \right]  -\operatorname{Tr}\!
\left[  \Phi\!\left(\mathcal{U}^\dag_{R_i}\!(H_i)\right) \mathcal{U}^\dag_{R_j}\!(H_j) G \rho \right]  \\
&  \qquad+\operatorname{Tr}\!\left[  \Phi\!\left(\mathcal{U}^\dag_{R_i}\!(H_i)\right) \rho \  \mathcal{U}^\dag_{R_j}\!(H_j) G \right]
-\operatorname{Tr}\!\left[ \Phi\!\left(\mathcal{U}^\dag_{R_i}\!(H_i)\right) \rho
 G \ \mathcal{U}^\dag_{R_j}\!(H_j) \right]  \end{split}\\
&  =\left\langle \left[  \left[  \mathcal{U}^\dag_{R_j}\!(H_j) , G \right]
,\Phi\!\left(\mathcal{U}^\dag_{R_i}\!(H_i)\right) \right]  \right\rangle _{\rho}.
\end{align}
This concludes the proof of Theorem~\ref{thm:FB}.

\subsection{Quantum algorithm for  estimating the Fisher--Bures information matrix elements}\label{app:FB_algo}

Let us first recall from the statement of Theorem~\ref{thm:FB} the expression for the $(i, j)$-th element of the Fisher–Bures information matrix $I_{ij}^{\operatorname{FB}}(\phi)$:
\begin{equation}
    I_{ij}^{\operatorname{FB}}(\phi) = \left\langle \left[  \left[  \mathcal{U}^\dag_{R_j}\!(H_j) , G \right],\Phi\!\left(\mathcal{U}^\dag_{R_i}\!(H_i)\right) \right]  \right\rangle _{\rho}.\label{eq:FB_app}
\end{equation}
Consider the following:
\begin{align}
    & \left\langle \left[  \left[  \mathcal{U}^\dag_{R_j}\!(H_j) , G \right],\Phi\!\left(\mathcal{U}^\dag_{R_i}\!(H_i)\right) \right]  \right\rangle _{\rho}\nonumber\\
    & = \Tr\!\bigg[ \left[  \left[  \mathcal{U}^\dag_{R_j}\!(H_j) , G \right],\Phi\!\left(\mathcal{U}^\dag_{R_i}\!(H_i)\right) \right] \rho \bigg] \\
    & = \int_\mathbb{R} dt\ p(t)\ \bigg( \Tr\!\bigg[ \left[ \left[  U^\dagger_{R_j} H_j U_{R_j} , G \right] , e^{-iGt} U^\dagger_{R_i} H_i U_{R_i} e^{iGt}\right] \rho \bigg]  \bigg).\label{eq:FB-last}
\end{align}

We are now in a position to present an algorithm (Algorithm~\ref{algo:FB-phi}) to estimate~\eqref{eq:FB_app} using its equivalent form shown in~\eqref{eq:FB-last}. At the core of our algorithm lies the quantum circuit that estimates the expected value of two nested commutators of three operators, $\frac{1}{4} \left\langle \big[ \left[U_1 , H \right], U_0\big]\right\rangle_\rho$, where $H$ is Hermitian, and $U_0$ and $U_1$ are both Hermitian and unitary (refer to Appendix B and Figure 6a of~\cite{minervini2025eqbm}). In this case, we choose $U_1 = U^\dagger_{R_j} H_j U_{R_j}$, $H = G$, and $U_0 = e^{-iGt} U^\dagger_{R_i} H_i U_{R_i} e^{iGt}$. We then make some further simplifications where possible. Specifically, the quantum circuit that plays a role in estimating the integrand of~\eqref{eq:FB-last} is depicted in Figure~\ref{fig:FB}.

\begin{remark}
\label{rem:measuring-G-phi}
In Algorithm~\ref{algo:FB-phi} below, note that we adopt a sampling approach to measuring $G$, similar to that used in \cite[Algorithm~1]{patel2024quantumboltzmannmachine}, using the Pauli decomposition $G=\sum_k \alpha_k G_k$. This seems to be necessary, as it is not obvious how to measure directly in the eigenbasis of $G$. We adopt a similar approach in other circuits that involve measuring $G$.
\end{remark}

\begin{algorithm}[H]
\caption{\texorpdfstring{$\mathtt{estimate\_FB}(i, j, \phi, \{H_\ell\}_{\ell=1}^{J}, \rho, G, p(\cdot), \varepsilon, \delta)$}{estimate first term}}
\label{algo:FB-phi}
\begin{algorithmic}[1]
\STATE \textbf{Input:} Indices $i, j \in [J]$, parameter vector $\phi = \left( \phi_{1}, \ldots,  \phi_{J}\right)^{\mathsf{T}} \in \mathbb{R}^{J}$, Hermitian operators $\{H_\ell\}_{\ell=1}^{J}$, initial state $\rho$ and $G$ is the underlying Hamiltonian, probability distribution $p(t)$ over $\mathbb{R}$, precision $\varepsilon > 0$, error probability $\delta \in (0,1)$
\STATE $N \leftarrow \lceil\sfrac{2\left\| \phi \right\|_1^2 \ln(\sfrac{2}{\delta})}{\varepsilon^2}\rceil$
\FOR{$n = 0$ to $N-1$}
\STATE Initialize the first control register to $|1\rangle\!\langle 1 |$
\STATE Initialize the second control register to $|1\rangle\!\langle 1 |$
\STATE Prepare the system register in the state $\rho$
\STATE Sample $t$ at random with probability $p(t)$ (defined in~\eqref{eq:high-peak-tent-density})
\STATE Apply the Hadamard gate and the phase gate $S$ to the first and second control registers
\STATE Apply the following unitaries to the control and system registers:
\STATE \hspace{0.6cm} \textbullet~$U_{R_i}$: gate sequence defined in~\eqref{eq:partial_gates_right} on the system register
\STATE \hspace{0.6cm} \textbullet~Controlled-$H_i$: $H_i$ is a local unitary acting on the system register, controlled by the first control register
\STATE \hspace{0.6cm} \textbullet~$U^\dagger_{R_i}$: adjoint of the gate sequence defined in~\eqref{eq:partial_gates_right} on the system register
\STATE \hspace{0.6cm} \textbullet~$e^{-iGt}$: Hamiltonian simulation for time $t$ on the system register
\STATE \hspace{0.6cm} \textbullet~$U_{R_j}$: gate sequence defined in~\eqref{eq:partial_gates_right} on the system register
\STATE \hspace{0.6cm} \textbullet~Controlled-$H_j$: $H_j$ is a local unitary acting on the system register, controlled by the second control register
\STATE \hspace{0.6cm} \textbullet~$U^\dagger_{R_j}$: adjoint of the gate sequence defined in~\eqref{eq:partial_gates_right} on the system register
\STATE Apply the Hadamard gate to the first and second control registers
\STATE Measure the two control registers in the computational basis and store the measurement outcomes~$b_n$ and $c_n$
\STATE From the Pauli decomposition of $G$ (see Remark~\ref{rem:measuring-G-phi}), $G=\sum_k \alpha_k G_k$, sample the index $k$ according to the probability distribution $|\alpha_k| / \left\| \alpha \right\|_1$, measure the system register in the eigenbasis of sign$(\alpha_k) G_k$ and store the measurement outcome $\lambda_n$
\STATE $Y_{n}^{(\operatorname{FB})} \leftarrow (-1)^{b_n}(-1)^{c_n}\lambda_n$
\ENDFOR

\STATE \textbf{return} $\overline{Y}^{(\operatorname{FB})} \leftarrow 4 \left\| \alpha \right\|_1 \times\frac{1}{N}\sum_{n=0}^{N-1}Y_{n}^{(\operatorname{FB})}$
\end{algorithmic}
\end{algorithm}

\section{Relation with Quantum Evolution Machines}

\label{app:QEM_PQC}

This appendix explains how PQCs, as defined in~\eqref{eq:param_circ}, relate to quantum evolution machines (QEMs), which are a generalization of PQCs. Our argument is related to that presented in \cite[Eqs.~(3)--(5)]{kottmann2023evaluatinganalyticgradientspulse}. Without loss of generality, we assume for the purpose of this section that the unparameterized unitaries $V_j$ in a PQC are identities; that is, for all $j\in [J]$, we assume that $V_j = I$. 

Recall that a parameterized quantum state $\rho(\phi)$ derived from a QEM $e^{-iH(\phi)}$ has the following form:
\begin{equation}
    \rho(\phi) \coloneqq e^{-iH(\phi)} \rho e^{iH(\phi)},\label{eq:qem-def}
\end{equation}
where $\rho$ is some initial mixed state and the Hamiltonian $H(\phi)$ is defined as follows:
\begin{equation}
    H(\phi) \coloneqq \sum_{j=1}^J \phi_j H_j.
\end{equation}
In the above equation, each Hamiltonian $H_j$ is defined in a similar way as we did previously for PQCs (see~\eqref{eq:U_j_def}). Next, from~\cite[Theorem 1]{minervini2025eqbm}, we have that the partial derivatives of $\rho(\phi)$  are given as follows:
\begin{equation}
    \partial_j\rho(\phi) = i[\rho(\phi), \Psi_{\phi}(H_j)]\label{eq:par-der-qem},
\end{equation}
where
\begin{equation}
    \Psi_{\phi}(X) \coloneqq \int_{0}^1 dt\ e^{-iH(\phi)t} X e^{iH(\phi)t}
\end{equation}
is a continuous random unitary channel.

Note that if we now replace QEM with a PQC to obtain a parameterized quantum state in~\eqref{eq:qem-def}, then the quantum channel $\Psi_{\phi}$ in the expression of the partial derivatives in~\eqref{eq:par-der-qem} gets replaced by a simpler channel, which is a unitary channel $\mathcal{U}_{L_{j+1}}$ as defined in~\eqref{eq:q_channel_left} (see Lemma~\ref{lem:part_der} for the complete derivation). In other words, we have the following:
\begin{equation}
    \partial_j\rho(\phi) = i[\rho(\phi), \mathcal{U}_{L_{j+1}}(H_j)].
\end{equation}
In essence, when considering PQCs, we obtain a discrete channel $\mathcal{U}_{L_{j+1}}$ instead of a continuous channel like $\Psi_{\phi}$, and in this section, our aim is to investigate at what point this transition from discrete to continuous channel occurs. 

To illustrate this, consider the following:
\begin{align}
    \partial_j\rho(\phi) & = \partial_j\!\left(e^{-iH(\phi)} \rho e^{iH(\phi)} \right)\\
    & = \partial_j\!\left( \lim_{r\rightarrow \infty} \left(e^{-iH(\phi)/r} \right)^r \rho \left(e^{iH(\phi)/r} \right)^r \right).
\end{align}
Next, using the first-order Trotter formula~\cite{Suzuki1976}, we transform the unitary $e^{-iH(\phi)/r}$ into a product of simpler unitaries~$e^{-i\phi_j H_j/r}$, resulting in a PQC $U(\phi/r)$, which we denote as $U$ for simplicity, hiding the explicit dependence on $\phi$ and $r$. To this end, we have
\begin{align}
    \partial_j\rho(\phi) & = \partial_j\!\Bigg( \lim_{r\rightarrow \infty} \Bigg(\underbrace{\prod_{j_1=1}^{J} e^{-i \phi_{j_1} H/r}}_{= U}\Bigg)^r \rho \Bigg(\underbrace{\prod_{j_2=J}^{1} e^{i \phi_{j_2} H/r}}_{= U^{\dagger}}\Bigg)^r \Bigg)\\
    & = \partial_j\!\left( \lim_{r\rightarrow \infty} U^r \rho  U^{r\dagger}\right)\\
    & = \lim_{r\rightarrow \infty} \partial_j\!\left(U^{r} \rho  U^{r\dagger}\right)\\
    & = \lim_{r\rightarrow \infty} \left(\partial_j U^{r} \right) \rho  U^{r\dagger} + U^{r} \rho  \left(\partial_j U^{r\dagger}\right)\\
    & = \lim_{r\rightarrow \infty} \left( \sum_{k=1}^r U^{k-1} (\partial_j U) U^{r-k} \right) \rho  U^{r\dagger} + U^{r} \rho  \left( \sum_{k=1}^r U^{k-1\dagger} (\partial_j U^{\dagger}) U^{r-k \dagger} \right)\\
    & = \lim_{r\rightarrow \infty} \left( \sum_{k=1}^r U^{k-1} \left(U_{L_{j+1}} \frac{-i H_j}{r} U_{R_j}\right) U^{r-k} \right) \rho  U^{r\dagger} + U^{r} \rho  \left( \sum_{k=1}^r U^{k-1\dagger} \left( U^\dagger_{R_j} \frac{iH_j}{r} U^\dagger_{L_{j+1}}\right) U^{r-k \dagger} \right)\\
    & = \lim_{r\rightarrow \infty} \frac{-i}{r} \left( \left( \sum_{k=1}^r U^{k-1} \left(U_{L_{j+1}} H_j U_{R_j}\right) U^{r-k} \right) \rho  U^{r\dagger} - U^{r} \rho  \left( \sum_{k=1}^r U^{k-1\dagger} \left( U^\dagger_{R_j} H_j U^\dagger_{L_{j+1}}\right) U^{r-k \dagger} \right)\right)\\
    & = \lim_{r\rightarrow \infty} \frac{-i}{r} \Bigg( \left( \sum_{k=1}^r U^{k-1} \left(U_{L_{j+1}} H_j U_{L_{j+1}}^{\dagger} U_{L_{j+1}} U_{R_j}\right) U^{r-k} \right) \rho  U^{r\dagger} \notag\\
    & \qquad \qquad \qquad - U^{r} \rho  \left( \sum_{k=1}^r U^{k-1\dagger} \left( U^\dagger_{R_j} U_{L_{j+1}}^{\dagger} U_{L_{j+1}} H_j U^\dagger_{L_{j+1}}\right) U^{r-k \dagger} \right)\Bigg)\\
    & = \lim_{r\rightarrow \infty} \frac{-i}{r} \Bigg( \left( \sum_{k=1}^r U^{k-1} \left(U_{L_{j+1}} H_j U_{L_{j+1}}^{\dagger} U\right) U^{r-k} \right) \rho  U^{r\dagger} \notag\\
    & \qquad \qquad \qquad - U^{r} \rho  \left( \sum_{k=1}^r U^{k-1\dagger} \left( U^{\dagger} U_{L_{j+1}} H_j U^\dagger_{L_{j+1}}\right) U^{r-k \dagger} \right)\Bigg)\\
    & = \lim_{r\rightarrow \infty} \frac{-i}{r} \Bigg( \left( \sum_{k=1}^r U^{k-1} \mathcal{U}_{L_{j+1}}(H_j)  U^{r-k+1} \right) \rho  U^{r\dagger} - U^{r} \rho  \left( \sum_{k=1}^r U^{k\dagger} \mathcal{U}_{L_{j+1}} (H_j)  U^{r-k \dagger} \right)\Bigg)\\
    & = \lim_{r\rightarrow \infty} \frac{-i}{r} \Bigg( \left( \sum_{k=1}^r U^{k-1} \mathcal{U}_{L_{j+1}}(H_j) U^{k-1 \dagger}U^{k-1}  U^{r-k+1} \right) \rho  U^{r\dagger} - U^{r} \rho  \left( \sum_{k=1}^r U^{k\dagger} U^{r-k \dagger} U^{r-k}\mathcal{U}_{L_{j+1}} (H_j)  U^{r-k \dagger} \right)\Bigg)\\
    & = \lim_{r\rightarrow \infty} \frac{-i}{r} \Bigg( \left( \sum_{k=1}^r U^{k-1} \mathcal{U}_{L_{j+1}}(H_j) U^{k-1 \dagger}U^{r} \right) \rho  U^{r\dagger} - U^{r} \rho  \left( \sum_{k=1}^r U^{r\dagger} U^{r-k}\mathcal{U}_{L_{j+1}} (H_j)  U^{r-k \dagger} \right)\Bigg)\\
    & = \lim_{r\rightarrow \infty} \frac{-i}{r} \Bigg( \left( \sum_{k=1}^r U^{k-1} \mathcal{U}_{L_{j+1}}(H_j) U^{k-1 \dagger}\right) U^{r} \rho  U^{r\dagger} - U^{r} \rho U^{r\dagger} \left( \sum_{k=1}^r U^{r-k}\mathcal{U}_{L_{j+1}} (H_j)  U^{r-k \dagger} \right)\Bigg)\\
    & = \lim_{r\rightarrow \infty} \frac{-i}{r} \Bigg( \left( \sum_{k=1}^r U^{k-1} \mathcal{U}_{L_{j+1}}(H_j) U^{k-1 \dagger}\right) U^{r} \rho  U^{r\dagger} - U^{r} \rho U^{r\dagger} \left( \sum_{k=1}^r U^{k-1}\mathcal{U}_{L_{j+1}} (H_j)  U^{k-1 \dagger} \right)\Bigg)\\
    & = \lim_{r\rightarrow \infty} \frac{-i}{r} \Bigg[ \left( \sum_{k=1}^r U^{k-1} \mathcal{U}_{L_{j+1}}(H_j) U^{k-1 \dagger}\right),  U^{r} \rho  U^{r\dagger} \Bigg]\\
    & = \lim_{r\rightarrow \infty} \frac{-i}{r} \Bigg[ \left( \sum_{k=1}^r \left(e^{-iH(\phi)/r} + O\!\left(\frac{1}{r^2}\right)\right)^{k-1} \mathcal{U}_{L_{j+1}}(H_j) \left(e^{iH(\phi)/r} + O\!\left(\frac{1}{r^2}\right)\right)^{k-1 \dagger}\right),  U^{r} \rho  U^{r\dagger} \Bigg] \\
    & = \lim_{r\rightarrow \infty} -i \Bigg[ \left( \sum_{k=1}^r \frac{1}{r} \left(e^{-iH(\phi)/r} + O\!\left(\frac{1}{r^2}\right)\right)^{k-1} \mathcal{U}_{L_{j+1}}(H_j) \left(e^{iH(\phi)/r} + O\!\left(\frac{1}{r^2}\right)\right)^{k-1 \dagger}\right),  U^{r} \rho  U^{r\dagger} \Bigg].
\end{align}
The exchange of the limit $\lim_{r \to \infty}$ and the derivative $\partial_j$ is possible because, for all $r \in \mathbb{N}$, the derivative $\partial_j U^r$ converges uniformly.
Now, taking the limit $r \rightarrow \infty$, the sum becomes an integral, and we get
\begin{align}
    \partial_j\rho(\phi) & = -i \Bigg[ \int_{0}^1 dt\ e^{-iH(\phi)t} H_j e^{iH(\phi)t} ,  e^{-iH(\phi)} \rho  e^{iH(\phi)} \Bigg]\\
    & = -i \left[ \Psi_{\phi}(H_j),  
    \rho(\phi) \right]\\
    & = i \left[\rho(\phi), \Psi_{\phi}(H_j)  \right],
\end{align}
where we used the fact that $\lim_{r\rightarrow \infty}\mathcal{U}_{L_{j+1}}(H_j) = H_j $ and note the implicit dependence of $\mathcal{U}_{L_{j+1}}$ on $r$.

\section{Wigner--Yanese information matrix elements}\label{app:WY}

\subsection{Proof of Lemma~\ref{thm:WY-partial-derivatives}}

\label{app:gradient_purified_state}

Using the notations in the statement of Lemma~\ref{thm:WY-partial-derivatives}, consider that
\begin{align}
    \partial_j \sqrt{\rho(\phi)} & = \partial_j \left( U(\phi) \sqrt{\rho} U^\dagger (\phi) \right)\\
    & = (\partial_j U(\phi)) \sqrt{\rho} U^\dagger (\phi) + U(\phi) \sqrt{\rho} (\partial_j U^\dagger (\phi)) \\
    & = i U_{L_{j+1}} H_j  U_{R_{j}} \sqrt{\rho} U^\dagger (\phi) + i U(\phi) \sqrt{\rho} U^\dagger_{R_j} H_j U^\dagger_{L_{j+1}} \\
    & = -i U_{L_{j+1}} H_j U^\dagger_{L_{j+1}} U_{L_{j+1}} U_{R_{j}} \sqrt{\rho} U^\dagger (\phi) + i U(\phi) \sqrt{\rho} U^\dagger_{R_j} U^\dagger_{L_{j+1}} U_{L_{j+1}} H_j U^\dagger_{L_{j+1}}\\
    & = -i U_{L_{j+1}} H_j U^\dagger_{L_{j+1}} U(\phi) \sqrt{\rho} U^\dagger (\phi) + i U(\phi) \sqrt{\rho} U^\dagger(\phi) U_{L_{j+1}} H_j U^\dagger_{L_{j+1}}\\
    & = -i U_{L_{j+1}} H_j U^\dagger_{L_{j+1}} \sqrt{\rho(\phi)} + i \sqrt{\rho(\phi)} U_{L_{j+1}} H_j U^\dagger_{L_{j+1}}\\
    & = i \left[ \sqrt{\rho(\phi)}, U_{L_{j+1}} H_j U^\dagger_{L_{j+1}} \right]\\
    & = i \left[ \sqrt{\rho(\phi)}, \mathcal{U}_{L_{j+1}}\!(H_j)\right].
\end{align}
So, we have that
\begin{align}
    |\partial_{j}\psi(\phi)\rangle 
    & = \frac{\partial}{\partial \phi_j}\left(  \sqrt{\rho(\phi)}\otimes I\right)  |\Gamma\rangle\\
    & = i \left( \left[ \sqrt{\rho(\phi)}, \mathcal{U}_{L_{j+1}}\!(H_j) \right] \otimes I\right)  |\Gamma\rangle.
\end{align}
This concludes the proof of Lemma~\ref{thm:WY-partial-derivatives}.

\subsection{Proof of Theorem~\ref{thm:WY}}\label{app:WY-proof}
The derivation follows a similar approach to that of \cite[Theorem~15]{minervini2025eqbm}, but with some distinctions. Therefore, we provide a detailed proof below. Let $\partial_i \equiv \frac{\partial}{\partial \phi_i}$ and $\partial_j \equiv \frac{\partial}{\partial \phi_j}$ denote shorthand notations for partial derivatives. Using Proposition~\ref{prop:FB-WY-canonical-purifications}, the Wigner--Yanase
information of a parameterized family $\left(  \sigma(\gamma)\right)
_{\gamma\in\mathbb{R}^{L}}$ is equal to the Fisher--Bures information of the corresponding pure
parameterized fa\-mi\-ly $\left(  \varphi^{\sigma}(\gamma)\right)  _{\gamma
\in\mathbb{R}^{L}}$, where $\varphi^{\sigma}(\gamma)$ denotes a canonical
purification of $\sigma(\gamma)$ defined in~\eqref{eq:canonical-purification-FB-WY}. Recalling that the Fisher--Bures information of pure
states is given by \cite[Eq.~(132)]{Sidhu2020}
\begin{equation}
    I^{\operatorname{FB}}_{ij}(\gamma) =4\operatorname{Re}\!\left[  \left\langle \partial_{i}\psi(\gamma)|\partial_{j}\psi(\gamma)\right\rangle -\left\langle \partial_{i}\psi (\gamma)|\psi(\gamma)\right\rangle \left\langle \psi(\gamma)|\partial_{j} \psi(\gamma)\right\rangle \right]  ,\label{eq:formula-FB-pure}
\end{equation}
we proceed to evaluate each term in this expression.
Using Lemma~\ref{thm:WY-partial-derivatives}, we find that
\begin{align}
    \langle \partial_i \psi(\phi)| \partial_j \psi(\phi) \rangle 
    & =  -\langle \Gamma | \left( \left[ \sqrt{\rho(\phi)}, \mathcal{U}_{L_{i+1}}\!(H_i) \right] \otimes I\right) \left( \left[ \sqrt{\rho(\phi)}, \mathcal{U}_{L_{j+1}}\!(H_j) \right] \otimes I\right) | \Gamma \rangle\\
    & =  -\langle \Gamma | \left[ \sqrt{\rho(\phi)}, \mathcal{U}_{L_{i+1}}\!(H_i) \right] \left[ \sqrt{\rho(\phi)}, \mathcal{U}_{L_{j+1}}\!(H_j) \right] \otimes I | \Gamma \rangle \\
    \begin{split}& =  -\langle \Gamma | \sqrt{\rho(\phi)} \mathcal{U}_{L_{i+1}}\!(H_i) \sqrt{\rho(\phi)}  \mathcal{U}_{L_{j+1}}\!(H_j) \otimes I | \Gamma \rangle \\
    & \hspace{0.4cm} + \langle \Gamma | \mathcal{U}_{L_{i+1}}\!(H_i) \sqrt{\rho(\phi)}  \sqrt{\rho(\phi)} \mathcal{U}_{L_{j+1}}\!(H_j) \otimes I | \Gamma \rangle\\
    & \hspace{0.4cm} +  \langle \Gamma | \sqrt{\rho(\phi)}  \mathcal{U}_{L_{i+1}}\!(H_i) \mathcal{U}_{L_{j+1}}\!(H_j) \sqrt{\rho(\phi)} \otimes I | \Gamma \rangle\\
    & \hspace{0.4cm} -  \langle \Gamma | \mathcal{U}_{L_{i+1}}\!(H_i) \sqrt{\rho(\phi)}  \mathcal{U}_{L_{j+1}}\!(H_j) \sqrt{\rho(\phi)} \otimes I | \Gamma \rangle \end{split}\\
    \begin{split}& = - \Tr\!\left[ \sqrt{\rho(\phi)} \mathcal{U}_{L_{i+1}}\!(H_i) \sqrt{\rho(\phi)}  \mathcal{U}_{L_{j+1}}\!(H_j) \right] \\
    & \hspace{0.4cm} +  \Tr\!\left[ \mathcal{U}_{L_{i+1}}\!(H_i) \sqrt{\rho(\phi)}  \sqrt{\rho(\phi)} \mathcal{U}_{L_{j+1}}\!(H_j) \right]\\
    & \hspace{0.4cm} + \Tr\!\left[ \sqrt{\rho(\phi)} \mathcal{U}_{L_{i+1}}\!(H_i) \mathcal{U}_{L_{j+1}}\!(H_j) \sqrt{\rho(\phi)} \right] \\
    & \hspace{0.4cm} -  \Tr\!\left[ \mathcal{U}_{L_{i+1}}\!(H_i) \sqrt{\rho(\phi)}  \mathcal{U}_{L_{j+1}}\!(H_j) \sqrt{\rho(\phi)} \right] \end{split}\\
    \begin{split}& = - 2 \Tr\!\left[  \mathcal{U}_{L_{i+1}}\!(H_i) \sqrt{\rho(\phi)}  \mathcal{U}_{L_{j+1}}\!(H_j) \sqrt{\rho(\phi)} \right] \\
    & \hspace{0.4cm} +  \Tr\!\left[ \left\{ \mathcal{U}_{L_{i+1}}\!(H_i) , \mathcal{U}_{L_{j+1}}\!(H_j) \right\} \rho(\phi)  \right]\end{split}\\
    \begin{split}& = - 2 \Tr\!\left[  U_{L_{i+1}} H_i U^\dagger_{L_{i+1}} U\!(\phi) \sqrt{\rho} U^\dagger\!(\phi)  U_{L_{j+1}} H_j U^\dagger_{L_{j+1}} U\!(\phi) \sqrt{\rho} U^\dagger\!(\phi) \right] \\
    & \hspace{0.4cm} +  \Tr\!\left[ \left\{ U_{L_{i+1}} H_i U^\dagger_{L_{i+1}} , U_{L_{j+1}} H_j U^\dagger_{L_{j+1}} \right\} U\!(\phi) \rho U^\dagger\!(\phi)  \right]\end{split}\\
    \begin{split}& = - 2 \Tr\!\left[  U^\dagger_{R_{i}} H_i U_{R_{i}} \sqrt{\rho} \ U^\dagger_{R_{j}} H_j U_{R_{j}} \sqrt{\rho} \right] \\
    & \hspace{0.4cm} +  \Tr\!\left[ \left\{ U^\dagger_{R_{i}} H_i U_{R_{i}} , U^\dagger_{R_{j}} H_j U_{R_{j}} \right\} \rho  \right]\end{split} \label{pass:comm_rel}\\
    \begin{split}& = - 2 \Tr\!\left[  \mathcal{U}^\dag_{R_i}\!(H_i) \sqrt{\rho} \ \mathcal{U}^\dag_{R_j}\!(H_j) \sqrt{\rho} \right] \\
    & \hspace{0.4cm} + \left\langle \left\{ \mathcal{U}^\dag_{R_i}\!(H_i) , \mathcal{U}^\dag_{R_j}\!(H_j) \right\}\right\rangle_\rho,\end{split}
\end{align}
Additionally, consider that
\begin{align}
    \langle \psi(\phi)| \partial_j \psi(\phi) \rangle 
    & = i \langle \Gamma | \left( \sqrt{\rho(\phi)} \otimes I\right) \left( \left[ \sqrt{\rho(\phi)}, \mathcal{U}_{L_{j+1}}\!(H_j) \right] \otimes I\right) | \Gamma \rangle\\
    & = i \langle \Gamma | \sqrt{\rho(\phi)} \left[ \sqrt{\rho(\phi)}, \mathcal{U}_{L_{j+1}}\!(H_j) \right] \otimes I | \Gamma \rangle\\
    & = i \Tr\!\left[ \mathcal{U}_{L_{j+1}}\!(H_j) \rho(\phi) \right] - i \Tr\!\left[ \mathcal{U}_{L_{j+1}}\!(H_j) \rho(\phi) \right]\\
    & = 0. 
\end{align}
Finally, substituting these results into~\eqref{eq:formula-FB-pure}, we obtain the Wigner–Yanase information matrix elements: 
\begin{align}
    I^{\operatorname{WY}}_{ij}(\phi) & = 4 \operatorname{Re}\!\left[  \langle \partial_i \psi(\phi)| \partial_j \psi(\phi) \rangle - \langle \partial_i \psi(\phi)|  \psi(\phi) \rangle \langle \psi(\phi)| \partial_j \psi(\phi) \rangle\right]\\
    & = 4 \operatorname{Re}\!\left[ -2 \Tr\!\left[ \mathcal{U}^\dag_{R_i}\!(H_i) \sqrt{\rho} \ \mathcal{U}^\dag_{R_j}\!(H_j) \sqrt{\rho} \right] +  \left\langle \left\{ \mathcal{U}^\dag_{R_i}\!(H_i) , \mathcal{U}^\dag_{R_j}\!(H_j) \right\}\right\rangle_\rho \right]\\
    & = -8 \Tr\!\left[  \mathcal{U}^\dag_{R_i}\!(H_i) \sqrt{\rho} \ \mathcal{U}^\dag_{R_j}\!(H_j) \sqrt{\rho} \right] + 4 \left\langle \left\{ \mathcal{U}^\dag_{R_i}\!(H_i) , \mathcal{U}^\dag_{R_j}\!(H_j) \right\}\right\rangle_\rho .
\end{align}
This concludes the proof of Theorem~\ref{thm:WY}.

\subsection{Quantum algorithms for estimating the Wigner--Yanase information matrix elements}\label{app:WY_algo}

Let us recall from the statement of Theorem~\ref{thm:WY} the expression for the $(i, j)$-th element of the Wigner--Yanase information matrix $I_{ij}^{\operatorname{WY}}(\phi)$:
\begin{equation}
    I_{ij}^{\operatorname{WY}}(\phi) = 4 \left\langle \left\{ \mathcal{U}^\dag_{R_i}\!(H_i) , \mathcal{U}^\dag_{R_j}\!(H_j) \right\}\right\rangle_\rho - 8 \Tr\!\left[  \mathcal{U}^\dag_{R_i}\!(H_i) \sqrt{\rho} \ \mathcal{U}^\dag_{R_j}\!(H_j) \sqrt{\rho} \right].\label{eq:WY_app}
\end{equation}
We present how to estimate each term of the above expression separately in the following subsections.

\subsubsection{Estimation of the first term}\label{app:WY-first_term}
We begin by outlining the estimation of the first term in~\eqref{eq:WY_app}.  Consider the following:
\begin{align}
    4 \left\langle \left\{ \mathcal{U}^\dag_{R_i}\!(H_i) , \mathcal{U}^\dag_{R_j}\!(H_j) \right\}\right\rangle_\rho
    & = 4\Tr\!\left[ \left\{ \mathcal{U}^\dag_{R_i}\!(H_i) , \mathcal{U}^\dag_{R_j}\!(H_j) \right\}  \rho \right]\\
    & = 4\operatorname{Tr}\!\left[ \left\{
     U^\dagger_{R_i}  H_i U_{R_i} , U^\dagger_{R_j} H_j U^\dagger_{R_j} \right\} \rho \right] .\label{eq:WY_app-alt}
\end{align}
We are now in a position to present an algorithm to estimate the second term of~\eqref{eq:WY_app} using its equivalent form shown in~\eqref{eq:WY_app-alt}. This term closely resembles the second term in the expression of \cite[Theorem~4]{minervini2025eqbm}, and thus the estimation algorithm is analogous to the one presented in \cite[Appendix I.2]{minervini2025eqbm}. So here we provide just a high-level description of the algorithm.  At its core, the algorithm relies on a quantum circuit that estimates the expected value of the anticommutator of two operators (see \cite[Appendix B]{minervini2025eqbm}). The output of this circuit is $\frac{1}{2} \left\langle \left\{H,U\right\} \right\rangle_{\rho}$, where $H$ is Hermitian and $U$ is Hermitian and unitary. In this case, we choose $H = U^\dagger_{R_i}  H_i U_{R_i}$, and $U = U^\dagger_{R_j} H_j U^\dagger_{R_j}$. Accordingly, the quantum circuit that plays a role in estimating the quantity in~\eqref{eq:WY_app-alt} is depicted in Figure~\ref{fig:WY-1}. The algorithm involves running this circuit $N$ times, where $N$ is determined by the desired precision and error probability. The final estimation of the first term of~\eqref{eq:WY_app} is obtained by averaging the outputs of the $N$ runs and multiplying the result by 8.

\subsubsection{Estimation of the second term}\label{app:WY-second_term}

In order to estimate the second term of~\eqref{eq:WY_app}, we assume that one  has access to the canonical purification
$|\psi(\phi)\rangle$ of a parameterized quantum state, defined as
\begin{equation}
    |\psi(\phi)\rangle \coloneqq \left(  \sqrt{\rho(\phi)}\otimes
I\right)  |\Gamma\rangle.
\label{eq:canon-pure-QBM}
\end{equation}
Under this assumption, the following identity implies that one can estimate the second term of~\eqref{eq:WY_app} efficiently:
\begin{equation}
\operatorname{Tr}\!\left[ \mathcal{U}^\dagger_{R_i}(H_i) \sqrt{\rho(\phi)} \ \mathcal{U}^\dagger_{R_j}(H_j)  \sqrt{\rho(\phi)}\right] = 
\langle\psi(\phi)|\left(  \mathcal{U}^\dagger_{R_i}(H_i) \otimes\left[
\mathcal{U}^\dagger_{R_j}(H_j) \right]  ^{T}\right)  |\psi(\phi)\rangle.
\label{eq:identity-WY-canon-pur}
\end{equation}
The identity in~\eqref{eq:identity-WY-canon-pur} follows because
\begin{align}
&  \langle\psi(\phi)|\left(  \mathcal{U}^\dagger_{R_i}(H_i) \otimes\left[
\mathcal{U}^\dagger_{R_j}(H_j) \right]  ^{T}\right)  |\psi(\phi)\rangle\nonumber\\
&  =\langle\Gamma|\left(  \sqrt{\rho(\phi)} \ \mathcal{U}^\dagger_{R_i}(H_i) \sqrt{\rho(\phi)}\otimes\left[  \mathcal{U}^\dagger_{R_j}(H_j)\right]  ^{T}\right)  |\Gamma\rangle\\
&  =\langle\Gamma|\left(  \sqrt{\rho(\phi)} \ \mathcal{U}^\dagger_{R_i}(H_i) \sqrt{\rho(\phi)} \ \mathcal{U}^\dagger_{R_j}(H_j) \otimes I \right)
|\Gamma\rangle\\
&  =\operatorname{Tr}\!\left[  \sqrt{\rho(\phi)} \ \mathcal{U}^\dagger_{R_i}(H_i) \sqrt{\rho(\phi)} \ \mathcal{U}^\dagger_{R_j}(H_j) \right]  .
\end{align}
The second equality follows from the transpose trick \cite[Exercise~3.7.12]{Wbook17}.
Thus, in order to estimate the right-hand side of~\eqref{eq:identity-WY-canon-pur}, we need to be able to measure the expectation of the operator $\left[  \mathcal{U}^\dagger_{R_j}(H_j) \right]  ^{T}$. Consider that
\begin{align}
    \left[  \mathcal{U}^\dagger_{R_j}(H_j) \right]  ^{T} 
& = \left[ U^\dagger_{R_j} H_j U_{R_j} \right]^{T}\\
& = U_{R_j}^{T} H^{T} _j \left(U^\dagger_{R_j}\right)^{T} \\
& = U_{R_j}^{T} \overline{H}_j \  \overline{U}_{R_j}.\label{step:transpose}
\end{align}
In~\eqref{step:transpose}, we used the fact that the transpose of a Hermitian matrix is equal to its complex conjugate and the transpose of the adjoint of a matrix is equal to its complex conjugate.
Then, adopting the shorthand $\psi(\phi)\equiv|\psi(\phi)\rangle\!
\langle\psi(\phi)|$, applying the definition of $\mathcal{U}^\dagger_{R_j}$ in~\eqref{eq:q_channel_right}
and cyclicity of trace, consider that
\begin{align}
& \langle\psi(\phi)|\left(  \mathcal{U}^\dagger_{R_i}(H_i) \otimes\left[
\mathcal{U}^\dagger_{R_j}(H_j) \right]  ^{T}\right)  |\psi(\phi)\rangle \notag \\
& =\operatorname{Tr}\!\left[  \left(  \mathcal{U}^\dagger_{R_i}(H_i) \otimes\left[
\mathcal{U}^\dagger_{R_j}(H_j) \right]  ^{T}\right)
\psi(\phi)\right]  \\
& = \operatorname{Tr}\!\left[  \left(
H_{i}\otimes \overline{H}_{j} \right)  \mathcal{V}(\psi
(\phi))\right] ,
\end{align}
where $\mathcal{V}$ is the following unitary channel:
\begin{equation}
    \mathcal{V}(Y)\coloneqq \left( U_{R_i} \otimes \overline{U}_{R_j} \right)  Y \left( U^\dagger_{R_i} \otimes U_{R_j}^{T} \right) .
\end{equation}

Thus, a quantum algorithm for estimating the second term of~\eqref{eq:WY_app} consists of
repeating the following steps and averaging: prepare the canonical
purification $\psi(\phi)$ in~\eqref{eq:canon-pure-QBM}, apply the unitary channel $\mathcal{V}$ to $\psi(\phi)$, and measure the observable
$H_{i}\otimes \overline{H}_{j}$. The final estimation is obtained by multiplying the result by $-8$. The respective quantum circuit is shown in Figure~\ref{fig:WY-1}.

\section{Kubo--Mori information matrix elements}\label{app:KM}

\subsection{Proof of Theorem~\ref{thm:KM}}\label{app:KM-proof}
The derivation follows a similar approach to that of \cite[Theorem~19]{minervini2025eqbm}, but with some distinctions. Therefore, we provide a detailed proof below. Using the result from Lemma~\ref{lem:matr_part_der}, consider that
\begin{align}
    I^{\operatorname{KM}}_{ij}(\phi) 
    & = \sum_{k,\ell}\frac{\ln\lambda_{k}-\ln\lambda_{\ell}}{\lambda_{k} - \lambda_{\ell}} \langle k|\partial_{i}\rho(\phi)|\ell\rangle\langle \ell|\partial_{j}\rho(\phi)|k\rangle \\
    & = \sum_{k,\ell}\frac{\ln\lambda_{k}-\ln\lambda_{\ell}}{\lambda_{k} -\lambda_{\ell}} \left( i(\lambda_k - \lambda_\ell) \langle \tilde{k}| \mathcal{U}^\dag_{R_i}\!(H_i) |\tilde{\ell}\rangle \right) \left( i(\lambda_\ell - \lambda_k) \langle\tilde{\ell}| \mathcal{U}^\dag_{R_j}\!(H_j) |\tilde{k}\rangle \right)\\
    & = \sum_{k,\ell}\frac{\ln\lambda_{k}-\ln\lambda_{\ell}}{\lambda_{k} -\lambda_{\ell}} (\lambda_k - \lambda_\ell)^2 \langle \tilde{k}| \mathcal{U}^\dag_{R_i}\!(H_i) |\tilde{\ell}\rangle  \langle\tilde{\ell}| \mathcal{U}^\dag_{R_j}\!(H_j) |\tilde{k}\rangle \\
    & = \sum_{k,\ell} (\ln\lambda_{k}-\ln\lambda_{\ell}) (\lambda_k - \lambda_\ell) \langle \tilde{k}| \mathcal{U}^\dag_{R_i}\!(H_i) |\tilde{\ell}\rangle  \langle\tilde{\ell}| \mathcal{U}^\dag_{R_j}\!(H_j) |\tilde{k}\rangle \\
    \begin{split}& = \sum_{k,\ell}\lambda_{k}\ln\lambda_{k}\langle \tilde{k}| \mathcal{U}^\dag_{R_i}\!(H_i) |\tilde{\ell}\rangle  \langle\tilde{\ell}| \mathcal{U}^\dag_{R_j}\!(H_j) |\tilde{k}\rangle\\
    &  \qquad-\sum_{k,\ell}\lambda_{\ell}\ln\lambda_{k}\langle \tilde{k}| \mathcal{U}^\dag_{R_i}\!(H_i) |\tilde{\ell}\rangle  \langle\tilde{\ell}| \mathcal{U}^\dag_{R_j}\!(H_j) |\tilde{k}\rangle\\
    &  \qquad-\sum_{k,\ell}\lambda_{k}\ln\lambda_{\ell}\langle \tilde{k}| \mathcal{U}^\dag_{R_i}\!(H_i) |\tilde{\ell}\rangle  \langle\tilde{\ell}| \mathcal{U}^\dag_{R_j}\!(H_j) |\tilde{k}\rangle\\
    &  \qquad+\sum_{k,\ell}\lambda_{\ell}\ln\lambda_{\ell}\langle \tilde{k}| \mathcal{U}^\dag_{R_i}\!(H_i) |\tilde{\ell}\rangle  \langle\tilde{\ell}| \mathcal{U}^\dag_{R_j}\!(H_j) |\tilde{k}\rangle\end{split}\\
    \begin{split}& = \operatorname{Tr}\!\left[ \mathcal{U}^\dag_{R_j}\!(H_j)  \rho \ln\rho \ \mathcal{U}^\dag_{R_i}\!(H_i)\right] - 
    \operatorname{Tr}\!\left[ \ln\rho \ \mathcal{U}^\dag_{R_i}\!(H_i) \rho \   \mathcal{U}^\dag_{R_j}\!(H_j)\right] \\
    &  \qquad - \operatorname{Tr}\!\left[ \rho \ \mathcal{U}^\dag_{R_i}\!(H_i) \ln\rho \  \mathcal{U}^\dag_{R_j}\!(H_j) \right]
    + \operatorname{Tr}\!\left[  \mathcal{U}^\dag_{R_i}\!(H_i) \rho \ln\rho \ \mathcal{U}^\dag_{R_j}\!(H_j) \right]\end{split}\\
    \begin{split}& = - \operatorname{Tr}\!\left[ \mathcal{U}^\dag_{R_j}\!(H_j)  \rho  G \ \mathcal{U}^\dag_{R_i}\!(H_i)\right] + 
    \operatorname{Tr}\!\left[ G \ \mathcal{U}^\dag_{R_i}\!(H_i) \rho \  \mathcal{U}^\dag_{R_j}\!(H_j)\right] \\
    &  \qquad + \operatorname{Tr}\!\left
    [ \rho \ \mathcal{U}^\dag_{R_i}\!(H_i) G \  \mathcal{U}^\dag_{R_j}\!(H_j)\right]
    - \operatorname{Tr}\!\left[  \mathcal{U}^\dag_{R_i}\!(H_i) \rho  G \ \mathcal{U}^\dag_{R_j}\!(H_j) \right]\end{split}\\
    \begin{split}& = - \operatorname{Tr}\!\left[ \mathcal{U}^\dag_{R_i}\!(H_i) \mathcal{U}^\dag_{R_j}\!(H_j) G  \rho\right] + 
    \operatorname{Tr}\!\left[ \mathcal{U}^\dag_{R_j}\!(H_j) G \ \mathcal{U}^\dag_{R_i}\!(H_i) \rho\right] \\
    &  \qquad + \operatorname{Tr}\!\left[ \mathcal{U}^\dag_{R_i}\!(H_i) G \  \mathcal{U}^\dag_{R_j}\!(H_j) \rho \right]
    - \operatorname{Tr}\!\left[ G \ \mathcal{U}^\dag_{R_j}\!(H_j)  \mathcal{U}^\dag_{R_i}\!(H_i) \rho \right]\end{split}\\
    \begin{split}& = \operatorname{Tr}\!\left[ \mathcal{U}^\dag_{R_j}\!(H_j) G \ \mathcal{U}^\dag_{R_i}\!(H_i) \rho\right] 
    - \operatorname{Tr}\!\left[ G \ \mathcal{U}^\dag_{R_j}\!(H_j) \mathcal{U}^\dag_{R_i}\!(H_i) \rho \right] \\
    &  \qquad - \operatorname{Tr}\!\left[ \mathcal{U}^\dag_{R_i}\!(H_i) \mathcal{U}^\dag_{R_j}\!(H_j) G \rho\right] +
    \operatorname{Tr}\!\left[ \mathcal{U}^\dag_{R_i}\!(H_i) G \ \mathcal{U}^\dag_{R_j}\!(H_j) \rho \right] \end{split}\\  
    & = \operatorname{Tr}\!\bigg[ \left[ \left[ \mathcal{U}^\dag_{R_j}\!(H_j) , G \right] , \mathcal{U}^\dag_{R_i}\!(H_i) \right] \rho \bigg] \\
    & = \left\langle \left[ \left[ \mathcal{U}^\dag_{R_j}\!(H_j) , G \right] \! , \mathcal{U}^\dag_{R_i}\!(H_i) \right] \right\rangle_\rho.
\end{align}
This concludes the proof of Theorem~\ref{thm:KM}.

\subsection{Quantum algorithm for  estimating the Kubo--Mori information matrix elements}\label{app:KM_algo}

Let us first recall from the statement of Theorem~\ref{thm:KM} the expression for the $(i, j)$-th element of the Kubo--Mori information matrix $I_{ij}^{\operatorname{KM}}(\phi)$:
\begin{equation}
    I^{\operatorname{KM}}_{ij}(\phi) = \left\langle \left[ \left[ \mathcal{U}^\dag_{R_j}\!(H_j) , G \right] \!, \mathcal{U}^\dag_{R_i}\!(H_i) \right] \right\rangle_\rho.\label{eq:KM_app}
\end{equation}
Consider the following:
\begin{align}
    \left\langle \left[ \left[ \mathcal{U}^\dag_{R_j}\!(H_j) , G \right] \!, \mathcal{U}^\dag_{R_i}\!(H_i) \right] \right\rangle_\rho 
    & = \Tr\! \bigg[ \!\left[ \left[ \mathcal{U}^\dag_{R_j}\!(H_j) , G \right] \!, \mathcal{U}^\dag_{R_i}\!(H_i) \right] \rho \bigg]\\
    & = \Tr\!\bigg[ \left[\left[ U^\dagger_{R_j} H_j U_{R_j},G\right] , U^\dagger_{R_i} H_i U_{R_j} \right] \rho\bigg].\label{eq:KM_app-alt}
\end{align}

We are now in a position to present an algorithm to estimate the second term of~\eqref{eq:KM_app} using its equivalent form shown in~\eqref{eq:KM_app-alt}. This term closely resembles the second term in the expression of \cite[Theorem~17]{minervini2025eqbm}, and thus the estimation algorithm is analogous to the one presented in \cite[Appendix I.3]{minervini2025eqbm}. So here we provide just a high-level description of the algorithm.  At its core, the algorithm relies on a quantum circuit that estimates the expected value of nested commutators of three operators, $\frac{1}{4} \left\langle \big[ \left[U_1 , H \right], U_0\big]\right\rangle_\rho$, where $H$ is Hermitian, and $U_0$ and $U_1$ are both Hermitian and unitary (refer to \cite[Appendix B]{minervini2025eqbm}). In this case, we choose $U_1 = U^\dagger_{R_j} H_j U_{R_j}$, $H = G$, and $U_0 = U^\dagger_{R_i} H_i U_{R_j}$. Accordingly, the quantum circuit that plays a role in estimating the quantity in~\eqref{eq:KM_app-alt} is depicted in Figure~\ref{fig:KM}. The algorithm involves running this circuit $N$ times, where $N$ is determined by the desired precision and error probability. The final estimation of the quantity~\eqref{eq:WY_app} is obtained by averaging the outputs of the $N$ runs and multiplying the result by $4 \left \| \alpha \right\|_1$, where for measuring $G$, we again adopt a sampling approach (see Remark~\ref{rem:measuring-G-phi}).

\end{document}